\newif\ifDRAFT 
\DRAFTfalse

\newif\ifPODCSUB
\PODCSUBfalse

\ifPODCSUB
    \documentclass[acmsmall,nonacm,anonymous]{acmart}

    \AtBeginDocument{%
        }

    \setcopyright{acmcopyright}
    \copyrightyear{2024}
    \acmYear{2024}
    \acmDOI{XXXXXXX.XXXXXXX}

    \acmConference[PODC 2024]{ACM Symposium on Principles of Distributed Computing}{June 17--21,
      2024}{Nantes, France}
    \acmPrice{15.00}
    \acmISBN{978-1-4503-XXXX-X/18/06}

\else
    \documentclass[11pt]{article}
    \usepackage{amsfonts, amsmath, amssymb, amsthm}
    \usepackage[margin=1in]{geometry}
\fi

\usepackage{graphicx} 
\usepackage{mdframed}
\usepackage{mathrsfs}  
\usepackage{algorithm}
\usepackage[noend]{algpseudocode}
\algrenewcommand\algorithmicrequire{\textbf{Input:}}
\algrenewcommand\algorithmicensure{\textbf{Output:}}
\usepackage{hyperref}
\usepackage{color}
\usepackage{array}
\usepackage{cleveref}
\usepackage[inkscapeformat=pdf]{svg}
\usepackage{multirow}
\usepackage{mathtools}

\usepackage{tabulary} 
\theoremstyle{plain}
\newtheorem{theorem}{Theorem}[section]
\newtheorem{lemma}[theorem]{Lemma}
\newtheorem{corollary}[theorem]{Corollary}
\newtheorem{observation}[theorem]{Observation}

\theoremstyle{definition}
\newtheorem{definition}[theorem]{Definition}
\newtheorem*{remark}{Remark}
\crefname{equation}{Eqn.}{Eqns.}

\ifPODCSUB
\else
    \ifDRAFT
        \usepackage{lineno}
        \linenumbers
    \fi
\fi

\newcommand{\highlight}[1]{\textit{\textbf{#1}}}

\newcommand{\paren}[1]{\left( #1 \right)}
\newcommand{\ang}[1]{\left< #1 \right>}

\newcommand{\ceil}[1]{\lceil #1 \rceil}
\newcommand{\floor}[1]{\lfloor #1 \rfloor}
\newcommand{\sparen}[1]{\left[ #1 \right]}

\newcommand{\bydef}{\stackrel{\operatorname{def}}{=}}

\newcommand{\poly}{\operatorname{poly}}
\newcommand{\polylog}{\operatorname{polylog}}
\newcommand{\dist}{\operatorname{dist}}
\newcommand{\diam}{\operatorname{diam}}

\newcommand{\INDEX}{\textsc{Index}}
\newcommand{\bp}{\mathsf{bp}}
\newcommand{\id}{\mathsf{id}}
\newcommand{\cid}{\mathsf{cid}}

\newcommand{\pre}{\mathsf{pre}}
\newcommand{\post}{\mathsf{post}}
\newcommand{\eft}{\mathsf{E\text{-}faults}}

\newcommand{\UP}{\mathsf{UP}}
\newcommand{\NEXT}{\mathsf{NEXT}}

\newcommand{\FirstRecEdge}{\mathsf{FirstRecEdge}}


\usepackage[normalem]{ulem} 

\title{Connectivity Labeling in Faulty Colored Graphs}

\ifDRAFT
    \author{Anonymous Authors}
\else
    \author{
     Asaf Petruschka\thanks{Supported by the European Research Council (ERC) under the European Union’s Horizon 2020 research and innovation programme, grant agreement No. 949083, and by the Israeli Science Foundation (ISF), grant 2084/18.}
     \qquad
     Shay Sapir\thanks{This research was partially supported by the Israeli Council for Higher Education (CHE) via the Weizmann Data Science Research Center.}
     \qquad
     Elad Tzalik\\
     Weizmann Institute of Science
     \\ \texttt{\{asaf.petruschka,shay.sapir,elad.tzalik\}@weizmann.ac.il}}
\fi
\date{}

\begin{document}

\ifPODCSUB

\begin{abstract}

Fault-tolerant connectivity labelings are schemes that, given an $n$-vertex graph $G=(V,E)$ and a parameter $f$, produce succinct yet informative labels for the elements of the graph. Given \emph{only the labels} of two vertices $u,v$ and of the elements in a faulty-set $F$ with $|F|\leq f$, one can determine if $u,v$ are connected in $G-F$, the surviving graph after removing $F$. For the edge or vertex faults models, i.e., $F\subseteq E$ or $F \subseteq V$, a sequence of recent work established schemes with $\poly(f,\log n)$-bit labels for general graphs. This paper considers the \emph{color faults} model, recently introduced in the context of spanners [Petruschka, Sapir and Tzalik,~ITCS~'24], which accounts for known correlations between failures. Here, the edges (or vertices) of the input $G$ are arbitrarily colored, and the faulty elements in $F$ are colors; a failing color causes all edges (vertices) of that color to crash. While treating color faults by na\"{i}vly applying solutions for many failing edges or vertices is inefficient, the known correlations could potentially be exploited to provide better solutions.

Our main contribution is settling the label length complexity for connectivity under one color fault ($f=1$). The existing implicit solution, by black-box application of the state-of-the-art scheme for edge faults of [Dory and Parter, PODC '21], might yield labels of $\Omega(n)$ bits. We provide a deterministic scheme with labels of $\tilde{O}(\sqrt{n})$ bits in the worst case, and a matching lower bound. Moreover, our scheme is \emph{universally optimal}: even schemes tailored to handle only colorings of one specific graph topology (i.e., may store the topology ``for free'') cannot produce asymptotically smaller labels. We characterize the optimal length by a new graph parameter $\mathsf{bp}(G)$ called the \emph{ball packing number}. We further extend our labeling approach to yield a routing scheme avoiding a single forbidden color, with routing tables of size $\tilde{O}(\mathsf{bp}(G))$ bits. We also consider the \emph{centralized} setting, and show an $\tilde{O}(n)$-space oracle, answering connectivity queries under one color fault in $\tilde{O}(1)$ time. Curiously, by our results, no oracle with such space can be \emph{evenly} distributed as labels.

Turning to $f\geq 2$ color faults, we give a randomized labeling scheme with $\tilde{O}(n^{1-1/2^f})$-bit labels, along with a lower bound of $\Omega(n^{1-1/(f+1)})$ bits. For $f=2$, we make partial improvement by providing labels of $\tilde{O}(\mathrm{diam}(G)\sqrt{n})$ bits, and show that this scheme is (nearly) optimal when $\mathrm{diam}(G)=\tilde{O}(1)$.

Additionally, we present a general reduction from the above \emph{all-pairs} formulation of fault-tolerant connectivity labeling (in any fault model) to the \emph{single-source} variant, which could also be applicable for centralized oracles, streaming, or dynamic algorithms.

\end{abstract}

\begin{CCSXML}
<ccs2012>
   <concept>
       <concept_id>10003752.10003809.10003635</concept_id>
       <concept_desc>Theory of computation~Graph algorithms analysis</concept_desc>
       <concept_significance>500</concept_significance>
       </concept>
   <concept>
       <concept_id>10003752.10003809.10010031</concept_id>
       <concept_desc>Theory of computation~Data structures design and analysis</concept_desc>
       <concept_significance>500</concept_significance>
       </concept>
 </ccs2012>
\end{CCSXML}

\ccsdesc[500]{Theory of computation~Graph algorithms analysis}
\ccsdesc[500]{Theory of computation~Data structures design and analysis}

\keywords{labeling schemes, fault-tolerance}
    
    \maketitle
\else
    \maketitle
    \pagenumbering{gobble}
    
    \newpage
    \tableofcontents
    \newpage
    \pagenumbering{arabic}
\fi

\section{Introduction}

Labeling schemes are important distributed graph data structures with diverse applications in graph algorithms and distributed computing, concerned with assigning the vertices of a graph (and possibly also other elements, such as edges) with succinct yet informative \emph{labels}. 
Many real-life networks are often error-prone by nature, which motivates the study of fault-tolerant graph structures and services.
In a fault-tolerant connectivity labeling scheme, we are 
given an $n$-vertex graph $G=(V,E)$ and an integer $f$,
and should assign short labels to the elements of $G$, such that the following holds:
For every pair of vertices $u,v \in V$ and faulty-set $F$ 
with $|F|\leq f$,
one can determine if $u$ and $v$ are connected in $G-F$ by merely inspecting the labels of the elements in $\{u,v\}\cup F$.
The main complexity measure is the maximal \emph{label length} (in bits), while construction and query time are secondary measures.

The concept of edge/vertex-fault-tolerant labeling, aka \emph{forbidden set} labeling, was explicitly introduced by Courcelle and Twigg~\cite{CourcelleT07}.
Earlier work on fault-tolerant connectivity and distance labeling focused on graph families such as planar graphs and graphs with bounded treewidth or doubling dimension~\cite{CourcelleT07,CourcelleGKT08,AbrahamCG12,AbrahamCGP16}.
Up until recently, 
designing edge- or vertex-fault-tolerant connectivity labels for general graphs remained fairly open.
Dory and Parter~\cite{DoryP21} were the first to construct \emph{randomized} labeling schemes for connectivity under $f$ \emph{edge} faults,
where a query is answered correctly with high probability,%
\footnote{
    Throughout, the term with high probability~(w.h.p.)~stands for probability at least $1 - 1/n^{\alpha}$, where $\alpha>0$ is a constant that can be made arbitrarily large through increasing the relevant complexity measure by a constant factor.
}
with length of $O(\min \{f+\log n,\log^3 n\})$ bits.
Izumi, Emek, Wadayama and Masuzawa~\cite{IzumiEWM23} 
derandomized this construction, showing \emph{deterministic} labels of $\tilde{O}(f^2)$ bits.%
\footnote{
    Throughout, the $\tilde{O}(\cdot)$ notations hides $\polylog(fn)$ factors.
}
Turning to $f$ \emph{vertex} faults,
Parter and Petruschka~\cite{ParterP22a} designed connectivity labels for $f\leq 2$ with $\tilde{O}(1)$ bits.
Very recently, Parter, Petruschka and Pettie~\cite{ParterPP23} provided 
a randomized scheme for $f$ vertex faults with $\tilde{O}(f^3)$ bits and a derandomized version with $\tilde{O}(f^7)$ bits, along with a lower bound of $\Omega(f)$ bits.

In this work, we consider labeling schemes for connectivity under \emph{color faults}, 
a model that was very recently introduced in the context of graph spanners \cite{PST24}, which intuitively accounts for known correlations between failures.
In this model, the edges or vertices of the input graph $G$ are arbitrarily partitioned into classes, or equivalently, associated with \emph{colors}, and a set $F$ of $f$ such color classes might fail.
A failing color causes all edges (vertices) of that
color to crash.
The survivable subgraph $G-F$ is formed by deleting every edge%
\footnote{In the edge-colored case, it is natural to consider multi-graphs, where parallel edges may have different colors.}
or vertex with color from $F$.
The scheme must assign labels to the vertices \emph{and to the colors} of $G$, so that a connectivity query $\ang{u,v,F}$ can be answered by inspecting only the labels of the vertices $u,v$ and of the colors in $F$.

This new notion generalizes edge/vertex fault-tolerant schemes, that are obtained in the special case when each edge or vertex has a unique color.
However, in the general case, even a single color fault may correspond to many and arbitrarily spread edge/vertex faults, which poses a major challenge.
Tackling this issue by naively applying the existing solutions for many individual edge/vertex faults (i.e., by letting the label of a color store all labels given to elements in its class) may result in very large labels of $\Omega(n)$ bits or more, even when $f=1$.
On a high level, this work shows that the correlation between the faulty edges/vertices, predetermined by the colors, can be used to construct much better solutions.

\paragraph{Related Work on Colored Graphs.}
Faulty colored classes have been used to model
Shared Risk Resource Groups (SRRG) in optical telecommunication networks, multi-layered networks, and various other practical contexts; see~\cite{CoudertDPRV07,Kuipers12,ZPT11} and the references therein.
Previous work mainly focused on centralized algorithms for colored variants of classical graph problems (and their hardness).
A notable such problem is diverse routing, where the goal is to find two (or more) color disjoint paths between two vertices \cite{Hu03,EGBERRL03,MENTA02}.
Another is the colored variant of minimum cut, known also as the \emph{hedge connectivity}, where the objective is to determine the minimum number of colors (aka hedges) whose removal disconnects the graph; see e.g.\ ~\cite{GhaffariKP17,XuS20,FoxPZ23}.

A different line of work focuses on distances to or between color classes, and specifically on (centralized) data structures that, given a query $\ang{v, c}$, report the closest $c$-colored vertex to $v$ in the graph, or the (approximate) distance from it~\cite{HermelinLWY11,Chechik12,LaishM17,GawrychowskiLMW18,Tsur18,EvaldFW21}.

\subsection{Our Results}

We initiate the study of fault-tolerant labeling schemes in colored graphs.
All of our results apply both to edge-colored and to vertex-colored (multi)-graphs.

\subsubsection{Single Color Fault ($f=1$)}

For $f = 1$, i.e., a single faulty color, we (nearly) settle the complexity of the problem, by showing a simple construction of labels with length $O(\sqrt{n} \log n)$ bits, along with a matching lower bound of $\Omega(\sqrt{n})$ bits.
In fact, our scheme provides a strong beyond worst-case guarantee:
for every given graph $G$, the length of the assigned labels 
is (nearly) the best possible, even compared to schemes that are tailor-made to only handle colorings of the topology in $G$,%
\footnote{
    By the \emph{topology} of $G$, we mean the uncolored graph obtained from $G$ by ignoring the colors.
    Slightly abusing notation, we refer to this object as \emph{the graph topology} $G$, rather than the (colored) graph $G$.
}
or equivalently, are allowed to store the uncolored topology ``for free'' in all the labels.
Guarantees of this form, known as \emph{universal optimality}, have sparked major interest in the graph algorithms community, and particularly in recent years, following the influential work of Haeupler, Wajc and Zuzic~\cite{HaeuplerWZ21} in the distributed setting.%
\footnote{One cannot compete with a scheme that is optimal for the given graph \emph{and} its coloring (aka ``instance optimal''), as such a tailor-made scheme may store the entire colored graph ``for free'', and the labels merely need to specify the query.}
On an intuitive level, the universal optimality implies that even when restricting attention to any class of graphs, e.g.\ planar graphs, our scheme performs asymptotically as well as the optimal scheme for this specific class.

Our universally optimal labels are based on a new graph parameter called the \emph{ball packing number}, denoted by $\bp(G)$.
Disregarding minor nuances, $\bp(G)$ is the maximum integer $r$ such that one can fit $r$ disjoint balls of radius $r$ in the topology of $G$ (see formal definition in \Cref{par:ball_packing}). 
The ball packing number of an $n$-vertex graph is always at most $\sqrt{n}$, but often much smaller. For example, 
$\bp(G)$ is smaller than the \emph{diameter} of $G$.
In \Cref{sect:single-fault}, we show the following:

\begin{theorem}[$f=1$, informal]\label{thm:informal_one_fault}
There is a connectivity labeling scheme for one color fault, that for every $n$-vertex graph $G$,
assigns $O(\bp(G)\log n)$-bit labels. 
Moreover, $\Omega(\bp(G))$-bit labels are necessary, even for labeling schemes tailor-made for the topology of $G$,
i.e., where the uncolored topology is given in addition to the query labels.
\end{theorem}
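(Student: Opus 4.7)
The plan is to establish the upper and lower bounds separately, both crucially exploiting the ball packing structure of $G$.

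For the upper bound, I would construct a labeling scheme whose vertex labels encode, for each ``relevant'' failing color $c$, a succinct identifier of the connected component of $G-c$ containing that vertex; color labels analogously store the canonical component identifiers needed to reconcile queries. The construction would be tree-based: fix a BFS tree $T$ rooted at a carefully chosen vertex $r$, and for each vertex $v$, the label records its position in $T$ together with per-color repair information (e.g., a non-tree bridge edge re-attaching $v$'s subtree of $T-c$ to a nearer component, or a marker of disconnection from $r$ in $G-c$). A query $\langle u,v,c\rangle$ is answered by checking whether the component identifier extracted from the labels of $u$ and $c$ matches that from $v$ and $c$.

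The main technical step, and the chief obstacle, is bounding the information stored per label by $O(\bp(G))$ items of $O(\log n)$ bits each. I plan to prove this by a packing argument: were $v$'s label to require distinct repair certificates for $k$ colors, the ``affected'' subgraph pieces associated with these colors would host a family of $k$ pairwise-disjoint balls of radius $\Omega(k)$, contradicting the definition of $\bp(G)$ once $k>\bp(G)$. The precise argument will need to localize each color's contribution to a region determined by the tree structure and the non-tree bridges witnessed, and the choice of the BFS root (perhaps an approximate $1$-center) may need to be tuned to make this localization possible.

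For the matching lower bound, I would use a communication-complexity reduction from \textsc{Index}. Given the topology of $G$ with $\bp(G)=b$, fix $b$ vertex-disjoint balls $B_1,\ldots,B_b$ of radius $b$. I plan to design a family of $2^b$ colorings of $G$, indexed by bit strings $x\in\{0,1\}^b$, so that the coloring inside $B_i$ encodes $x_i$ in a manner that solely affects the connectivity of a fixed query pair $(u^*,v^*)$ under a distinguished color $c_i^*$ associated to ball $i$. Since the topology is available for free to a universally-optimal scheme, the labels alone must disambiguate the $2^b$ colorings; the query $\langle u^*,v^*,c_i^*\rangle$ then recovers $x_i$, forcing some label (of $v^*$ or of one of the $c_i^*$'s) to have length $\Omega(b)=\Omega(\bp(G))$. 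The main subtlety will be designing the gadgets so that the modification planted inside $B_i$ leaves the connectivity query under $c_j^*$ ($j\neq i$) untouched; this should follow from the $B_i$'s being disjoint and of radius $\bp(G)$, which allows each gadget's effect to be entirely contained within its home ball.
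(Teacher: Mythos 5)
Both halves of your proposal contain genuine gaps; the paper's proof uses different key ideas at the same two spots where your sketch is vague.

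For the upper bound, the single-BFS-tree approach does not localize to $O(\bp(G))$ items per label. If $G$ is a path on $n$ vertices, then $\bp(G) = \Theta(\sqrt{n})$, but the tree path from any endpoint to the (optimally chosen) root has length $\Theta(n)$. Every color on that path is ``relevant'' to the endpoint in your sense, so the per-vertex repair information is $\Theta(n)$ items, not $O(\bp(G))$. Your packing argument as stated cannot rescue this: the colors on a long tree path need not correspond to disjoint balls of large radius. The missing idea is to replace the single root by a \emph{ruling set} $A$ of $O(\bp(G))$ vertices chosen so that every vertex has a path of length $O(\bp(G))$ to $A$ (the iterative construction picks $a_i$ at distance exactly $i$ from the already-chosen set, and stops after $O(\bp(G))$ steps by a ball-packing argument). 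Then the vertex label of $v$ stores $\cid(v, G-c)$ only for the $O(\bp(G))$ colors on $v$'s short path $P(v)$ to $A$, and the color label of $c$ stores $\cid(a, G-c)$ for all $a\in A$. When $c\notin P(v)$, the path $P(v)$ survives and $\cid(v,G-c)=\cid(a(v),G-c)$, read from the color label. This yields the $O(\bp(G)\log n)$ bound directly and symmetrically for vertex and color labels, which a rooted BFS tree does not.

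For the lower bound, encoding $b=\bp(G)$ bits across $b$ balls only bounds the \emph{total} length of the $b+2$ labels Alice sends by $\Omega(b)$, hence says nothing nontrivial about the maximum label length. To force a single label to be $\Omega(\bp(G))$ bits, you need to encode $\Theta(\bp(G)^2)$ bits while sending only $O(\bp(G))$ labels. The paper does this by exploiting the \emph{radius}, not just the number, of the packed balls: with $r=\bp(G)$ disjoint proper $r$-balls $B(v_k,r)$ and a palette of $r$ colors, bit $x_{kr+l}$ is encoded by whether the edges between layers $l$ and $l+1$ of ball $k$ get color $l$ or the null color. Alice then sends only $3r$ labels (of the $r$ centers $v_k$, $r$ witnesses $u_k$ at distance $r$, and the $r$ colors), and Bob's query $\langle v_{k^*},u_{k^*},l^*\rangle$ recovers $x_{k^*r+l^*}$. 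The $\INDEX$ bound then gives $3r\cdot b = \Omega(r^2)$, so $b=\Omega(\bp(G))$. Your sketch of ``containing each gadget within its home ball'' is the right instinct for correctness of the encoding, but you additionally need the two-dimensional (ball $\times$ layer) indexing to make the counting work; that is the crux you are missing.
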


The lower bound in \Cref{thm:informal_one_fault} is 
information-theoretic, obtained via communication complexity.
On a high level, we argue that one can encode $\bp(G)^2$ arbitrary bits by taking the following two steps:
(1) coloring the balls that certify the ball packing number  using $\bp(G)$ many colors, 
and (2) storing the labels of those colors and of additional $O(\bp(G))$ vertices.
The upper bound is based on observing that (when $G$ is connected) there is a subset $A$ of $O(\bp(G))$ vertices which is \emph{$O(\bp(G))$-ruling}: every vertex in $G$ has a path to $A$ of length $O(\bp(G))$.
Intuitively, this enables the label of a color $c$ to 
store the identifiers of the connected components of each $a\in A$ in $G-c$,
while the label of a vertex $v$ only stores 
the identifiers of
the connected components of $v$ 
with respect to faults of colors on its path to $A$.

Additionally, we consider the closely related problem of \emph{routing} messages over $G$ while avoiding any one forbidden color.
We extend our labeling approach, and show:

\begin{theorem}[Forbidden color routing, informal]
    There is a routing scheme for avoiding one forbidden color in a colored $n$-vertex graph $G$,
    with routing tables and labels of size $O(\bp(G)\log n)$ bits, and message header of size $O(\log n)$ bits.
\end{theorem}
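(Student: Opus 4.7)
My plan is to lift the labeling construction underlying Theorem~\ref{thm:informal_one_fault} into a routing scheme, reusing the $O(\bp(G))$-sized ruling set $A$ and the short path $P_v$ of length $O(\bp(G))$ from each vertex $v$ to its anchor $a(v)\in A$. Routing a message from $u$ to $v$ while avoiding color $c$ would proceed in two phases: first route the message to some anchor $\tilde a\in A$ that is still connected to $v$ in $G-c$, and then route from $\tilde a$ down to $v$ inside $G-c$. The ruling property of $A$, together with $|A|=O(\bp(G))$, guarantees that both sub-paths have length $O(\bp(G))$, which is what keeps the auxiliary data small.

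For each anchor $a\in A$, I would precompute a BFS spanning tree $T_a$ of $G$ rooted at $a$, and, for each color $c$, a BFS spanning tree $T_{a,c}$ of the component of $a$ in $G-c$. Each such tree would be equipped with a standard compact tree-routing scheme (e.g., Thorup--Zwick), giving $O(\log n)$-bit port labels that let any vertex forward toward any destination known by its label. The label of color $c$ stores, for each of the $O(\bp(G))$ anchors $a\in A$, a canonical representative $\rho(a,c)\in A$ of its component in $G-c$ together with a constant amount of steering data --- this is the $\tilde a$ used above when $a=a(v)$ --- for a total of $O(\bp(G)\log n)$ bits. The label of vertex $v$ holds $a(v)$ and its tree-routing labels inside $T_{a(v)}$ and inside the trees $T_{a(v),c}$ for those colors $c$ lying on $P_v$, again $O(\bp(G)\log n)$ bits.

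The routing table kept at each vertex $w$ consists of exactly the local port information $w$ needs as an intermediate router in the trees it belongs to; I would argue that, because a routing path only uses the tree $T_{a(v)}$ or a single replacement tree $T_{\rho(a(v),c),c}$, the number of tree-identities $w$ must store is $O(\bp(G))$, each contributing $O(\log n)$ bits. The header carried by a message has only the destination label entry for the \emph{currently active} tree plus the pointer extracted from the label of $c$, which can be compacted to $O(\log n)$ bits: the source selects, from $v$'s label, either the $T_{a(v)}$-label (in the fault-free phase) or the $T_{\rho(a(v),c),c}$-label (after the phase switch), and writes this single entry into the header; the active phase advances deterministically along the route.

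The main obstacle will be a clean phase-switch rule: when the message, traveling toward $a(v)$, first hits a vertex or edge of color $c$, the next-hop vertex must deterministically identify the replacement tree $T_{\rho(a(v),c),c}$ and continue without looping or leaking out of $G-c$. My plan is to handle this by fixing, at construction time, a canonical redirection path for every triple (anchor $a$, destination $v$, forbidden color $c$), chosen so that its vertices hold the needed port info in their $O(\bp(G)\log n)$-bit tables. Correctness then reduces to the same component-ID identities that drive Theorem~\ref{thm:informal_one_fault}: the representative $\rho(a(v),c)$ lies in $v$'s component in $G-c$ exactly when $v$'s component label matches $c$'s stored pointer, which is precisely the connectivity test already provided by the labeling scheme.
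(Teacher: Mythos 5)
Your tree architecture is genuinely different from the paper's: you propose a family of BFS trees $T_a$ (one per anchor) and replacement trees $T_{a,c}$ (one per anchor--color pair), whereas the paper fixes a \emph{single} spanning tree $T$ built so that every $P(v)$ is a tree path, plus one recovery tree $T_c$ per color obtained by reconnecting the fragments of $T-c$. Unfortunately, the two places where you flag open issues are exactly where the scheme is non-trivial, and your framework makes them harder, not easier.

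First, the routing-table bound does not go through as claimed. You argue that each vertex $w$ only needs $O(\bp(G))$ tree identities ``because a routing path only uses $T_{a(v)}$ or a single replacement tree.'' But $w$ does not know at preprocessing time which $(a,c)$ pairs will be relevant: it must be prepared to serve as an intermediate router in \emph{any} tree $T_{a,c}$ that some future route could pass through, and that set is not bounded by $O(\bp(G))$ --- it can be of order $|A|\cdot C$, where $C$ is the number of colors. The paper avoids this entirely: a vertex carries a fixed $O(\log n)$-bit table $R_T(v)$ for the one tree $T$, the recovery-edge pointers $\FirstRecEdge(v,a,c(v))$ only for the single color $c(v)$ of its parent edge (cost $O(|A|\log n)$), and recovery-tree tables $R_{T_c}(v)$ only for colors $c\in P(v)$ (cost $O(|P(v)|\log n)$). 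The argument that these restricted tables suffice is exactly the $B$-fragment lemma: any vertex that must route inside $T_c$ in the second phase lies in a $B$-fragment, hence has $c$ on its $P(v)$ path. Your scheme has no analogue of this structural guarantee, and BFS trees do not provide one.

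Second, the ``phase-switch rule'' you defer is the heart of the problem, and your proposed fix --- ``fixing, at construction time, a canonical redirection path for every triple (anchor $a$, destination $v$, forbidden color $c$)'' --- would require storing information indexed by triples, which vastly exceeds $O(\bp(G)\log n)$ bits. The paper's solution is to exploit the fragment decomposition of $T-c$: since every fragment root other than the global root $r$ is a vertex whose parent edge has the forbidden color $c$, the recovery-edge pointer for the \emph{current} fragment is always findable at that fragment root (and for $r$ it is supplied from the color label through the header). The message then walks fragment-to-fragment using only $T$-routing plus these $O(\log n)$-bit pointers. Without a comparable structural hook in your family-of-trees setup, the phase switch is not resolved and the size bound is not established.
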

See \Cref{sec:routing} for formal definitions, and a detailed discussion of our routing scheme.
We note that our routing scheme does not provide good distance (aka stretch) guarantees for the routing path, and optimizing it is an interesting direction for future work.

\medskip

We end our discussion for $f=1$ by considering \emph{centralized oracles} (data structures) for connectivity under a single color fault.
In this setting, one can utilize centralization to 
improve on the naive approach of
storing all labels.
We note that this problem can be solved using existing $O(n)$-space and $O(\log \log n)$-query time oracles for \emph{nearest colored ancestor} on trees \cite{MuthukrishnanM96,GawrychowskiLMW18}, yielding the same bounds for single color fault connectivity oracles.
Interestingly, our lower bound shows that oracles with such space cannot be evenly distributed into labels.
We are unaware of similar `gap' phenomena between the distributed and centralized variants of graph data structures.

\subsubsection{$f$ Color Faults}\label{subsec:intro_f_colors}

It has been widely noted that in fault-tolerant settings, handling even two faults may be significantly more challenging than handling a single fault.
Such phenomena appeared, e.g., in distance oracles \cite{DuanP09a}, min-cut oracles \cite{BaswanaBP22}, reachability oracles \cite{Choudhary16} and distance preservers \cite{Parter15,GuptaK17,Parter20}.
In our case, this is manifested in generalized upper and lower bounds on the label length required to support $f$ color faults, exhibiting a gap when $f\geq 2$; our upper bound is roughly $\tilde{O}(n^{1-1/2^f})$ bits, while the lower bound is $\Omega(n^{1-1/(f+1)})$ bits (both equal $\tilde{\Theta}(\sqrt{n})$ when $f=1$).

\begin{theorem}[$f \geq 2$ upper bound, informal]\label{thm:f-upper-bound-informal}
    There is a randomized labeling scheme for connectivity under $f$ color faults with label length of $\min\{fn^{1-1/2^f}, n\} \cdot \polylog(fn)$ bits.
\end{theorem}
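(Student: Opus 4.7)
The plan is to establish the theorem inductively on $f$, aiming for the recursion
\[
L_f(n) \;=\; \tilde{O}\!\left(\sqrt{n\cdot L_{f-1}(n)}\right),
\]
which, together with the base case $L_0(n)=O(\log n)$ (each vertex just stores its component identifier in $G$), unfolds to $L_f(n)=\tilde{O}(n^{1-1/2^f})$; the linear $f$-factor in the final bound absorbs the $f$ layers of $\polylog$ overhead compounding through the recursion. The trivial $\tilde{O}(n)$ bound, which dominates once $f$ is large enough that $fn^{1-1/2^f}\geq n$, is achieved directly: store in each color's label a compact description (e.g.\ a spanning-forest encoding over a global $O(\log n)$-bit vertex naming) of the subgraph induced by edges of that color, enabling direct connectivity computation at query time.

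For the inductive step, the main tool is random sampling. I would choose a random subset $S\subseteq V$ of size $s:=\tilde{O}(\sqrt{n/L_{f-1}(n)})$; a Chernoff-plus-union-bound argument guarantees, with high probability, that $S$ hits every connected subgraph of size at least $\tilde{\Theta}(n/s)$ that can arise as a $G-F$ component for any fault set $F$ of at most $f$ colors. The label of each color $c$ would then store, for each sampled vertex $s\in S$, the $(f-1)$-fault sub-scheme label of $s$ within an auxiliary graph tied to $c$ (such as $G-c$), consuming $\tilde{O}(s\cdot L_{f-1}(n))=\tilde{O}(\sqrt{n\cdot L_{f-1}(n)})$ bits. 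The label of each vertex $v$ would pack together its $(f-1)$-fault sub-scheme label, some reachability metadata for connecting $v$ to $S$, and a ``local witness'' of $\tilde{O}(n/s)$ bits that explicitly encodes $v$'s component whenever that component is small enough to miss $S$. A query $\langle u,v,F\rangle$ is then answered either by the local witnesses (if both endpoints lie in small components of $G-F$) or by using $S$ as a landmark set and invoking the $(f-1)$-fault sub-scheme through a carefully chosen color $c\in F$.

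The hard part, I expect, is coordinating information across the $f$ possible ``first faults'' in $F$: the vertex label of $v$ is built without knowing $F$, yet must support sub-scheme queries on $G-c$ for any one of the $f$ colors $c$ that may appear. A prohibitive approach is to duplicate $(f-1)$-fault labels once per choice of $c$; a cleaner route is to shift this burden onto the color labels, so that $c$'s label carries the sub-scheme labels of sampled vertices tailored to $G-c$ and $v$'s label merely provides a reduction of ``where is $v$ routed to $S$'' that is robust across choices of $c$. Verifying this compression carefully, while maintaining a single union bound over all $\binom{n}{f}\cdot n^2$ queries and confirming that error probabilities compound by only a constant factor through the $f$ recursion levels, is the core technical delicacy I anticipate in the construction.
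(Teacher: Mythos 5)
Your recursion target $L_f(n) = \tilde{O}\bigl(\sqrt{n\,L_{f-1}(n)}\bigr)$ is exactly the paper's, and you correctly pinpoint the central tension. But the mechanism you propose does not close it. To invoke the $(f-1)$-fault sub-scheme on $G-c$ with residual fault set $F' = F\setminus\{c\}$, the decoder needs $L_{f-1}(c',\,G-c)$ for every $c'\in F'$ as well — not only the sub-scheme labels of the sampled vertices. You put the sampled vertices' labels into $c$'s label, but never say where the sub-scheme labels of the \emph{other colors} in $F$ come from; storing $L_{f-1}(c',\,G-c)$ in $c'$'s label for all colors $c$ costs $\Theta(|C|\cdot L_{f-1}(n))$, which is prohibitive. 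The paper resolves precisely this by thresholding on \emph{color prevalence}, not by sampling vertices: only the $|\mathcal{H}| = \tilde{O}(fn/\Delta)$ high-prevalence colors are recursively peeled off (every vertex \emph{and} every color stores $(f-1)$-fault labels only for the graphs $G-h$, $h\in\mathcal{H}$), the color-fault sparsification of [PST24] bounds the edge count so that $|\mathcal{H}|$ is small, and the complementary case in which $F\subseteq\mathcal{R}$ is handled by Dory--Parter edge-fault labels applied to the at most $f\Delta$ faulty edges. Balancing $\Delta$ gives the same $\sqrt{n\,L_{f-1}}$ recursion, with base case $b(n,1)=\tilde{O}(\sqrt{n})$ rather than your $L_0=O(\log n)$.

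Two further pieces of your plan would not survive scrutiny. The ``local witness of $\tilde{O}(n/s)$ bits encoding $v$'s component whenever that component misses $S$'' is not a well-defined label: $v$'s component in $G-F$ depends on $F$, which is unknown at labeling time, and different $F$'s produce different small components; likewise, ``reachability metadata connecting $v$ to $S$ robustly across choices of $c$'' is exactly the $F$-dependent information you set out to avoid storing. And your direct $\tilde{O}(n)$ fallback — store a spanning-forest encoding of each color's subgraph in that color's label — lets the decoder learn which edges fail but not which survive, so it cannot by itself determine connectivity in $G-F$; the paper instead forms $H = \bigcup_c T_c$, applies Dory--Parter on $H$, and stores in $c$'s label the $\le n-1$ edge-fault labels of the edges of $T_c$, so the decoder can simulate the removal of $\bigcup_{c\in F}T_c$ from $H$.
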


\begin{theorem}[$f \geq 2$ lower bound, informal]\label{thm:f-lower-bound-informal}
    A labeling scheme for connectivity under $f$ color faults must have label length of $\Omega(n^{1-1/(f+1)})$ bits for constant $f$, hence $\Omega(n^{1-o(1)})$ bits for $f = \omega(1)$.
\end{theorem}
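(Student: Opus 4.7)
My plan is to establish the lower bound via an information-theoretic argument, specifically a reduction from the one-way communication complexity of the \textsc{Index} problem: Alice holds $x \in \{0,1\}^N$, Bob holds $I \in [N]$, and any protocol in which Bob outputs $x_I$ requires Alice to send $\Omega(N)$ bits. I will exhibit a family of colored $n$-vertex graphs $\{G_x\}_x$ so that a short labeling scheme would yield a protocol violating this bound.

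Concretely, for $N = \Theta(n)$, I would construct each $G_x$ on $O(n)$ vertices together with a canonical \emph{witness set} $Z$ of size $|Z| = O(n^{1/(f+1)})$ consisting of designated vertices and color names whose identities are fixed across all $x$. The construction must ensure that each bit $x_I$ is the correct answer to some connectivity query $\langle u_I, v_I, F_I \rangle$ whose elements all lie in $Z$. Given this, Alice simulates the labeling scheme on $G_x$ and sends Bob the $|Z|$ labels of elements in $Z$; Bob recovers $x_I$ by answering the query corresponding to his index. This transmits $|Z|\cdot \ell$ bits (where $\ell$ is the worst-case label length), so the \textsc{Index} bound yields $\ell \geq \Omega(N/|Z|) = \Omega(n^{1-1/(f+1)})$. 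The bound $\Omega(n^{1-o(1)})$ for $f=\omega(1)$ follows by instantiating the construction with any slowly growing $f$.

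To engineer the encoding, set $k := n^{1/(f+1)}$, and use $O(k)$ witness vertices $u_0, u_1, \ldots, u_k$ together with $O(fk)$ witness colors arranged into $f$ layers $\{c^{(\ell)}_i : i \in [k],\, \ell \in [f]\}$. The $k^{f+1}=N$ queries indexed by tuples $I = (i_0, i_1, \ldots, i_f) \in [k]^{f+1}$ are taken to be $\langle u_0, u_{i_0}, \{c^{(1)}_{i_1}, \ldots, c^{(f)}_{i_f}\} \rangle$; this matches the number of bits we need to encode. Each $G_x$ is then assembled as the union of private \emph{certifier} gadgets, one per tuple $I$ with $x_I = 1$, each using $O(1)$ disjoint internal (non-witness) vertices so the total vertex count stays within $O(n)$.

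The main obstacle, and the heart of the proof, is to ensure \emph{non-interference} between certifiers across queries. Because connectivity is a monotone property of the surviving subgraph, naively routing each certifier as a colored wire inevitably leaks information across queries and blocks pointwise recovery of the encoded bits. My approach is to design each certifier so that its activating path is routed through fresh internal vertices and its colors are chosen so that $F_I$ is the unique fault set (among the prescribed $k^f$ query sets) that isolates it; auxiliary never-queried colors can be used to electrically separate certifiers intended for different $i_0$. The core technical step would be to verify rigorously that the induced map $x \mapsto (\text{query answers})$ is injective on $\{0,1\}^N$, generalizing the ball-packing encoding used for $f=1$ to a layered, $f$-dimensional analogue. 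Once this injectivity is secured, the counting/communication argument sketched above yields the claimed bound.
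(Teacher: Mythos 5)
Your high-level framework is correct and matches the paper's: reduce from $\INDEX(N)$ with $N=\Theta(n)$, have Alice send $O(n^{1/(f+1)})$ labels (a small witness set of vertices and colors), and conclude $\ell = \Omega(n^{1-1/(f+1)})$ by the indexing lower bound. However, you explicitly flag the non-interference problem as unresolved, and this is precisely the heart of the matter --- the paper's contribution is the gadget that solves it, which your sketch does not supply.

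The concrete gap has two parts. First, your ``union of private certifier gadgets, one per tuple $I$ with $x_I=1$'' places certifiers in \emph{parallel} between $u_0$ and $u_{i_0}$. This cannot work: if $x_I = 1$ and $x_{I'} = 1$ for two tuples $I, I'$ sharing the same $i_0$ but with different fault sets, then when Bob queries with $F_I$, the certifier for $I'$ (not broken by $F_I$) still keeps $u_0$ and $u_{i_0}$ connected, so Bob cannot recover $x_I$. The paper avoids this by placing all ``tests'' for a given target $t_k$ in \emph{series} along a single path $P_k$ of length $M = \binom{N^{1/(f+1)}}{f}$, so that $s$ and $t_k$ disconnect iff the one segment matching Bob's fault set is broken. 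Second, you never specify what makes $F_I$ the \emph{unique} allowed fault set that breaks a given certifier. The paper's gadget is the crucial idea: each path segment is $f$ parallel edges (an ``$f$-thick'' bridge); when $x_i = 1$ Alice colors these $f$ edges with the $f$ distinct colors of the designated fault set $F(l)$. This bridge is severed iff \emph{all} $f$ of its colors fail, which happens for the allowed fault set $F(l')$ iff $F(l') = F(l)$ (since $F(l') \neq F(l)$ implies some color of $F(l)$ survives). Without a gadget of this kind, the claim that ``$F_I$ is the unique fault set that isolates it'' remains an unverified assertion. A smaller point: the paper fixes the graph \emph{topology} and lets only the coloring depend on $x$ (which is what enables the ``even restricted to simple planar graphs'' strengthening), whereas you vary the topology with $x$; this is a weaker form of the statement, though not fatal for the bare bound.
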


The full discussion appears in \Cref{sect:f-faults}.
Apart from the gap between the bounds, 
there are a few more noteworthy differences from the case of a single color fault.

First, the scheme of \Cref{thm:f-upper-bound-informal} is \emph{randomized}, as opposed to the deterministic scheme for $f=1$ (\Cref{thm:informal_one_fault}).
Moreover, the construction is based on different techniques, combining three main ingredients:
(1) sparsification tools for colored graphs \cite{PST24},
(2) the (randomized) edge fault-tolerant labeling scheme of~\cite{DoryP21}, and (3) a recursive approach of~\cite{ParterP22a}.

Second, the lower bound of \Cref{thm:f-lower-bound-informal} is \emph{existential} (but still information-theoretic): it relies on choosing a fixed `worst-case' graph topology, and encoding information by coloring it and storing some of the resulting labels.
We further argue that this technique cannot yield a lower bound stronger than $\tilde{\Omega}(n^{1-1/(f+1)})$ bits.
This is due to the observation that a color whose label is not stored can be considered \emph{never faulty}, combined with the existence of efficient labeling schemes when the number of colors is small.
A detailed discussion of this barrier appears in \Cref{sect:f-faults-lower-bound}.

\medskip
Curiously, in the seemingly unrelated problem of small-size fault-tolerant distance preserves (FT-BFS) introduced by Parter and Peleg~\cite{ParterP13}, there is a similar gap in the known bounds for $f\geq 3$, of $O(n^{2-1/2^f})$ and $\Omega(n^{2-1/(f+1)})$ edges~\cite{Parter15,BodwinGPW17}.
Notably, for the case of $f=2$, Parter~\cite{Parter15} provided a tight upper bound of $O(n^{5/3})$, later simplified by Gupta and Khan~\cite{GuptaK17}.
Revealing connections between FT-BFS structures and the labels problem of this paper is an intriguing direction for future work.

\subsubsection{Two Color Faults and Graphs of Bounded Diameter}
For the special case of two color faults, we provide another scheme, with label length of $\tilde{O}(D \sqrt{n})$ bits for graphs of diameter at most $D$.

\begin{theorem}[$f=2$ upper bound, informal]
    There is a labeling scheme for connectivity under two color faults with label length of $\tilde{O}(D \sqrt{n})$ bits.
\end{theorem}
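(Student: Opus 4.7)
The plan is to reduce the two-color-fault problem to a collection of single-fault sub-queries via a shallow BFS tree. I would fix a BFS tree $T$ of $G$ rooted at an arbitrary vertex $r$; since $\diam(G)\le D$, the tree path $P(v)$ from any vertex $v$ to $r$ has at most $D$ edges and hence at most $D$ distinct colors, which I denote $C(v)$. Each vertex $v$ would store $P(v)$ with its edge colors ($\tilde{O}(D)$ bits) and, for every $c\in C(v)$, the label assigned by the $f=1$ scheme of \Cref{thm:informal_one_fault} applied to $G-c$ to the ``anchor'' vertex $a_v^c$---the deepest ancestor of $v$ whose incoming tree edge has color $c$. Since $\bp(G-c)\le \sqrt{n}$, each such sub-label is of size $\tilde{O}(\sqrt{n})$, giving $\tilde{O}(D\sqrt{n})$ bits per vertex. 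The label of each color $c$ additionally stores the sub-label of $r$ in the $f=1$ scheme for $G-c$, of size $\tilde{O}(\sqrt{n})$.

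To answer a query $\ang{u,v,F}$ with $F=\{c_1,c_2\}$, I would walk up $P(u)$ (resp.\ $P(v)$) to locate the topmost ancestor $a_u$ (resp.\ $a_v$) of $u$ (resp.\ $v$) still reachable through $F$-avoiding tree edges, together with the first blocking color $c^u\in F$ (resp.\ $c^v\in F$). If $a_u=a_v=r$, the tree path $u\to r\to v$ survives in $G-F$ and the answer is YES. Otherwise $u\sim a_u$ and $v\sim a_v$ in $G-F$ via their intact tree segments, so it suffices to compare the $G-F$-components of $a_u$ and $a_v$. For $u$ (assuming $a_u\ne r$), I invoke \Cref{thm:informal_one_fault} on $G-c^u$ with additional fault $F\setminus\{c^u\}$, applied to the sub-label of $a_u^{c^u}=a_u$ already sitting in $u$'s label; symmetrically for $v$. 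If $a_u=r$, the same call is made on the sub-label of $r$ fetched from $c^v$'s color label in the $f=1$ scheme of $G-c^v$. The query is answered by checking whether the two component identifiers returned agree.

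The principal obstacle I foresee is that the two sub-queries may live in different residual schemes (one for $G-c^u$, one for $G-c^v$), so the component identifiers they return must be \emph{intrinsic} to their component in $G-F$ rather than scheme-dependent---achievable if \Cref{thm:informal_one_fault} labels each component by, say, the minimum vertex id it contains. A related subtlety is that invoking the $f=1$ sub-scheme on $G-c$ a priori requires a sub-label for the additional faulty color within that scheme; storing one such sub-label for every pair of colors would blow up the budget, so the BFS confinement must ensure that the only color-sub-labels a query ever needs are those that can be packed into the color labels of the faulty colors themselves while keeping within $\tilde{O}(\sqrt{n})$ bits per color. Verifying these two properties---either by inspecting the construction behind \Cref{thm:informal_one_fault} or by lightly adjusting it to fit the pipeline---is where the main technical work lies.
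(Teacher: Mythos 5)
The high-level skeleton you set up — a BFS tree $T$ rooted at a fixed vertex, so that every vertex's tree path to the root has at most $D$ colors, and reducing a two-fault query to comparing component IDs of the two topmost still-reachable ancestors — does match the paper's plan. The difficulty is exactly the one you flag at the end and do not resolve, and the paper's construction avoids it by not using the $f=1$ scheme as a black box at all.

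Concretely: to answer a query with faults $F=\{c^u,c^v\}$ you want to evaluate an $f=1$ query in $G-c^u$ with the faulty color $c^v$. That requires the color label of $c^v$ in the $(f=1)$-scheme built for $G-c^u$. Since $c^v$ need not lie on $u$'s tree path (it is the color blocking $v$'s path), $u$'s label cannot be made to carry it, and $c^v$'s label cannot carry its own sub-label for every possible $G-c^u$ without $\Omega(C\sqrt{n})$ bits. Your caveat — ``the BFS confinement must ensure that the only color-sub-labels a query ever needs are those that can be packed into the color labels of the faulty colors themselves'' — is precisely the hope that must be cashed out, and it is not true as stated for the black-box reduction: the needed sub-label depends on the \emph{other} faulty color, which is not known when the color labels are built. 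The further concern you raise about intrinsic component IDs is fine (the paper's $\cid$ is the minimum vertex id, which is scheme-independent), but it is not the bottleneck.

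The paper sidesteps the issue by storing $\cid$ values directly and introducing a hitting set. For each $v$ and each $c$ on $T[s,v]$, it runs BFS from $v$ in $G-c$ but truncates at $\sqrt{n}$ vertices, yielding $T_{v,c}$. If $T_{v,c}$ is small, it already spans $v$'s component in $G-c$, and $L(v)$ can afford to store $\cid(v,G-\{c,d\})$ for every $d\in T_{v,c}$. If $T_{v,c}$ is large, it must intersect a \emph{single, global} hitting set $U$ of size $\tilde{O}(\sqrt n)$; $L(v)$ records a hitting vertex $u_{v,c}\in U\cap T_{v,c}$ and $\cid(u_{v,c},G-\{c,d\})$ for the $\le D$ colors $d\in T[s,u_{v,c}]$. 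The color label $L(c)$ stores $\cid(u,G-\{c,d\})$ for every $u\in U$ and every $d\in T[s,u]$ — only $|U|\cdot D=\tilde O(\sqrt n\,D)$ entries. At query time a three-way case split on whether $c$ or $d$ lies on $T[s,u_{v,c}]$ determines which of the two or three available labels holds the needed $\cid$, with the residual case (neither on the path) giving $\cid=\id(s)$ for free. This global hitting set $U$ and the small-vs-large $T_{v,c}$ dichotomy are the missing devices; without something equivalent, the modular ``invoke the $f=1$ scheme inside $G-c$'' plan cannot be closed within the $\tilde O(D\sqrt n)$ budget.
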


This beats the general scheme 
when $D = O(n^{1/4 - \epsilon})$, and demonstrates that the existential $\Omega(n^{2/3})$ lower bound does not apply to graphs with diameter $D = O(n^{1/6 - \epsilon})$.
Further, this scheme is existentially optimal (up to logarithmic factors) for graphs with $D= \tilde{O}(1)$. 
We hope this construction, found in \Cref{sect:two-fault-diameter}, could serve as a stepping stone towards closing the current gap between our bounds, and towards generalizing $\bp(G)$ 
for the case of $f=2$.

\Cref{tbl:results_table} summarizes our main results on connectivity labeling under color faults.

\renewcommand{\arraystretch}{1.25} 
\begin{table}[t]
\caption{A summary of our results on $f$ color fault-tolerant connectivity labeling schemes. The table shows the provided length bounds (in bits) for such schemes.} \label{tbl:results_table}
\centering
\medskip
\begin{tabular}{|c|cc|cc|}
\hline
\textbf{No. faults}    & \multicolumn{2}{c|}{\textbf{Upper bound}}                               & \multicolumn{2}{c|}{\textbf{Lower bound}}                                      \\ \hline\hline
$f=1$                  & \multicolumn{1}{c|}{$\tilde{O}(\sqrt{n})$}       & Thm \ref{thm:single-fault-upper-bound}                 & \multicolumn{1}{c|}{$\Omega(\sqrt{n})$}                 & Thm  \ref{thm:single-fault-lower-bound}                \\ \hline
\multirow{2}{*}{$f=2$} & \multicolumn{1}{c|}{$\tilde{O}( \diam(G)  \sqrt{n}  )$} & Thm \ref{thm:two-fault-upper-bound}                 & \multicolumn{1}{c|}{\multirow{2}{*}{$\Omega(n^{2/3})$}} & \multirow{4}{*}{Thm \ref{thm:f-faults-lower-bound}} \\ \cline{2-3}
                       & \multicolumn{1}{c|}{$\tilde{O}( n^{3/4} )$}      & \multirow{3}{*}{Thm \ref{thm:f-faults-upper-bound}} & \multicolumn{1}{c|}{}                                   &                      \\ \cline{1-2} \cline{4-4}
$f=O(1)$               & \multicolumn{1}{c|}{$\tilde{O}(n^{1-1/2^f})$}    &                      & \multicolumn{1}{c|}{$\Omega(n^{1-1/(f+1)})$}            &                      \\ \cline{1-2} \cline{4-4}
$f=\omega(1)$          & \multicolumn{1}{c|}{$\tilde{O}(n)$}              &                      & \multicolumn{1}{c|}{$\Omega(n^{1-o(1)})$}               &                     \\
\hline
\end{tabular}
\end{table}
\renewcommand{\arraystretch}{1} 

\subsubsection{Equivalence Between All-Pairs and Single-Source Connectivity}

In the \emph{single-source} variant of fault-tolerant connectivity, 
given are an $n$-vertex graph $G$ with a designated \emph{source vertex} $s$, and an integer $f$.
It is then required to
support queries of the form $\ang{u,F}$, where $u\in V$ and $F$ is a faulty-set of size at most $f$, by reporting whether $u$ is connected to $s$ in $G-F$.
Here, and throughout this discussion, we do not care about the type of faulty elements; these could be edges, vertices or colors.
For concreteness, we focus our discussion on labeling schemes, although it applies more generally to other models, e.g., centralized oracles, streaming, and dynamic algorithms.
Clearly, every labeling scheme for \emph{all-pairs} fault-tolerant connectivity can be transformed into a single-source variant by including $s$'s label in all other labels, which at most doubles the label length.
We consider the converse direction, and show that a single-source scheme can be used as a black-box to obtain an all-pairs scheme with only a small overhead in length.

\begin{theorem}[Single-source reduction, informal]
    Suppose there is a single-source $f$ fault-tolerant connectivity labeling scheme using labels of at most $b(n, f)$ bits.
    Then, there is an all-pairs $f$ fault-tolerant connectivity labeling scheme with $\tilde{O}(b(n+1, f))$-bit labels. 
\end{theorem}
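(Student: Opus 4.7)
\emph{Proof proposal.} The plan is to reduce the all-pairs problem to a single invocation of the given single-source scheme on an augmented graph $G^+$ consisting of $G$ together with one distinguished auxiliary vertex $s^\ast$, attached to $V(G)$ via a hierarchical gadget. I would construct this gadget as a balanced binary tree $T^\ast$ of depth $O(\log n)$, rooted at $s^\ast$, whose $n$ leaves are identified with the vertices of $V(G)$ via direct edges; every internal edge (or vertex, depending on the fault model) of $T^\ast$ is assigned a fresh fault-label not occurring in $G$. The key property is that for any target vertex $u \in V(G)$ there is an associated set $F_u^\ast$ of $O(\log n)$ gadget fault-labels---namely, those of the sibling subtrees along the root-to-$u$ path---such that $u$ is the unique vertex of $V(G)$ reachable from $s^\ast$ in $G^+ - F_u^\ast$.

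Next, I would invoke the given single-source scheme on $G^+$ with source $s^\ast$ and fault budget $f + O(\log n)$, to produce a label for each vertex. The all-pairs label $M(v)$ for $v \in V(G)$ comprises (i) this single-source label of $v$ with respect to source $s^\ast$ in $G^+$, of length at most $b(n+O(n), f + O(\log n))$, and (ii) the $O(\log n)$-bit description of $F_v^\ast$. Under the mild assumption that $b(\cdot,\cdot)$ scales polynomially in its arguments---which holds for all concrete single-source schemes obtained in this paper---the overall label length is $\tilde{O}(b(n+1, f))$.

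To answer an all-pairs query $\langle u, v, F \rangle$, I would recover $F_u^\ast$ from $u$'s label, form $F' = F \cup F_u^\ast$, and issue the single-source query $\langle v, F' \rangle$ on $G^+$. By the gadget's isolation property, the component of $s^\ast$ in $G^+ - F'$ is exactly $\{s^\ast\} \cup (\text{surviving gadget path}) \cup C$, where $C$ is $u$'s connected component in $G - F$; hence $v$ is connected to $s^\ast$ in $G^+ - F'$ if and only if $u$ and $v$ are connected in $G - F$.

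The main obstacle is verifying that the reduction behaves uniformly across the three fault models---edges, vertices, and colors---considered in the paper: care is needed to design the gadget fault-labels (as fresh edges, vertices, or colors, respectively) so that they do not interfere with $G$'s faults, and so that the definition of a single-source scheme is preserved. A secondary concern is the polynomial-growth assumption on $b(\cdot,\cdot)$, ensuring that enlarging $n$ by a constant factor and $f$ by an additive $O(\log n)$ only inflates the label length by a polylogarithmic factor.
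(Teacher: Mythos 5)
Your reduction has two genuine gaps, both of which the paper's randomized approach is specifically designed to avoid.

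\textbf{The isolation property fails.} After removing the $O(\log n)$ fault-labels $F_u^\ast$ that cut the sibling subtrees off from the root-to-$u$ path, those sibling subtrees are \emph{not deleted} --- they remain internally connected, and their leaves are still vertices of $V(G)$. Now suppose $u$'s component $C$ in $G-F$ contains some vertex $x$ that is a leaf of a pruned sibling subtree $S$. Then in $G^+ - F'$ the source $s^\ast$ reaches $u$, hence $x$, hence (via the surviving internal edges of $S$) every leaf of $S$, hence the $G-F$ components of all those leaves --- and so on, iteratively. So the component of $s^\ast$ in $G^+ - F'$ can be strictly larger than $\{s^\ast\}\cup(\text{path})\cup C$, and the claimed ``$v$ connected to $s^\ast$ iff $u,v$ connected in $G-F$'' can fail in the ``if'' direction: $v\notin C$ but $v$ still reaches $s^\ast$ through a sibling subtree. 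Fixing this would require actually deleting the sibling subtrees with $O(\log n)$ faults, and it is not clear any fixed coloring of the gadget supports this for all targets $u$ simultaneously without overlap issues.

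\textbf{The fault-budget increase is not a polylog overhead.} You invoke the single-source scheme with budget $f+O(\log n)$, but the theorem presupposes only a single-source scheme for budget $f$, and its conclusion is $\tilde{O}(b(n+1,f))$, keeping $f$ fixed. Even granting access to the whole family $\{b(\cdot,\cdot)\}$, the ``mild assumption'' that $b$ scales polynomially is false for the schemes this paper actually constructs: $b(n,f) = \tilde{O}(n^{1-1/2^f})$. For $f=1$ one has $b(n,1)=\tilde{O}(\sqrt n)$ while $b(n,1+\log n)\approx n^{1-1/(2n)}=\Theta(n)$, a polynomial blow-up, not a polylogarithmic one. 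The paper's proof deliberately avoids increasing $f$: it adds a single auxiliary source $s_{ij}$ and attaches random edges with a \emph{non-failing} null color, so the queried fault set remains exactly $F$. The ``isolation'' is achieved probabilistically rather than by a deterministic gadget --- one chooses sampling rates $2^{-j}$ across $O(\log n)$ scales and repeats $O(\log n)$ times independently, then declares $u,v$ disconnected if for some $(i,j)$ they disagree on reachability to $s_{ij}$. When $u,v$ lie in different components, the scale $j\approx\log|C_u|$ yields constant probability that $s_{ij}$ hits $C_v$ but misses $C_u$, and independent repetitions boost this to high probability.
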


The reduction is based on the following idea.
Suppose we add a new source vertex $s$ to $G$, and include each edge from $s$ to the other vertices independently with  probability $p$.
Given a query $\ang{u,v,F}$,
if $u,v$ are originally connected in $G - F$, they must agree on connectivity to the new source $s$, regardless of $p$.
However, if $u,v$ are disconnected in $G-F$, and $p$ is such that  $1/p$ is roughly the size of $u$'s connected component in $G-F$, then with constant probability, $u$ and $v$ will disagree on connectivity to $s$.
The full proof appears in \Cref{sect:single-source-reduction}.

\section{Preliminaries}\label{sect:prelim}

\paragraph{Colored Graphs.}
Throughout, we denote the given input graph by $G$, which is an undirected 
graph with $n$ vertices $V = V(G)$, and $m$ edges $E = E(G)$.
The graph $G$ may be a multi-graph, i.e., there may be several different edges with the same endpoints (parallel edges).
The edges or the vertices of $G$ are each given a color
from a set of $C$ possible colors.
The coloring is \emph{arbitrary}; there are no `legality' restrictions (e.g., edges sharing an endpoint may have the same color).
Without loss of generality, we sometimes assume that $C \leq \max\{m,n\}$, and that the set of colors is $[C]$.
For a (faulty) subset of colors $F$, we denote by $G-F$ the subgraph of $G$ where all edges (or vertices) with color from $F$ are deleted.
When $F$ is a singleton $F = \{c\}$, we use the shorthand $G - c$.

In some cases, we refer only to the \emph{topology} of the graph, and ignore the coloring.
Put differently, we sometimes consider the family of inputs given by all different colorings of a fixed graph.
This object is referred to as the graph \emph{topology} $G$, rather than the graph $G$. We denote by $\dist_G (u,v)$ the number of edges in a $u$-$v$ shortest path (and $\infty$ if no such path exist).
For a non-empty $A \subseteq V$, the distance from $u \in V$ to $A$ is defined as $\dist_G(u,A) = \min\{\dist_G(u,a) \mid a \in A\}$.

\medskip
Our presentation focuses, somewhat arbitrarily, on the \emph{edge-colored} case; throughout, this case is assumed to hold unless we explicitly state otherwise.
This is justified by the following discussion.

\paragraph{Vertex vs.\ Edge Colorings.}
An edge-colored instance can be reduced to a vertex-colored one, and vice versa, by subdividing each edge%
\footnote{Throughout, we slightly abuse notation and write $e = \{u,v\}$ to say that $e$ has endpoints $u,v$, even though there might be several different edges with these endpoints.} 
$e = \{u,v\}$ into two edges $\{u,x_e\}$ and $\{x_e, v\}$, where $x_e$ is a new vertex.
If the original instance has edge colors, we give the new instance vertex colors, by coloring each new vertex $x_e$ with the original color of the edge $e$.
(The original vertices get a new `never-failing' color.)
For the other direction, we color each of $\{u,x_e\}$ and $\{x_e, v\}$ by the color of the original vertex incident to it, i.e., $\{u,x_e\}$ gets $u$'s color, and $\{x_e, v\}$ gets $v$'s color.

These easy reductions increase the number of vertices to $n + m$, which a prioi might seem problematic.
However, as shown by~\cite{PST24}, 
given any fixed (constant) bound $f$ on the number of faulty colors, one can replace a given input instance (either vertex- or edge-colored) by an equivalent sparse subgraph with only $\tilde{O}(n)$ edges, that has the same connectivity as the original graph under any set of at most $f$ color faults.
So, by sparsifying before applying the reduction, the number of vertices increases only to $\tilde{O}(n)$.
Moreover, all our results translate rather seamlessly between the edge-colored and the vertex-colored cases, even without the general reductions presented above.

\paragraph{Vertex and Component IDs.}
We assume w.l.o.g.\ that the vertices have unique $O(\log n)$-bit identifiers from $[n]$, where $\id(v)$ denotes the identifier of $v \in V$.
Using these, we define identifiers for connected components in subgraphs of $G$, as follows.
When $G'$ is a subgraph of $G$ and $v \in V(G')$, we define $\cid(v,G') = \min\{\id(u) \mid \text{$u,v$ connected in $G'$}\}$.
This ensures $\cid(u,G') = \cid(v,G')$ iff $u,v$ are in the same connected component in $G'$.
Therefore, if one can compute $\cid(v,G-F)$ from the labels of $v,F$, then, using the same labels, one can answer connectivity queries subject to faults.

\paragraph{Indexing Lower Bound.}
Our lower bounds rely on the classic \emph{indexing} lower bound from communication complexity.
In the one-way communication problem $\INDEX(N)$,
Alice holds a string $x \in \{0,1\}^N$, and Bob holds an index $i \in [0,N-1]$.
The goal is for Alice to send a message to Bob, such that Bob can recover $x_i$, the $i$-th bit of $x$.
Crucially, the communication is one-way; Bob cannot send any message to Alice.
The protocols are allowed to be randomized, in which case both Alice and Bob have access to a public random string.
The following lower bound on the number of bits Alice is required to send is well-known (see \cite{kushilevitz_nisan_1996, KremerNR99, JayramKS08}).

\begin{lemma}[Indexing Lower Bound~\cite{KremerNR99}]\label{lem:index_LB}
Every one-way communication protocol (even with shared randomness) for $\INDEX(N)$ must use $\Omega(N)$ bits of communication.
\end{lemma}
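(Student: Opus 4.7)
The plan is to establish this $\Omega(N)$ bound via the standard information-theoretic approach, as the statement is a classical result in communication complexity. First, I would reduce to the distributional setting via Yao's minimax principle: it suffices to exhibit a hard input distribution on which every deterministic one-way protocol that succeeds with probability at least $2/3$ must send a message of $\Omega(N)$ bits in expectation. The natural hard distribution is to take $x$ uniform in $\{0,1\}^N$ and, independently, $i$ uniform in $[0, N-1]$.

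With the distribution fixed, Alice's message $M$ is a deterministic function of $x$, and the expected message length is at least $H(M) \geq I(M; x)$. Since the bits $x_1, \ldots, x_N$ are independent and uniform, I would apply the chain rule to write $I(M; x) = \sum_{i=1}^{N} I(M; x_i \mid x_{<i})$. Using the independence of $x_i$ from $x_{<i}$ together with the fact that conditioning reduces entropy, each term satisfies $I(M; x_i \mid x_{<i}) = 1 - H(x_i \mid M, x_{<i}) \geq 1 - H(x_i \mid M) = I(M; x_i)$, so $I(M; x) \geq \sum_i I(M; x_i)$.

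It remains to lower bound each $I(M; x_i)$ by a constant. Bob's output is a function of $(M, i)$, and by assumption it equals $x_i$ with probability at least $2/3$ on the product distribution. By Markov, for at least (say) a $1/2$-fraction of indices $i \in [0, N-1]$, Bob recovers $x_i$ from $M$ with probability at least $2/3 - \Theta(1)$, a constant strictly above $1/2$. Fano's inequality then gives $H(x_i \mid M) \leq H_2(\delta) < 1$ for these indices, yielding $I(M; x_i) = \Omega(1)$. Summing over the $\Omega(N)$ good indices, $H(M) \geq I(M; x) \geq \Omega(N)$, which gives the desired worst-case bit bound since worst-case length dominates expected length.

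The step I would be most careful about is the Fano estimate: the assumption only guarantees success with constant probability per \emph{fixed} input, and I need to average correctly over the uniform $i$ (and over $x$) so that a per-coordinate decoding guarantee with constant advantage survives for almost all $i$. Everything else is routine manipulation of mutual information.
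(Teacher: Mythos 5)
The paper does not prove this lemma; it is stated with a citation to Kremer, Nisan, and Ron~\cite{KremerNR99} and used as a black-box tool in the lower-bound arguments. So there is no in-paper proof to compare against — your task was effectively to reprove a classical fact.

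Your proof is the standard information-theoretic argument for the $\INDEX$ lower bound and the overall structure is sound: reduce to the distributional setting via the easy direction of Yao's principle over the uniform product distribution on $(x,i)$; bound the message length below by $H(M) \geq I(M;x)$; decompose $I(M;x)$ via the chain rule and use independence plus ``conditioning reduces entropy'' to drop to $\sum_i I(M;x_i)$; and finally invoke Fano per coordinate. The one place to tighten is the averaging/Fano step that you flagged. With success probability exactly $2/3$, Markov gives that $\Pr_i[1-p_i \geq 2/3] \leq 1/2$, so the ``good'' half of indices only has $p_i \geq 1/3$, which gives no advantage over $1/2$ and Fano yields nothing. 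You should either (a) first amplify the protocol's success probability to, say, $0.9$ by $O(1)$ independent repetitions (which changes the communication only by a constant factor), after which a $1/2$-fraction of indices has $p_i \geq 0.8$ and Fano gives $I(M;x_i) \geq 1 - H_2(0.2) = \Omega(1)$; or (b) keep $2/3$ but choose the Markov threshold more carefully, e.g.\ $\Pr_i[1-p_i \geq 0.4] \leq 5/6$, so a $1/6$-fraction of indices has $p_i \geq 0.6$ and each contributes $\Omega(1)$ to the sum. Either fix closes the gap, and the rest of the argument (including $H(M) \geq I(M;x)$ and that worst-case message length dominates $H(M)$ up to an additive constant) is correct.
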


\section{Single Color Fault}\label{sect:single-fault}

In this section, we study the connectivity problem under one color fault. That is, given two vertices $u,v$ and a faulty color $c$, one should be able to determine if $u,v$ are connected in $G-c$. 
In \Cref{sect:1-fault-main-upper,sect:1-fault-lower-bound} we focus on labeling schemes, and provide universally optimal upper and lower bounds.
In \Cref{sect:1-fault-centralized} we change gears and provide centralized oracles for this problem.

\subsection{Our Scheme and the Ball Packing Number}\label{sect:1-fault-main-upper}

We first show a scheme that works when $G$ is connected.%
\footnote{
    Connectivity cannot be assumed without losing generality, because of the color labels.
    A color gets \emph{only one} label, which should support connectivity queries in \emph{every} connected component of the input graph.
}
Later, 
we show how to remove this assumption.
Consider the following procedure:
starting from an arbitrary vertex $a_0$,
iteratively choose a vertex $a_i$ which satisfies
\[
\dist_G ( a_i, \{a_0, \dots, a_{i-1}\} ) = i,
\]
until no such vertex exists.
Suppose the procedure halts at the $k$-th iteration,
with the set of chosen vertices $A = \{a_0, \dots, a_{k-1}\}$.
Then every vertex $v \in V$ has distance less than $k$ from $A$.
We use $A$ to construct $O(k \log n)$-bit labels, as follows.

\begin{itemize}
    \item \highlight{Label $L(c)$ of color $c \in [C]$:} \
    For every $a \in A$, store $\cid(a, G-c)$.
    
    \item \highlight{Label $L(v)$ of vertex $v\in V$:} \
    Let $P(v)$ be a shortest path connecting $v$ to $A$, and let $a(v)$ be its endpoint in $A$.
    For every color $c$ present in $P(v)$, store $\cid(v, G-c)$.
    Also, store $\id(a(v))$.
\end{itemize}
Answering queries is straightforward as given $L(v)$ and $L(c)$, one can readily compute $\cid(v, G-c)$:
If the color $c$ appears on the path $P(v)$, then $\cid(v, G-c)$ is found in $L(v)$.
Otherwise, $P(v)$ connects between $v$ and $a(v)$ in $G-c$, hence $\cid(v, G-c) = \cid(a(v), G-c)$, and the latter is stored in $L(c)$.

The labels have length of $O(\sqrt{n} \log n)$ bits, as follows.
Consider the $A$-vertices chosen at iteration $\ceil{k/2}$ or later.
By construction, each of these $\floor{k/2}$ vertices is at distance at least $\ceil{k/2}$ from all others.
Hence, the balls of radius $\floor{k/4}$ (in the metric induced by $G$) centered at these vertices are disjoint, and each such ball contains at least $\floor{k/4}$ vertices.
Thus, $\floor{k/2} \cdot \floor{k/4} \leq n$, so $k = O(\sqrt{n})$.

\medskip
The length of the labels assigned by this simple scheme turns out to be not only \emph{existentially optimal}, but also \emph{universally 
optimal} (both up to a factor of $\log n$).
By existential optimality, we mean that every labeling scheme for connectivity under one color fault must have $\Omega(\sqrt{n})$-bit labels on some \emph{worst-case colored graph $G$}.
The stronger universal optimality means that for \emph{every graph topology $G$}, every such labeling scheme, even tailor-made for $G$, must assign $\Omega(k)$-bit labels (for some coloring of $G$).

\paragraph{The Ball-Packing Number.}\label{par:ball_packing}
To prove the aforementioned universal optimality of our scheme, we introduce a graph parameter called the \emph{ball-packing number}.
As the name suggests, this parameter concerns packing disjoint balls in the metric induced by the graph topology $G$.
Its relation to faulty-color connectivity is hinted by the previous analysis using a ``ball packing argument'' to obtain the $\tilde{O}(\sqrt{n})$ bound.
We next give the formal definitions and some immediate observations.

\begin{definition}[Proper $r$-ball]\label{def:proper_ball}
    For every integer $r \geq 0$, the \emph{$r$-ball} in $G$ centered at $v \in V(G)$, denoted $B_G (v,r)$, consists of all vertices of distance at most $r$ from $v$.
    That is,
    \[
    B_G (v, r) \bydef \{u \in V(G) \mid \dist_G (v,u) \leq r \}.
    \]
    An $r$-ball is called \emph{proper} if 
    there exists $u\in B_G(v,r)$ that realizes the radius, i.e., $\dist_G(u,v)=r$.%
    \footnote{
        If the distance $r$ from $v$ is not realized, then 
        there exists $r'<r$ such that
        the distance $r'$ \emph{is} realized, and
        $B_G (v, r') = B_G (v, r)$ \emph{as sets of vertices}.
        So, whether $B(v,r)$ is proper depends not only on the set of vertices in this ball, but also on the specified parameter $r$.
    }
\end{definition}

\begin{observation}\label{obs:proper-balls}
    If $r \leq \dist_G (u,v) < \infty$, then $B_G (u,r)$ and $B_G (v,r)$ are proper $r$-balls.
\end{observation}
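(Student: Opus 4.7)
The plan is to exhibit, for each of the two balls, an explicit vertex that witnesses properness, namely a vertex lying on a shortest $u$-$v$ path at the appropriate distance. Since $\dist_G(u,v) < \infty$, there exists a shortest path $\pi = (u = x_0, x_1, \ldots, x_d = v)$ in $G$ of length $d = \dist_G(u,v) \geq r$. Because $r$ is a non-negative integer bounded by $d$, the vertex $x_r$ is well-defined on $\pi$.

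First I would verify that $\dist_G(u, x_r) = r$. The prefix $(x_0, \ldots, x_r)$ of $\pi$ is a walk of length $r$, so $\dist_G(u, x_r) \leq r$. Conversely, any $u$-$x_r$ path of length less than $r$ could be concatenated with the suffix $(x_r, \ldots, x_d)$ of $\pi$ to obtain a $u$-$v$ walk of length strictly less than $d$, contradicting the shortest-path property of $\pi$. Hence $\dist_G(u, x_r) = r$, so $x_r \in B_G(u,r)$ realizes the radius, proving $B_G(u,r)$ is a proper $r$-ball in the sense of \Cref{def:proper_ball}.

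Symmetrically, the vertex $x_{d-r}$ on $\pi$ satisfies $\dist_G(v, x_{d-r}) = r$ by the exact same argument applied to the reverse path from $v$ to $u$, which shows that $B_G(v,r)$ is also a proper $r$-ball.

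There is no real obstacle here; the only mild subtlety is recalling that ``proper'' depends on the stated parameter $r$ and not merely on the set of vertices in the ball, so one must exhibit a vertex at distance \emph{exactly} $r$ (not just bound the distance from above). Using a shortest $u$-$v$ path handles this cleanly because every prefix of a shortest path is itself a shortest path.
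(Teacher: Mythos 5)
Your proof is correct and fills in the (short) argument that the paper leaves implicit, as the paper states this as an unproved observation. Taking a shortest $u$-$v$ path and using the vertex at distance exactly $r$ from each endpoint, justified by the fact that prefixes of shortest paths are shortest, is exactly the intended reasoning.
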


\begin{definition}[Ball-packing number]\label{def:ball_packing}
    The \emph{ball-packing number} of $G$, denoted $\bp(G)$, is the maximum integer $r$ such that there exist at least $r$ vertex-disjoint proper $r$-balls in $G$.
\end{definition}

\begin{observation}\label{obs:bp-at-most-sqrt-n}
(i) For every $n$-vertex graph $G$, $\bp(G) \leq \sqrt{n}$.
(ii) For some graphs $G$, we also have $\bp(G) = \Omega(\sqrt{n})$ (e.g., when $G$ is a path).
\end{observation}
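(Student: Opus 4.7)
\textbf{Proof plan for \Cref{obs:bp-at-most-sqrt-n}.} Both parts follow from direct vertex-counting arguments relating proper $r$-balls to shortest paths of length $r$, so the plan is largely a matter of unpacking the definitions carefully.

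For part (i), set $r := \bp(G)$ and fix a collection of at least $r$ vertex-disjoint proper $r$-balls guaranteed by \Cref{def:ball_packing}. The key observation is that a proper $r$-ball $B_G(v,r)$ contains at least $r+1$ vertices: by definition of ``proper'' there exists $u \in B_G(v,r)$ with $\dist_G(u,v)=r$, so a shortest $u$--$v$ path has $r+1$ distinct vertices, and every intermediate vertex lies within distance $r$ of $v$ hence inside $B_G(v,r)$. Summing over the $r$ disjoint balls yields $r(r+1)\le n$, so $r<\sqrt{n}$ (in particular $r\le \sqrt{n}$), which is exactly the claim. No delicate step here; the only thing to be careful about is invoking properness precisely to license the $r+1$ lower bound (a non-proper $r$-ball could in principle be a single vertex).

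For part (ii), take $G$ to be the path on $n$ vertices $v_1,\ldots,v_n$ and exhibit an explicit packing. Choose $r := \lfloor \sqrt{n}/3 \rfloor$ and place ball centers at positions $p_i := (2r+1)i$ for $i=1,\ldots,r$, provided $p_r + r \le n$; for $n$ sufficiently large this holds because $(2r+1)r + r \le n$ by our choice of $r$. Each $B_G(v_{p_i},r)$ consists of the $2r+1$ consecutive vertices $v_{p_i - r},\ldots,v_{p_i + r}$, and consecutive balls are separated by a gap of $(2r+1) - 2r - 1 \ge 1$ vertex on the path, so the balls are pairwise vertex-disjoint. Each ball is proper by \Cref{obs:proper-balls} since, e.g., $\dist_G(v_{p_i},v_{p_i+r}) = r$ and $p_i + r \le n$. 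Thus we have $r$ disjoint proper $r$-balls, certifying $\bp(G) \ge r = \Omega(\sqrt{n})$.

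The main step to get right is the book-keeping in part (ii): choosing the constant in $r = \Theta(\sqrt{n})$ so that the $r$-th ball still fits inside the path and ensuring the spacing genuinely yields disjointness rather than merely non-overlapping centers. Neither is a real obstacle, but they are where a slightly sloppy calculation would quietly fail. Part (i) is essentially immediate once the ``properness gives $r+1$ vertices'' observation is in place.
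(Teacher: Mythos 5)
Your proof is correct, and the paper in fact states this Observation without a proof; your part (i) matches the implicit ball-counting argument the paper makes in the length analysis of its $f=1$ scheme (a proper $r$-ball contains a shortest path of length $r$, hence at least $r+1$ vertices, and $r$ disjoint such balls force $r(r+1)\le n$). One small arithmetic slip in part (ii): $(2r+1)-2r-1 = 0$, not $\ge 1$, so consecutive balls $\{p_i-r,\dots,p_i+r\}$ and $\{p_{i+1}-r,\dots,p_{i+1}+r\}$ are exactly adjacent rather than separated by a buffer vertex --- they still share no vertex, which is all you need, but the stated inequality as written is false.
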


\paragraph{A Ball-Packing Upper Bound.}
Our length analysis for the above scheme in fact showed the existence of at least $\floor{k/2}$ disjoint and proper $\floor{k/4}$-balls, implying that $k = O(\bp(G))$ by \Cref{def:ball_packing}.
Minor adaptations to this scheme to handle several connected components in $G$ yields the following theorem, whose proof is deferred to \Cref{sect:1-fault-upper-bound}.

\begin{theorem}\label{thm:single-fault-upper-bound}\label{THM:SFUB}
    There is a deterministic labeling scheme for connectivity under one color fault that, when given as input an $n$-vertex graph $G$, assigns labels of length $O(\bp(G) \log n)$ bits.
    The query time is $O(1)$ (in the RAM model).
\end{theorem}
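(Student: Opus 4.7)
The plan is to complete the connected-case argument already sketched, then reduce the general (possibly disconnected) case to it one component at a time, and finally bound the aggregate label sizes against $\bp(G)$.

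For connected $G$, I would first verify correctness of the described scheme by case analysis on whether the queried color $c$ lies on $P(v)$: if it does, then $\cid(v, G-c)$ is stored directly in $L(v)$; otherwise $v$ remains connected to $a(v)$ in $G-c$, so $\cid(v, G-c) = \cid(a(v), G-c)$, which is stored in $L(c)$. Both labels have $O(k \log n)$ bits, and $k = O(\bp(G))$ by the ball-packing argument already sketched: the $\lfloor k/2 \rfloor$ centers $a_{\lceil k/2 \rceil}, \ldots, a_{k-1}$ are pairwise at distance at least $\lceil k/2 \rceil$, so by \Cref{obs:proper-balls} their $\lfloor k/4 \rfloor$-balls are disjoint and proper, forcing $\bp(G) \geq \lfloor k/4 \rfloor$. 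The query procedure involves only a constant number of comparisons and label lookups, running in $O(1)$ time in the RAM model.

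For disconnected $G$, I would apply the connected-case scheme within each component $G_i$, obtaining per-component ruling sets $A_i$ of sizes $k_i = O(\bp(G_i)) \leq O(\bp(G))$, and I would prepend $\cid(v, G)$ to every vertex label. A query $(u, v, c)$ then first compares $\cid(u, G)$ and $\cid(v, G)$, returning ``disconnected'' if they differ; otherwise it identifies the common $G_i$ from $\cid(v, G)$ and runs the connected-case query restricted to the label fields associated with $A_i$. Vertex labels have $O(k_i \log n) \leq O(\bp(G) \log n)$ bits, and queries remain $O(1)$.

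The main obstacle is bounding the color labels. A color $c$ may appear in many components, so a naive aggregation of per-component contributions yields $O((\sum_i k_i)\log n)$ bits, which is not in general $O(\bp(G)\log n)$ (for instance a disjoint union of many small nontrivial components sharing colors). To overcome this, I plan to use an \emph{adaptive} per-component strategy: for a component $G_i$ in which the full local connectivity-under-one-color-fault information already fits in $O(\bp(G)\log n)$ bits per vertex (e.g.~because $G_i$ carries few colors or has small diameter), inline that information directly into the vertex labels of $V(G_i)$ and contribute nothing from $G_i$ to any color label; for the remaining ``large'' components, apply the connected-case scheme with $|A_i|$ calibrated so that their joint contribution to any color label is $O(\bp(G))$. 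The query procedure branches on a flag stored in each vertex label, preserving $O(1)$ time. The heart of the proof is verifying that this calibration is always achievable -- i.e.\ that $\sum_{i\text{ large}}|A_i| = O(\bp(G))$ -- which I expect to follow from a global ball-packing argument: the per-component packings of $\lfloor k_i/2\rfloor$ disjoint proper balls (after possibly shrinking the radii to a common value, which is legal since $B(v,r')$ is proper whenever $B(v,r)$ is proper and $r'\leq r$) combine into a joint packing in $G$, and then invoking \Cref{def:ball_packing} caps the total against $\bp(G)$. This final accounting is where I anticipate the bulk of the technical work.
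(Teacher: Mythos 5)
Your treatment of the connected case matches the paper's and is correct: the case analysis on whether $c \in P(v)$, the $O(k\log n)$ size bound, and the ball-packing argument giving $k = O(\bp(G))$ are all fine. The problem is the disconnected case, where your approach diverges from the paper's and, as sketched, does not go through.

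The specific claim you flag as the technical heart---that $\sum_{i\text{ large}}|A_i| = O(\bp(G))$---is false, and the proposed ``shrink radii to a common value and glue packings'' argument cannot rescue it, because after shrinking you get many balls of a possibly tiny common radius $r$, which only certifies $\bp(G)\ge \min(r, \text{number of balls})$ and that minimum is typically just $r$. Concretely, let $G$ be a disjoint union of $t = n^{1/4}$ paths, each on $\Theta(n^{3/4})$ vertices, with all paths carrying the same set of $n^{3/4}$ colors (color $j$ on the $j$-th edge of every path). Then $\bp(G_i) = \Theta(n^{3/8})$ for each component, so $\sum_i k_i = \Theta(n^{5/8})$, while $\bp(G) = \Theta(\sqrt{n})$. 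Every color appears in every component, and every component has far more than $\bp(G)$ colors, so the small/large dichotomy gives no relief, and the naive color label aggregates to $\Theta(n^{5/8}\log n)$ bits. The ``calibrated $|A_i|$'' idea runs into a wall as well: forcing $\sum_i s_i \le \bp(G)$ while keeping every $\ell_i / s_i \le \bp(G)$ would need $\sum_i \ell_i \le \bp(G)^2$, i.e.\ $n \le \bp(G)^2$, which need not hold.

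The paper avoids the per-component decomposition entirely. It seeds the ruling procedure with $A_0$, the set of minimum-$\id$ vertices, one per connected component of $G$, and then runs the same iterative distance-$i$ rule \emph{globally} to produce a single set $A$ with $|A| = k-1$ and ruling radius $k$. The crucial trick is that $A_0$ vertices \emph{need not appear in the color labels at all}: for $a\in A_0$, $\cid(a, G-c) = \id(a)$ for every $c$, since $a$ has minimum $\id$ in its $G$-component and removing edges only shrinks components. So the color label $L(c)$ stores $\cid(a, G-c)$ only for $a\in A$, costing $O(k\log n)$ bits regardless of how many components exist or how many a given color touches. The vertex label stores $\id(a(v))$ and $\cid(v,G-d)$ for each color $d$ on $P(v)$, also $O(k\log n)$ bits. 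The ball-packing argument then applies \emph{once, globally}: the second half of $A$ gives $\ge k/2$ pairwise-distance-$> \lceil k/2\rceil$ centers, hence $\lfloor k/4\rfloor$ disjoint proper $\lfloor k/4\rfloor$-balls, so $k = O(\bp(G))$. To repair your proof, replace the per-component split with this one-pass global construction starting from $A_0$; the rest of your connected-case reasoning (query procedure and ball-packing bound) carries over with only the extra case ``$a(v)\in A_0$'' in the query, handled as above.
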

\begin{remark}
    By \Cref{obs:bp-at-most-sqrt-n}(i), the label length is always bounded by $O(\sqrt{n} \log n)$ bits.
\end{remark}

\def\SINGLEFAULTUPPER{
\highlight{Labeling.}
The labeling procedure is presented as \Cref{alg:single-fault-labels}.

\begin{algorithm}
\caption{Labeling for one color fault}\label{alg:single-fault-labels}
\begin{algorithmic}[1]
\Require Colored graph $G$.
\Ensure Labels $L(v)$ for each vertex $v \in V$, and $L(c)$ for each color $c$.
\State $A_0 \gets \{a \in V \mid \id(a) = \cid(a,G)\}$ \Comment{vertices w/ min $\id$ in each connected component of $G$}
\State $A \gets \emptyset$
\State $i \gets 1$
\While{there exists vertex $a_i \in V$ with $\dist_G (a_i, A_0 \cup A) = i$}
    \State $A \gets A \cup \{a_i\}$
    \State $i \gets i+1$
\EndWhile
\For{each vertex $v \in V$}
    \Comment{create the label $L(v)$}
    \State $a(v) \gets $ a closest vertex to $v$ from $A_0 \cup A$ in $G$
    \State $P(v) \gets$ a shortest $v$-to-$a(v)$ path in $G$
    \State \textbf{store} in $L(v)$ the id of $a(v)$, $\id(a(v))$ 
    \State \textbf{store} in $L(v)$ a dictionary that maps key $d$ to value $\cid(v, G-d)$, for every color $d$ on $P(v)$
\EndFor
\For{each color $c$}
    \Comment{create the label $L(c)$}
    \State \textbf{store} in $L(c)$ the name of the color $c$
    \State \textbf{store} in $L(c)$ a dictionary that maps key $\id(a)$ to value $\cid(a,G-c)$, for each $a \in A$
\EndFor
\end{algorithmic}
\end{algorithm}

\medskip
\highlight{Answering queries.}
As before, it suffices to show that one can report $\cid(v, G-c)$ merely from the labels $L(v)$ and $L(c)$.
This is done as follows.
If the key $c$ appears in the dictionary of $L(v)$, then $\cid(v,G-c)$ is the corresponding value, and we are done.
Otherwise, $P(v)$ does not contain the color $c$, so it connects $v$ to $a(v)$ in $G-c$, and therefore $\cid(v,G-c)=\cid(a(v),G-c)$.
Recall that $a(v) \in A_0 \cup A$.
If the key $\id(a(v))$ appears in the dictionary of $L(c)$, then $\cid(a(v),G-c)$ is the corresponding value, and we are done.
Otherwise, it must be that $a(v) \in A_0$. Recall that $A_0$ contains vertices having minimum $\id$ in their connected component in $G$.
This implies that $\cid(a(v), G-c) = \id(a(v))$, and the latter is stored in $L(v)$.

\medskip
\highlight{Length analysis.}
Let $k$ be the halting iteration of the while loop in \Cref{alg:single-fault-labels} (line 4).
Then $|A| = k-1$, so the length of each color label is $O(k \log n)$ bits.
Also, by the while condition, each vertex $v \in V$ has distance less than $k$ from $A_0 \cup A$ in $G$, so $P(v)$ contains less than $k$ edges, hence also less than $k$ colors. 
Therefore, the length of $L(v)$ is also $O(k \log n)$ bits.

We now prove that $k = O(\bp(G))$.
Let $A' = \{a_i \mid i \geq \ceil{k/2}\}$ be the ``second half'' of chosen $A$-vertices.
Note that $|A'|\geq k/2$.
For each $i = 1, \dots k-1$, denote by $A_i$ the state of set $A$ right after the $i$-th iteration.
If $a_i, a_j \in A'$ with $i > j$, then $a_j \in A_{i-1}$, hence 
\[
\dist_G (a_i, a_j) \geq \dist_G (a_i, A_0 \cup A_{i-1}) = i > \ceil{k/2}.
\]
Therefore, $B(a_i, \floor{k/4})$ and $B(a_j, \floor{k/4})$ are disjoint.
Also, if $a_i \in A'$, then $\dist_G (a_i, A_0 \cup A) = i>\floor{k/4}$,  
so \Cref{obs:proper-balls} implies that $B(a_i, \floor{k/4})$ is proper.
Thus, the collection of $\floor{k/4}$-balls centered at the $A'$ vertices certifies that $\floor{k/4} \leq \bp(G)$, 
which concludes the proof.

\begin{remark}
The proof extends seamlessly to vertex-colored graphs.
It is worth noting that if the vertex $a(v)$ has the color $c$, then $\cid(v,G-c)$ is stored in $L(v)$ (since $a(v)\in P(v)$).
\end{remark}
}

\subsection{A Ball-Packing Lower Bound}\label{sect:1-fault-lower-bound}
We now show an $\Omega(\bp(G))$ bound on the maximal label length. 

\begin{theorem}
\label{thm:single-fault-lower-bound}
    Let $G$ be a graph topology.
    Suppose there is a 
    (possibly randomized) labeling scheme for connectivity under one color fault, that assigns labels of length at most $b$ bits for every coloring of $G$.
    Then 
    $b = \Omega(\bp(G))$.
\end{theorem}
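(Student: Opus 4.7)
The plan is to reduce from $\INDEX(r^2)$ with $r := \bp(G)$ and invoke \Cref{lem:index_LB}. First, I would fix $r$ vertex-disjoint proper $r$-balls $B_1, \dots, B_r$ in $G$ (witnessing $\bp(G) \geq r$) with centers $p_i$ and realizers $u_i \in B_i$ satisfying $\dist_G(p_i, u_i) = r$; for each $i$, take a fixed shortest $p_i$-to-$u_i$ path $Q_i = (e_{i,1}, \dots, e_{i,r})$, which must lie inside $B_i$ since all intermediate vertices are within distance $r$ of $p_i$. For each $(i, j) \in [r] \times [r]$, let $E_{i,j}$ denote the set of edges of $G[B_i]$ with one endpoint at BFS depth $j-1$ from $p_i$ and the other at depth $j$. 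Crucially, these edge sets are pairwise disjoint (across $i$ by ball-disjointness; across $j$ because each edge of $G[B_i]$ makes at most one depth transition), and $e_{i,j} \in E_{i,j}$.

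In the reduction, given $x \in \{0,1\}^{r \times r}$, Alice would designate $r$ ``active'' colors $c_1, \dots, c_r$ and color $G$ so that each $e \in E_{i,j}$ receives color $c_j$ if $x_{i,j} = 1$, while every other edge gets a fresh inert color used only once. She then runs the (possibly randomized) labeling scheme and sends Bob the $3r$ labels $\{L(c_j)\}_{j \in [r]}$ and $\{L(p_i), L(u_i)\}_{i \in [r]}$, totaling at most $3rb$ bits. Upon receiving index $(i, j)$, Bob uses these labels to issue the query $\ang{p_i, u_i, c_j}$ and returns the negation of its answer.

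I expect the main obstacle to be proving that this answer equals $\neg x_{i,j}$. The easy direction is $x_{i,j} = 0$: no edge of $Q_i$ is colored $c_j$ (the only $Q_i$-edge belonging to any $E_{i', j}$ is $e_{i,j}$, which is now inert), so $Q_i$ remains intact in $G - c_j$ and connects $p_i$ to $u_i$. The hard direction is $x_{i,j} = 1$, where one must show that deleting $E_{i,j}$ severs \emph{every} $p_i$-to-$u_i$ route in $G$, ruling out alternative paths through $G[B_i]$ off of $Q_i$ or via edges exiting the ball. The key structural fact is that any \emph{boundary} vertex $a \in B_i$ (one with a neighbor in $V(G) \setminus B_i$) must satisfy $\dist_G(p_i, a) = r$: its outside neighbor lies at distance $>r$ from $p_i$, while adjacent BFS depths differ by at most $1$. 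Hence any $p_i$-to-$u_i$ walk in $G$ contains a sub-walk inside $B_i$ from $p_i$ (depth $0$) to a vertex of depth $\geq j$---either $u_i$ itself or an exit boundary vertex at depth $r$---and such a sub-walk necessarily traverses an edge of $E_{i,j}$, since consecutive BFS depths differ by at most $1$, so the first crossing of level $j$ uses a depth-$(j-1)$-to-depth-$j$ edge.

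Given the claim, Bob recovers each $x_{i,j}$ (w.h.p.\ if the scheme is randomized), so Alice's $3rb$-bit message solves $\INDEX(r^2)$. By \Cref{lem:index_LB} this forces $3rb = \Omega(r^2)$, and therefore $b = \Omega(r) = \Omega(\bp(G))$, as desired.
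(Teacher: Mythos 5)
Your proof is correct and follows essentially the same approach as the paper's: a reduction from $\INDEX(r^2)$ with $r=\bp(G)$, where the string is encoded by choosing, per ball and per layer, whether the layer-crossing edges receive a shared "active" color or an inert one, and Bob queries connectivity between the center $p_i$ and the realizer $u_i$ under the fault $c_j$. The only cosmetic difference is that you give each non-active edge a fresh singleton color rather than one common null color $\perp$; this has no effect. You are in fact somewhat \emph{more} explicit than the paper on the crucial $x_{i,j}=1$ direction: the paper simply asserts that in $G-l$ the center is cut off from vertices beyond depth $l$, while you spell out the argument — any boundary vertex of the ball has depth exactly $r$ because adjacent BFS depths differ by at most one, so every $p_i$-to-$u_i$ walk contains a sub-walk inside $B_i$ from depth $0$ to depth $\geq j$, which must cross level $j$ via an edge of $E_{i,j}$. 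This is the right justification for the step the paper leaves implicit.
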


\begin{remark}
    By the above theorem and \Cref{obs:bp-at-most-sqrt-n}(ii), every labeling scheme for all topologies must assign $\Omega(\sqrt{n})$-bit labels on some input, which proves 
    \Cref{thm:f-lower-bound-informal}
    for the special case $f=1$.
\end{remark}

\begin{proof}[Proof of \Cref{thm:single-fault-lower-bound}]
    Denote $r = \bp(G)$.
    The proof uses the labeling scheme and the graph topology $G$ to construct a communication protocol for $\INDEX(r^2)$.
    Let $x = x_0 x_1 \cdots x_{r^2-1}$ be the input string given to Alice,
    where each $x_i \in \{0,1\}$.
    Let $i^*$ be the index given to Bob, where $0 \leq i^* \leq r^2-1$.
    On a high level, the communication protocol works as follows.
    Both Alice and Bob know the (uncolored) graph topology $G$ in advance, as part of the protocol.
    Alice colors the edges of her copy of $G$ according to her input $x$,
    and applies the labeling scheme to compute labels for the vertices and colors.
    She then sends $O(r)$ such labels to Bob, and he recovers $x_{i^*}$ by using the labels to answer a connectivity query in the colored graph.
    As the total number of sent bits is $O(b \cdot r)$, it follows by \Cref{lem:index_LB}  that $b\cdot r = \Omega(r^2)$, and hence $b = \Omega(r) = \Omega(\bp(G))$.
    The rest of this proof is devoted to the full description of the protocol.

    In order to color $G$, Alice does the following. She uses
    the color-palette 
    $\{0,1,\dots, r-1\} \cup \{\perp\}$, where
    the symbol $\perp$ is used instead of $r$ to stress that $\perp$ is a special  \emph{never failing color} in the protocol.
    Let $v_0, v_1 \dots, v_{r-1}$ be centers of $r$ disjoint proper $r$-balls in $G$, which exist by \Cref{def:ball_packing} of Ball-Packing, and since $r = \bp(G)$.
    For every $k,l \in [0, r)$,
    define
    \[
    E_{k, l}
    \bydef
    \big\{
        \{u,w\} \in E
        \mid \text{$\dist_G(v_k, u) = l$ and $\dist_G (v_k, w) = l+1$}
    \big\}.
    \]
    In other words, $E_{k,l}$ is the set of edges connecting layers $l$ and $l+1$ of the $k$-th ball $B(v_k, r)$.
    As the layers in a ball are disjoint, and the balls themselves are disjoint, the sets $\{E_{k,l}\}_{k,l}$ are mutually disjoint.
    Alice colors these edge-sets by the following rule:
    For every $i\in [0,r^2-1]$, she decomposes it as $i = kr + l$ with $l,k\in [0, r)$.
    If 
    $x_i = 1$, the edges in $E_{k,l}$ get the color $l$.
    Otherwise, when $x_i = 0$, these edges get the null-color $\perp$.
    Every additional edge in $G$, outside of the sets $\{E_{k,l}\}_{k,l}$, is also colored by $\perp$.    
    The purpose of this coloring is to ensure the following property, for $k, l \in [0,r)$ and $i = kr+l$: 
    If $x_i = 0$, then (the induced graph on) $B_G(v_k, r)$ does not contain any $l$-colored edges and its vertices are connected in $G-l$.
    However, if $x_i = 1$, then $E_{k,l}$ is colored by $l$, hence in $G-l$, $v_k$ is disconnected from every $u$ for which $\dist_G (u,v_k) > l$.
    
    Next, we describe the message sent by Alice.
    For $0 \leq k \leq r-1$,
    let $u_k \in V$ with $\dist_G (u_k, v_k) = r$,
    which exists by \Cref{def:proper_ball}, as $B_G(v_k, r)$ is a \emph{proper} $r$-ball.
    Alice applies the labeling scheme on the colored $G$,
    and sends to Bob the labels of the vertices $v_0, \dots, v_{r-1}, u_0, \dots, u_{r-1}$, and of the colors $0, \dots, r-1$.
    This amounts to $3r$ labels.

    Finally, we describe Bob's strategy.
    He decomposes $i^*$ as $i^* = k^*  r + l^*$
    with $k^*, l^* \in [0,r)$,
    and uses the labels of $v_{k^*}, u_{k^*}, l^*$ to query the connectivity of $v_{k^*}$ and $u_{k^*}$ in $G - l^*$.
    If the answer is \emph{disconnected}, Bob determines that $x_{i^*} = 1$, and if it is \emph{connected}, he determines that $x_{i^*} = 0$.
    By the previously described property of the coloring, Bob indeed recovers $x_{i^*}$ correctly.
    Thus, this protocol solves $\INDEX(r^2)$, which concludes the proof.

    This proof extends quite easily to \emph{vertex-colored} graphs; Alice can color the vertices in the $l$-th layer of $B(v_k,r)$ instead of the edges $E_{k,l}$.
\end{proof}

\subsection{Centralized Oracles and Nearest Colored Ancestors}\label{sect:1-fault-centralized}
In the \emph{centralized} setting of oracles for connectivity under one color fault, the objective is to preprocess the colored graph $G$ into a low-space centralized data structure (oracle) that, when queried with (the names/$\id$s of) two vertices $u,v \in V$ and a color $c$, can quickly report if $u$ and $v$ are connected in $G-c$.
The labeling scheme of \Cref{thm:single-fault-upper-bound} implies such a data structure with $O(n^{1.5})$ space and $O(1)$ query time.%
\footnote{
    The data structure stores all vertex labels, and the labels of all colors that appear in
    some fixed maximal spanning forest $T$ of $G$.
    We can ignore all other colors, as their failure does not change the connectivity in $G$.
}
(The bounds for centralized data structures are in the standard RAM model with $\Theta(\log n)$-bit words.)
By the lower bound of \Cref{thm:single-fault-lower-bound}, such a data structure with space $o(n^{1.5})$ cannot be ``evenly distributed'' into labels.

However, utilizing centralization, we can achieve $O(n)$ space with only $O(\log \log n)$ query time.
This is obtained by a reduction to the \emph{nearest colored ancestor} problem, studied by Muthukrishnan and M{\"{u}}ller~\cite{MuthukrishnanM96}
and by Gawrychowski, Landau, Mozes and Weimann \cite{GawrychowskiLMW18}.
They showed that a rooted $n$-vertex forest with colored vertices can be processed into an $O(n)$-space data structure, that given a vertex $v$ and a color $c$, returns the nearest $c$-colored ancestor of $v$ (or reports that none exist) in $O(\log \log n)$ time.
The reduction is as follows.
Choose a maximal spanning forest $T$ for $G$, and root each tree of the forest in the vertex with minimum $\id$.
For each vertex $u \in V$, assign it with the color $d$ of the edge connecting $u$ to its parent in $T$. Additionally, store $\cid(u, G-d)$ in the vertex $u$.
(The roots get a null-color and store their $\id$s, which are also their $\cid$s in every subgraph of $G$.)
Now, construct a nearest color ancestor data structure for $T$ as in~\cite{MuthukrishnanM96,GawrychowskiLMW18}.
Given a query $v \in V$ and color $c$, we can find the nearest $c$-colored ancestor $w$ of $v$ in $O(\log\log n)$ time.
As $w$ is nearest, the $T$-path from $v$ to $w$ in $T$ does not contain $c$-colored edges, implying that $\cid(v,G-c) = \cid(w,G-c)$, and the latter is stored at $w$.
(If no such $c$-colored ancestor exists, take $w$ as the root, and proceed similarly.)
Given $u,v \in V$ and color $c$, apply the above procedure twice, and determine the connectivity of $u,v$ in $G-c$ by comparing their $\cid$s, within $O(\log \log n)$ time.
We therefore get:
\begin{theorem}
    Every colored $n$-vertex graph $G$ can be processed into 
    an $O(n)$-space
    centralized oracle that 
    given a query of $u,v \in V$ and color $c$, reports if $u,v$ are connected in $G-c$ in $O(\log \log n)$ time.
\end{theorem}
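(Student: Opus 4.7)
The plan is a direct reduction to the nearest-colored-ancestor data structure of \cite{MuthukrishnanM96,GawrychowskiLMW18}, which supports $O(\log \log n)$-time queries on a rooted, vertex-colored forest while using $O(n)$ space. First I would fix a maximal spanning forest $T$ of $G$ and root each of its trees at its minimum-$\id$ vertex. For every non-root vertex $u$, let $d(u)$ be the color of the $T$-edge from $u$ to its parent; I would assign $u$ the color $d(u)$ in the tree instance, and precompute and store a single scalar $\cid(u, G - d(u))$ at $u$. Roots receive a dedicated never-failing color and store their own $\id$. I would then build the nearest-colored-ancestor index on the resulting colored forest.

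To answer a query $\langle u, v, c \rangle$, the next step is to locate in $O(\log\log n)$ time the nearest $c$-colored ancestor $w_u$ of $u$ in $T$ (or, if none exists, the root of $u$'s tree), and analogously $w_v$ for $v$, and then report ``connected'' iff the two values retrieved from $w_u$ and $w_v$ agree. Correctness rests on the observation that a tree edge carries the color of its child endpoint, so on the $T$-path from $u$ up to $w_u$, every intermediate vertex is strictly between $u$ and $w_u$ and therefore not $c$-colored by the choice of $w_u$; hence the path is $c$-free and $u$ is connected to $w_u$ in $G-c$, giving $\cid(u, G-c) = \cid(w_u, G-c)$. When $w_u$ is a proper ancestor this value is exactly the stored $\cid(w_u, G-d(w_u))$ since $d(w_u)=c$. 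When $w_u$ is taken as the root of $u$'s tree, the stored $\id$ of the root equals $\cid(w_u,G-c)$: the root has the minimum $\id$ in its connected component of $G$, and any subcomponent in $G-c$ still containing it inherits that minimum.

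The total space is $O(n)$, since $T$ has at most $n-1$ edges, each vertex stores $O(1)$ machine words, and the tree-indexing structure itself takes $O(n)$ space; the query time is $O(\log\log n)$, dominated by the two ancestor lookups. The main subtlety I expect to have to spell out carefully is the ``no $c$-colored ancestor'' case handled via the root's $\id$, since this is the one place where the reduction is not purely syntactic; everything else is a mechanical assembly of the spanning-forest reduction with the black-box structure of \cite{MuthukrishnanM96,GawrychowskiLMW18}, and the preprocessing of the $\cid$ values can be done by BFS in $G$ after deleting each color that appears on a tree edge, which does not affect the query-time claim of the theorem.
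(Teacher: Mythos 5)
Your proposal is correct and matches the paper's proof essentially verbatim: the same maximal spanning forest rooted at minimum-$\id$ vertices, the same recoloring of each vertex by the color of its parent edge with $\cid(u,G-d(u))$ stored at $u$, the same black-box invocation of the nearest-colored-ancestor structure of \cite{MuthukrishnanM96,GawrychowskiLMW18}, and the same root-fallback handling. The only cosmetic point worth tightening is that the case $w_u=u$ (when $u$ itself is $c$-colored) should be folded into the "$c$-colored ancestor" branch rather than excluded by the word "proper," but the argument goes through unchanged there.
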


The reduction raises an alternative approach for constructing connectivity labels for one color fault,
via providing a labeling scheme for the nearest colored ancestor problem.
In \Cref{sect:nearest-colored-ancestor-labels} we show that indeed, such a scheme with $\tilde{O}(\sqrt{n})$-bit labels exists.

\section{$f$ Color Faults}\label{sect:f-faults}

In this section, we connectivity labeling under (at most) $f$ color faults, for arbitrary $f$.

\subsection{Upper Bound}\label{sect:f-fault-upper-bound}

We provide two labeling schemes for connectivity under $f$ color faults: the first is better for small $f = o(\log \log n)$, and the second is good for larger values of $f$.
The following theorem is obtained by combining the two:

\begin{theorem}\label{thm:f-faults-upper-bound}
    Let $f \geq 1$.
    There is a randomized labeling scheme for connectivity under $\leq f$ color faults, assigning labels of length $O(\min\{f n^{1-1/2^f}, n\} \cdot \polylog(fn))$ bits on colored $n$-vertex graphs.
\end{theorem}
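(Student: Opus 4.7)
The plan is to execute the three ingredients flagged in the introduction: the colored-graph sparsifier of~\cite{PST24}, the randomized edge-fault sketches of~\cite{DoryP21}, and a recursive pivoting approach in the spirit of~\cite{ParterP22a}. The base case $f=1$ is already covered by Theorem~\ref{thm:single-fault-upper-bound} at $\tilde O(\sqrt n)$ bits, and the ``$n$'' side of the minimum is obtained trivially by storing the entire sparsified graph in every label (which has $\tilde O(n)$ edges by~\cite{PST24}). The heart of the argument is a recursive construction achieving the recurrence $B(n,f) = \tilde O(\sqrt{n\cdot B(n,f-1)})$, which unrolls to $B(n,f) = \tilde O(n^{1-1/2^f})$ and matches the target bound up to $f$ and polylog factors.

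Concretely, I would first apply the~\cite{PST24} sparsifier to obtain a subgraph $H$ of $G$ with $\tilde O(n)$ edges that preserves connectivity under every set of at most $f$ color faults. Write $\tau_f \bydef n^{1-1/2^f}$, and split the colors of $H$ at this threshold: a color is \emph{heavy} if it contributes more than $\tau_f$ edges to $H$, and \emph{light} otherwise. Since $|E(H)| = \tilde O(n)$, there are at most $\tilde O(n/\tau_f)$ heavy colors. The vertex label $L_f(v)$ then consists of (a)~a \cite{DoryP21} edge-fault sketch of $v$ in $H$, of length $\polylog n$, and (b)~for each heavy color $c'$, the recursive $(f{-}1)$-fault vertex label of $v$ in $H-c'$, of length $\tilde O(\tau_{f-1})$. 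The color label $L_f(c)$ symmetrically stores (a)~the edge list $E_c$ if $c$ is light (size $\tilde O(\tau_f)$), and (b)~for each heavy $c'$, the recursive $(f{-}1)$-fault color label of $c$ in $H-c'$. The choice $\tau_f = \sqrt{n\tau_{f-1}}$ is precisely the one that balances the two parts: $(n/\tau_f)\cdot \tau_{f-1} = \tau_f$.

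The query $\langle u,v,F\rangle$ is decided in two cases. If $F$ contains a heavy color $c'$, pivot on it and invoke the stored $(f{-}1)$-fault scheme on $H-c'$ with residual fault set $F\setminus\{c'\}$, reading the nested labels out of the slot indexed by $c'$. Otherwise every color of $F$ is light, so concatenating the edge lists stored in the $f$ color labels exposes at most $f\tau_f$ failing edges of $H$, which the~\cite{DoryP21} sketches at $u,v$ can handle. The main obstacle will be making the nested recursion bookkeeping coherent: each level-$f$ slot spawns sub-slots indexed by heavy colors of the \emph{subgraph} $H-c'$, whose heavy/light classification uses the smaller threshold $\tau_{f-1}$, so one must check that the geometric shrinkage $\tau_0 = \polylog n,\ \tau_f = \sqrt{n\tau_{f-1}}$ composes correctly and that every stored sub-label is reused consistently across the recursion tree. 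One must also union-bound the failure probability of the~\cite{DoryP21} sketches across the $O(f)$ recursion layers and all query triples, which is where the extra $\polylog(fn)$ factor in the stated bound comes from.
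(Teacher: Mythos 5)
Your recursive construction for the $f n^{1-1/2^f}$ term essentially reproduces the paper's Lemma~\ref{lem:small-f-upper-bound}: apply the~\cite{PST24} sparsifier once, split colors into heavy and light by prevalence, store nested $(f{-}1)$-fault labels (in the spirit of~\cite{ParterP22a}) for the heavy colors and~\cite{DoryP21} edge-fault material for the light ones, pivot at query time on a heavy color if one is present, and balance the threshold so that the two contributions match. The balance $\tau_f = \sqrt{n\,\tau_{f-1}}$ you posit is exactly the one the paper obtains by optimizing $\Delta(n,f)$, and your query-time case split is correct. One small correction: the label of a light color must carry the~\cite{DoryP21} edge labels $L_{\eft}(e,\cdot)$ for the edges $e\in E_c$, not merely the list of edges in $E_c$ --- the decoder needs the sketches, not the edge names. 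This only multiplies the size by a $\polylog$ factor, so the bound survives.

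The genuine gap is in your treatment of the $n$ side of the $\min$. You propose storing the entire sparsified graph in every label and assert it has $\tilde O(n)$ edges, but the guarantee from~\cite{PST24} (Theorem~\ref{lem:conn-cert}) is $O(fn\log n)$ edges, which is $n\cdot\polylog(fn)$ only when $f = \polylog(n)$. The $n$ term in the $\min$ is precisely what must take over once $f$ grows past roughly $\log\log n$, and in particular for $f = \omega(\polylog n)$ your trivial bound is $\omega(n\polylog(fn))$, which does not meet the statement. The paper's Lemma~\ref{lem:large-f-upper-bound} closes this with a separate construction: it builds $H = \bigcup_c T_c$, the union over colors of a spanning forest $T_c$ of the monochromatic subgraph $(V,E_c)$, and runs~\cite{DoryP21} on $H$. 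Since $T_c$ is a forest, a single color contributes at most $n-1$ edges to $H$, so a color's label stores at most $n-1$ edge-fault sketches, i.e.\ $O(n\,\polylog(fn))$ bits \emph{independently of $f$}; one also checks that $H-F$ and $G-F$ have the same connectivity for all $|F|\le f$, because any surviving $G$-edge $\{u,v\}$ of color $c\notin F$ is replaced by a $u$-$v$ path inside $T_c\subseteq H-F$. You need this (or some equivalent) argument; the trivial approach alone does not reach the stated bound for large $f$.

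Finally, a small bookkeeping note: because~\cite{PST24} gives $O(fn\log n)$ rather than $\tilde O(n)$ edges, the heavy-color count is $\tilde O(fn/\tau_f)$ rather than $\tilde O(n/\tau_f)$, which is where the extra $f$ (more precisely, an $(fn)^{1-1/2^f}$-type base) in the final bound comes from. Your threshold $\tau_f = n^{1-1/2^f}$ silently drops this $f$; it does not break the argument, but the constants in the recurrence should be written with $fn$ in place of $n$ to match the theorem statement.
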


We state two ingredients required by our scheme.
The first is \emph{color fault-tolerant connectivity certificates}, recently constructed by \cite{PST24}.
The second is connectivity labels for \emph{edge faults} by Dory and Parter \cite{DoryP21}:

\begin{theorem}[Color fault-tolerant connectivity certificates~\protect{\cite[Theorem 21]{PST24}}]\label{lem:conn-cert}
    Given a colored $n$-vertex graph $G$, one can compute in polynomial time a subgraph $H$ with $O(fn \log n)$ edges, 
    that is an $f$-color fault-tolerant connectivity certificate: for all $u,v\in V$ and sets $F$ of at most $f$ colors, $u,v$ are connected in $G-F$ iff they are connected in $H-F$.
\end{theorem}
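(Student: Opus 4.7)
The plan is to adapt Borůvka's algorithm to the color-fault setting. I initialize $H = \emptyset$ and a partition $\mathcal{S}$ of $V$ into singletons, and then run $L = O(\log n)$ phases. In each phase, for every super-component $S \in \mathcal{S}$, I select up to $f+1$ cut edges (edges with exactly one endpoint in $S$) of \emph{pairwise distinct} colors, add all of them to $H$, and then update $\mathcal{S}$ by merging super-components joined by any newly added edges. The size analysis is immediate: each phase contributes at most $(f+1)\cdot|\mathcal{S}| \leq (f+1)n$ edges, so $|E(H)| = O(fn\log n)$. The polynomial runtime is clear since each phase does standard edge-scanning and union-find.

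For correctness, I fix any fault set $F$ with $|F|\leq f$ and show that the components of $H-F$ coincide with those of $G-F$. The direction $H-F \Rightarrow G-F$ is trivial since $H\subseteq G$. For the converse, I would argue, by induction on $\ell$, that the phase-$\ell$ partition $\mathcal{S}_\ell$ refines the partition of $V$ into $G-F$-components, and moreover that inside every $G-F$-component $C^\star$, the number of $\mathcal{S}_\ell$-parts intersecting $C^\star$ at least halves (compared to phase $\ell-1$) until it reaches $1$. The key observation is: whenever a super-component $S$ intersecting $C^\star$ is not equal to $C^\star$, there is some cut edge of $S$ in $G-F$, i.e., of color outside $F$; because the algorithm picked $f+1$ cut edges of distinct colors at $S$, at least one of these picks has color outside $F$, and hence merges $S$ with a neighboring super-component in $H-F$. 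This mirrors Borůvka's halving \emph{inside} $H-F$, and after $L = O(\log n)$ phases every $G-F$-component is realized as one connected piece of $H-F$.

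The main obstacle is the case where a super-component $S$ has fewer than $f+1$ distinct cut-edge colors, all of which happen to lie in $F$: then no ``good'' edge is available and $S$ stalls. The saving grace is that this situation forces $S$ to be separated from $V\setminus S$ in $G-F$ entirely, so $S$ is a \emph{union} of $G-F$-components and only the \emph{internal} connectivity of $G[S]-F$ matters. Handling this requires either a recursive structure — running the algorithm on $G[S]$ as a sub-instance with the same color palette — or, more elegantly, rephrasing the construction so that cut edges for merging and ``spanning'' edges for internal connectivity are selected simultaneously in every phase. Either way, a careful bookkeeping argument ensures the per-phase edge count still obeys $O(fn)$ in total across the recursion, preserving the $O(fn\log n)$ bound.

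A subtlety I would need to verify is the exact interplay between merges in $H$ and merges in $H-F$: while $\mathcal{S}$ evolves according to $H$ (independently of $F$), the halving analysis must happen inside $H-F$ for every $F$ simultaneously. I would formalize this by a potential-function argument $\Phi_\ell(F) = \sum_{C^\star}(\text{number of }\mathcal{S}_\ell\text{-parts intersecting } C^\star)$ and show $\Phi_\ell(F) \leq \max\{(\text{\# of }G-F\text{-components}), \Phi_{\ell-1}(F)/2\}$, which together with $\Phi_0(F) \leq n$ yields termination within $O(\log n)$ phases uniformly over all $F$ of size at most $f$.
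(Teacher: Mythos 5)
The statement you're proving is cited by this paper from [PST24, Theorem~21] rather than proved here, so there is no in-paper proof to compare against; I'll evaluate your argument on its own merits.

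Your Bor\r{u}vka-style construction has a genuine correctness gap, and it is not the ``stalling'' issue you flag as the main obstacle. The problem is that the partition $\mathcal{S}_\ell$ tracks connectivity in $H_\ell$, while the guarantee you need concerns connectivity in $H_\ell-F$; a single $\mathcal{S}_\ell$-part can contain several $H_\ell-F$-components (and intersect several $G-F$-components), because merges can happen over $F$-colored edges. Once a super-component $S$ straddles a $G-F$-component $C^\star$, the $f+1$ picked cut edges of $S$ may all help other $G-F$-components touching $S$, so your potential $\Phi_\ell(F)$ reaching $1$ (meaning $C^\star$ lies inside a single $\mathcal{S}_\ell$-part) does \emph{not} imply that $C^\star$ is connected in $H-F$. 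Your claim that $\mathcal{S}_\ell$ ``refines the partition of $V$ into $G-F$-components'' is simply false.

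Here is a concrete failure with $f=1$. Take $V=\{a,b,p,q,r\}$ and edges $ab$ of color $c_1$, $ap$ and $bp$ of color $c_0$, $aq$ of color $c_2$, $br$ of color $c_3$. In phase~$1$, $\{a\}$ sees cut-edge colors $\{c_0,c_1,c_2\}$ and may (adversarially) pick $\{c_0,c_2\}$, i.e.\ $ap,aq$; $\{b\}$ sees $\{c_0,c_1,c_3\}$ and may pick $\{c_0,c_3\}$, i.e.\ $bp,br$; $\{p\},\{q\},\{r\}$ pick $ap,aq,br$ respectively. Thus $H=\{ap,aq,bp,br\}$, all five vertices merge into one $H$-component, the algorithm halts, and $ab\notin H$. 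Now let $F=\{c_0\}$: in $G-F$ the edge $ab$ is present, so $a$ and $b$ are connected; in $H-F=\{aq,br\}$ they are not. So $H$ is not a valid $1$-color-fault certificate. Note that every super-component always had at least $f+1$ distinct cut-edge colors, so this is orthogonal to the ``fewer than $f+1$ colors'' scenario you discuss.

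The issue is that picking $f+1$ distinctly-colored cut edges per super-component ensures some merge in $H-F$, but not a merge that is \emph{useful for each} $G-F$-component intersecting $S$. Fixing this requires a construction that controls connectivity within each super-component after deleting $F$-colored edges (for all $F$ simultaneously), not just connectivity of the super-component itself; your proposal as written does not do this, and the potential-function argument does not rescue it.
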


\begin{theorem}[Connectivity labels for edge faults \protect{\cite[Theorem 3.7]{DoryP21}}]\label{thm:edge-fault-lables}
    There exists a randomized labeling scheme that, when given a multi-graph $G=(V,E)$ with $n$ vertices and $m$ edges, assigns labels of $O(\log^3 n + \log m)$ bits to $V \cup E$,
    such that given the labels of any two vertices $u,v \in V$ and of the edges in $E' \subseteq E$, one can correctly determine, with high probability, if $u$ and $v$ are connected in $G - E'$. 
    Note that the label length is independent of $|E'|$, the number of faulty edges.
\end{theorem}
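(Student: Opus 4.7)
The plan is to construct the labels by adapting the well-known connectivity sketches of Ahn, Guha, and McGregor (AGM), originally designed for semi-streaming algorithms on dynamic graphs. The central property these sketches enjoy is \emph{linearity}: the bit-wise XOR of sketches of the vertices in any set $S \subseteq V$ equals a sketch of the boundary edges $\partial(S)$, because every internal edge cancels across its two endpoints. This property is what will drive a Boruvka-style query procedure.

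To set up the labels, I would first assign every edge $e \in E$ a distinct identifier $\id(e) \in \{0,1\}^{O(\log m)}$, which doubles as its label. For each vertex $v$, I would form $\sigma(v)$ as a concatenation of $\Theta(\log n)$ independent ``levels''; at level $i \in [0, \log n)$ each incident edge is virtually sampled with probability $2^{-i}$ and its identifier contributes via a random $\mathbb{F}_2$-linear map to an $\ell_0$-sampler of $O(\log^2 n)$ bits. By construction, the contribution of an edge $e=\{u,w\}$ cancels between $\sigma(u)$ and $\sigma(w)$, so XOR-ing sketches over any vertex set $S$ yields a sketch of $\partial(S)$ only. The total vertex label is $O(\log^3 n)$ bits, plus the $O(\log m)$ edge identifier when labeling elements of $E$. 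Crucially, the label length does not depend on $|E'|$.

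For the query procedure, I would simulate Boruvka's algorithm starting from the singleton component $C=\{u\}$. At each of $O(\log n)$ rounds, compute the cut-sketch of $C$, subtract the contributions of the edges in $E'$ (whose identifiers are supplied in the query), and $\ell_0$-sample at the appropriate level (where the surviving cut has expected $\Theta(1)$ edges) to recover a single edge $e \in \partial(C)\setminus E'$. That edge identifies the next vertex $w$ to merge into $C$. After $O(\log n)$ rounds, $C$ either absorbs $v$ (declare connected) or stabilizes without doing so (declare disconnected). Correctness of each $\ell_0$-sampling step holds with probability $1-1/\polylog(n)$, and a union bound over the $O(\log n)$ rounds and $O(\log n)$ levels yields the desired high-probability guarantee.

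The main obstacle, and the reason one cannot merely cite AGM off-the-shelf, is that the querier only has access to $\sigma(u)$, $\sigma(v)$, and the labels of edges in $E'$; the sketches of other vertices being absorbed into $C$ are \emph{not} available. The resolution must exploit linearity more carefully: when a new edge $e=\{w_1,w_2\}$ is recovered with $w_1 \in C$, the cut-sketch must be updated from $\sigma(\partial C)$ to $\sigma(\partial(C\cup\{w_2\}))$ without knowing $\sigma(w_2)$. The key insight is that this update can be performed using $\id(e)$ alone, because $e$ itself toggles its own contribution (it ceases to be a boundary edge), while the other edges incident to $w_2$ remain implicitly represented in the aggregate sketch through the linear structure, and are handled inductively by subsequent Boruvka rounds. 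Calibrating the random hashes so that these incremental updates are exact, and handling the dependence between rounds so that the overall failure probability is polynomially small, is the technical heart of the construction.
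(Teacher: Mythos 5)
The paper cites this result as a black box; the proof to compare against is the one in Dory--Parter [DoryP21, Thm.\ 3.7], which your proposal is attempting to reconstruct.

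You have correctly identified both the right ingredient (linear AGM sketches plus Boruvka) and the central obstacle (during Boruvka the querier does not hold the sketches of vertices absorbed into the growing component). However, the resolution you propose does not work. When an edge $e=\{w_1,w_2\}$ with $w_1 \in C$ is recovered and you wish to pass from the cut-sketch of $C$ to the cut-sketch of $C \cup \{w_2\}$, knowing $\id(e)$ only lets you toggle off $e$ itself. You still need to \emph{add} the contributions of every other edge incident to $w_2$, and that information is precisely $\sigma(w_2)$, which you do not have. The claim that those edges ``remain implicitly represented in the aggregate sketch through the linear structure'' is false: by construction $\sigma(\partial C)$ is the XOR of $\{\sigma(x) : x \in C\}$ and carries zero information about edges both of whose endpoints lie outside $C$. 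Relatedly, your plan assigns each edge a label that is only its $O(\log m)$-bit identifier, but the theorem explicitly gives edge labels $O(\log^3 n + \log m)$ bits --- the extra $O(\log^3 n)$ is not slack; it is where the missing information lives.

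The actual fix in [DoryP21] is to abandon vertex-by-vertex growth in favor of a spanning tree $T$. Each tree edge $e$, with child endpoint $v$, stores in its label the AGM sketch of the subtree $T_v$ (and vertex labels store analogous subtree sketches). When the fault set $E'$ is revealed, the failed tree edges partition $T$ into at most $|E'|+1$ fragments, and crucially, the sketch of \emph{every} fragment --- not just the ones containing $u$ and $v$ --- can be reconstructed from the labels in hand: a fragment rooted at $w$ with failed child-edges to $c_1,\dots,c_k$ is $T_w \setminus (T_{c_1} \cup \cdots \cup T_{c_k})$, and its sketch is the XOR of the stored subtree sketches of the bounding failed edges. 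Boruvka is then run on these $O(|E'|)$ fragments as super-vertices, and at each merge the new super-vertex's sketch is obtained by XOR-ing two fragment sketches that are both available, which is what makes the round-to-round update sound. This is a qualitatively different organization of the Boruvka process than your singleton-growth scheme, and it is exactly the step your argument is missing.
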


\subsubsection{Labeling Scheme for $f = o(\log \log n)$}

Parter and Petruschka~\cite{ParterP22a} gave a recursive construction of labels for $f$ \emph{vertex} faults, by combining the sparse vertex-connectivity certificates of \cite{NagamochiI92} with the labels for edge faults of \cite{DoryP21}.
The sparsification of \Cref{lem:conn-cert} allows us to extend this technique to handle color faults.

\begin{lemma}\label{lem:small-f-upper-bound}
    There is a randomized labeling scheme for connectivity under $\leq f$ color faults, assigning labels of length ${O}(f n^{1-1/2^f} \polylog(fn))$ bits on $n$-vertex graphs.
\end{lemma}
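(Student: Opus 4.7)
The plan is to proceed by induction on $f$, using \Cref{thm:single-fault-upper-bound} as the base case $f=1$ (providing deterministic labels of $\tilde{O}(\sqrt{n})$ bits). For the inductive step I combine the color-fault connectivity certificate of \Cref{lem:conn-cert} with the edge-fault labels of \Cref{thm:edge-fault-lables} via a heavy/light decomposition of the colors, in the spirit of the recursive construction used for vertex faults in~\cite{ParterP22a}. First, apply \Cref{lem:conn-cert} to replace $G$ by a sparse $f$-color fault-tolerant connectivity certificate $H$ with $m_H = O(fn\log n)$ edges; all queries can equivalently be answered on $H$ in place of $G$.

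Fix a threshold $T$ (to be chosen later), and call a color \emph{heavy} if it owns at least $T$ edges in $H$, and \emph{light} otherwise. Then there are at most $K = m_H/T = \tilde{O}(fn/T)$ heavy colors; fix an arbitrary enumeration $c_1, \ldots, c_K$ of them. For every heavy $c_i$, recursively invoke the $(f-1)$-color-fault scheme on the graph $H - c_i$, obtaining labels of length $b(n, f-1)$. Each vertex $v$'s label stores, for each $i \in [K]$, its $(f-1)$-fault label in $H - c_i$, together with its edge-fault label from \Cref{thm:edge-fault-lables} applied to $H$. Each color $c$'s label stores: its heavy/light status (and its index $i$ if heavy); for each $i \in [K]$, its $(f-1)$-fault label in $H - c_i$; and the edge-fault labels of all its $H$-edges.

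Given the labels for a query $\langle u, v, F \rangle$ with $|F| \leq f$, split $F = F_H \sqcup F_L$ using the heavy/light markers. If $F_H \neq \emptyset$, pick any $c^* \in F_H$ and answer via the stored $(f-1)$-fault labels for $H - c^*$ on the fault set $F \setminus \{c^*\}$, which has size at most $f-1$; this is correct because $u, v$ are connected in $G - F$ iff in $H - F = (H - c^*) - (F \setminus \{c^*\})$. If $F_H = \emptyset$, gather the edge-fault labels of all $H$-edges in $\bigcup_{c \in F_L} E_H(c)$ from the color labels and invoke the Dory--Parter edge-fault scheme on this set; this is well-defined since \Cref{thm:edge-fault-lables} handles any number of faulty edges, and each light color contributes fewer than $T$ edge labels.

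The vertex label has size $K \cdot b(n, f-1) + \tilde{O}(1) = \tilde{O}(fn/T) \cdot b(n, f-1)$, and the color label has size $\tilde{O}(fn/T) \cdot b(n, f-1) + \tilde{O}(T)$, where the $\tilde{O}(T)$ accounts for the edge-fault labels of its $H$-edges. Setting $T = \sqrt{fn \cdot b(n, f-1)}$ balances the two expressions and yields the recursion $b(n, f) = \tilde{O}\bigl(\sqrt{fn \cdot b(n, f-1)}\bigr)$; a short induction starting from $b(n, 1) = \tilde{O}(\sqrt{n})$ then gives $b(n, f) = \tilde{O}(f n^{1-1/2^f})$ as claimed. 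The main subtlety I expect is probability amplification: each recursion level spawns a new randomized sub-scheme (both Dory--Parter's and the recursive one), so every sub-scheme must be amplified by a $\polylog(fn)$ factor and a union bound applied over the $O(f)$ levels and over all query components; the resulting overhead is absorbed in the $\polylog(fn)$ factor of the statement. A secondary technical point is ensuring the heavy-color enumeration is consistent across all labels, which is handled by storing each heavy color's global index in its label.
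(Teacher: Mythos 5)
Your proposal follows essentially the same route as the paper's proof: sparsify the input to a color fault-tolerant connectivity certificate with $\tilde{O}(fn)$ edges, split colors into heavy and light by an edge-count threshold, store recursive $(f{-}1)$-fault labels for the (at most $\tilde{O}(fn/T)$) heavy colors and Dory--Parter edge-fault labels for light colors, and balance the two terms to obtain $b(n,f)=\tilde{O}\bigl(\sqrt{fn\cdot b(n,f-1)}\bigr)$ and hence $\tilde{O}(fn^{1-1/2^f})$. One small slip in the writeup: you say \emph{every} color's label stores the edge-fault labels of all its $H$-edges, which for a heavy color could be $\tilde{O}(fn)$ bits; your $\tilde{O}(T)$ length bound and your query procedure (which only consults these for light colors) make clear the intent is to store them only for light colors, matching the paper, so this is a phrasing issue rather than a gap.
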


    The idea is to construct labels for $f$ color faults by combining the labels for edge faults of Dory-Parter \cite{DoryP21} (\Cref{thm:edge-fault-lables}) with recursively defined labels for $f-1$ faults.
    To this end, we classify the colors according to their \emph{prevalence} in the given input graph $G$.
    Let $\mathcal{H}$ be the set of \emph{high prevalence} colors consisting of every color $c$ that appears at least $\Delta = \Delta(n,f)$ times in $G$, where $\Delta$ is a threshold to be optimized later.
    Let $\mathcal{R}$ denote the rest of the colors, not in $\mathcal{H}$,
    that have \emph{low prevalence}.
    At a high level, the failure of any color with high prevalence $c \in \mathcal{H}$ is handled by recursively invoking labels for $f-1$ color faults, but in the graph $G-c$.
    The complementary case, where all failing colors have low prevalence, is treated using the edge labels of \cite{DoryP21}, which are crucially capable of handling any number of individual edge faults.

    We use the following notations.
    Let $G'$ be a subgraph of $G$.
    The function $L_{f-1} (\cdot, G')$ denotes the labels assigned to the vertices and colors of $G'$ by the (recursively defined) labeling scheme for $f-1$ color faults.
    The function $L_{\eft}(\cdot, G')$ denotes the labels assigned to the vertices and edges of $G'$ by the labeling scheme of \Cref{thm:edge-fault-lables}.
    For a color $c$ in $G$, let $E_c \subseteq E$ be the subset of $G$-edges with color $c$.
    
    \medskip
    \highlight{Labeling.}
    The labeling procedure is presented as \Cref{alg:f-fault-labels}.
    The labels $L_1(\cdot, G)$ (the base case $f=1$) are given by \Cref{thm:single-fault-upper-bound}.

    \begin{algorithm}[h]
    \caption{Creating the labels}\label{alg:f-fault-labels}
    \begin{algorithmic}[1]
    \Require Colored $n$-vertex graph $G$, fault parameter $f \geq 2$
    \Ensure Labels $L_f (v, G)$ for each vertex $v \in V$, and $L_f (c, G)$ for each color $c$
    \State $\Delta \gets \Delta(n,f)$  \Comment{prevalence threshold}
    \State $\mathcal{H} \gets$ set of colors $c$ with $|E_c| \geq \Delta$
    \State $\mathcal{R} \gets$ set of colors $c$ with $|E_c| < \Delta$
    \For{each vertex $v \in V$}
        \Comment{create the label $L_f(v, G)$}
        \State \textbf{store} in $L_f(v,G)$ the labels $L_{f-1}(v, G-h)$ for every $h \in \mathcal{H}$
        \State \textbf{store} in $L_f(v,G)$ the label $L_{\eft}(v, G)$
    \EndFor
    \For{each color $c$}
        \Comment{create the label $L_f(c,G)$}
        \State \textbf{store} in $L_f(c,G)$ the labels $L_{f-1}(c, G-h)$ for every $h \in \mathcal{H}$
        \If{$c \in \mathcal{R}$}
            \State \textbf{store} in $L_f(c,G)$ the labels $L_{\eft}(e, G)$ of every $e \in E_c$
        \EndIf
    \EndFor
    \end{algorithmic}
    \end{algorithm}

    \medskip
    \highlight{Answering queries.}
    Let $u,v \in V$, and let $F$ be a set of at most $f$ colors.
    Given $L(u),L(v)$ and $\{L(c) \mid c \in F\}$, we should determine if $u$ and $v$ are connected in $G-F$.
    There are two cases:

    \begin{enumerate}
        
    \item
    If $F \cap \mathcal{H} \neq \emptyset$:
    Let $h \in F \cap \mathcal{H}$, and denote $F' = F-\{h\}$.
    Note that $|F'| \leq f-1$.
    The $L_{f-1}(\cdot, G-h)$-labels of $u,v$ and of every $c \in F'$ are stored in their respective $L_f(\cdot, G)$-labels.
    By induction, we use these to determine (w.h.p) if $u,v$ are connected in $(G-h)-F' = G-F$.

    \item
    If $F \cap \mathcal{H} = \emptyset$:
    Then $F \subseteq \mathcal{R}$,
    so for every $c \in F$, the labels $L_{\eft}(e, G)$ of every $e \in E_c$ are found in $L_f (c, G)$.
    The $L_{\eft}(\cdot, G)$-labels of $u,v$ are found in their respective $L_f (\cdot, G)$-labels.
    By \Cref{thm:edge-fault-lables},
    using these we can determine (w.h.p) if $u,v$ are connected in $G - \bigcup_{c \in F} E_c = G-F$.
    
    \end{enumerate}

    \medskip
    \highlight{Length analysis.}
    Without loss of generality, we may assume that $G$ has $O(fn \log(fn))$ edges; otherwise, we can replace $G$ by its subgraph given by \Cref{lem:conn-cert}.
    Since $|E(G)| = O(fn \log(fn))$, then
    $|\mathcal{H}| = O(\Delta^{-1} fn\log(fn)))$.
    Let $b(n,f-1)$ be a bound on the bit-length of an $L_{f-1}(\cdot, \cdot)$ label assigned for an $n$-vertex graph.
    Define $b(n,f)$ similarly for $L_f(\cdot, \cdot)$.
    The largest $L_f(\cdot,G)$-labels are given for colors from $\mathcal{R}$: they store $|\mathcal{H}|$ of the $L_{f-1}(\cdot, \cdot)$-labels and $\Delta$ of the $L_{\eft}(\cdot,\cdot)$ labels.
    This gives the following recursion:
    \[
    b(n,f) = O\big( b(n,f-1) \cdot \Delta^{-1}fn\log(fn) + \paren{\log^3 n + \log(fn)} \cdot \Delta \big).
    \]
    To minimize the sum, we set $\Delta$ to make both terms equal, so that:
    \[
    \Delta = \Delta(n,f) = \sqrt{fn \log(fn) b(n,f-1) / (\log^3 n + \log(fn))},
    \]
    \[
    b(n,f) = O\paren{\sqrt{b(n,f-1) fn \log(fn) \paren{\log^3 n + \log(fn)}}}.
    \]
    Solving this recursion, with base case $b(n,1) = O(\sqrt{n} \log n)$ given by \Cref{thm:single-fault-upper-bound}, yields
    \[
    b(n,f) = O\paren{\sparen{fn \log(fn) (\log^3 n + \log(fn))}^{1 - 1/2^f} }.
    \]
    This concludes the proof of \Cref{lem:small-f-upper-bound} for the edge-colored case.
The proof extends easily to vertex-colored graphs, by classifying a color $c$ as having high prevalence if the \emph{volume} (sum of degrees) of the set of vertices with color $c$ is above the threshold $\Delta$.

\subsubsection{Labeling Scheme for $f = \Omega(\log \log n)$}

\begin{lemma}\label{lem:large-f-upper-bound}
    There is a randomized labeling scheme for connectivity under $f$ color faults, assigning labels of length ${O}(n \cdot \polylog(fn))$ bits on $n$-vertex graphs.
\end{lemma}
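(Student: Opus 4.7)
The plan is a two-stage sparsification followed by a black-box application of the edge-fault scheme of \Cref{thm:edge-fault-lables}. First, I will invoke \Cref{lem:conn-cert} to replace $G$ by a color-fault-tolerant connectivity certificate $H$ with only $O(fn\log n)$ edges, preserving the connectivity of $G-F$ for every color set $F$ with $|F|\leq f$. Second, for each color $c$ appearing in $H$, I will pick a spanning forest $T_c$ of the (uncolored) subgraph of $c$-colored edges in $H$, and form the colored multigraph $H' = \bigcup_c T_c$, so that each color contributes at most $n-1$ edges to $H'$.

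The key combinatorial claim I need is that this per-color sparsification preserves all fault-tolerant connectivity: for every color set $F$, vertices $u,v$ are connected in $H-F$ iff they are connected in $H'-F$. One direction follows from $H'\subseteq H$. For the other, given any $u$-$v$ path in $H-F$ with edges $e_1,\dots,e_k$ of colors $c_1,\dots,c_k\notin F$, the endpoints of each $e_i$ lie in the same component of the color-$c_i$ subgraph of $H$, hence in the same component of $T_{c_i}\subseteq H'-F$; concatenating the intra-$T_{c_i}$ subpaths yields a $u$-$v$ walk in $H'-F$.

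With the claim in hand, I will define $L_f(v,G):=L_{\eft}(v,H')$ for each vertex $v$, and let $L_f(c,G)$ be the concatenation of $L_{\eft}(e,H')$ over $e\in T_c$ for each color $c$, using the edge-fault labels from \Cref{thm:edge-fault-lables}. Since $|T_c|\leq n-1$ and each edge-fault label uses $O(\log^3 n+\log|E(H')|)=O(\polylog(fn))$ bits, every label has length $O(n\cdot\polylog(fn))$, matching the lemma's bound. A query $\ang{u,v,F}$ is then answered by extracting the edge labels stored in $\{L_f(c,G)\}_{c\in F}$ (whose total count could reach $f(n-1)$, but is irrelevant to \Cref{thm:edge-fault-lables}'s per-label guarantee) and running the Dory--Parter decoder on $L_{\eft}(u,H')$, $L_{\eft}(v,H')$ and these edge labels; w.h.p.\ this returns the connectivity of $u,v$ in $H'-F$, which by the claim above and \Cref{lem:conn-cert} equals the connectivity in $G-F$. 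The only non-routine step is the per-color spanning-forest claim, which I expect to go through cleanly; the main design idea is that $H'$ can afford a very large \emph{global} edge count while each color stores only its own $T_c$, so per-color labels remain $\tilde{O}(n)$ regardless of $|E(H')|$.
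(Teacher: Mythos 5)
Your proposal is correct and mirrors the paper's own proof: both first sparsify via \Cref{lem:conn-cert}, then take a spanning forest $T_c$ per color and union them into a subgraph whose $F$-fault connectivity matches $G$'s, and finally apply \Cref{thm:edge-fault-lables} with vertex labels $L_{\eft}(v,\cdot)$ and color labels concatenating $L_{\eft}(e,\cdot)$ over $e\in T_c$. The only cosmetic difference is that your justification of the key sparsification claim traces a full path and replaces each edge by a walk in its color's forest, whereas the paper reduces directly to the single-edge case; the underlying argument is the same.
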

\begin{proof}
    Let $G = (V,E)$ be the input colored $n$-vertex graph with colors from $[C]$.
    By \Cref{lem:conn-cert}, we may assume that $G$ has $\tilde{O}(fn)$ edges.
    For a color $c$, denote by $E_c \subseteq E$ the set of edges with color $c$,
    and let $T_c$ be a spanning forest of the subgraph $(V,E_c)$.
    Finally, let $H = \bigcup_{c\in [C]} T_c$. 
    We show that for all $F \subseteq [C]$, every pair of vertices are connected in $H-F$ iff they are connected in $G-F$.
    It suffices to prove that if $e = \{u,v\}$ is an edge of $G-F$, then there is some $u$-$v$ path in $H-F$.
    If $e$ has color $c \notin F$, then
    by construction of $T_c$, there is a $c$-colored path in this forest connecting $u$ and $v$,
    which is also present in $H-F$.

    \medskip
    \highlight{Labeling.}
    Apply the Dory-Parter \cite{DoryP21} scheme of \Cref{thm:edge-fault-lables} on $H$, resulting in labels for the vertices and edges of $H$, denoted $L_{\eft}(\cdot, H)$.
    The label $L(v)$ of a vertex $v \in V$ simply stores $L_{\eft}(v, H)$.
    The label $L(c)$ of a color $c$ stores $L_{\eft}(e, H)$ of \emph{every} edge $e \in E(T_c)$.
    The claimed length bound is immediate, as storing a single $L_{\eft}(\cdot, H)$-label requires $O(\log^3 n + \log(fn))$ bits, and $|E(T_c)| \leq n-1$.

    \medskip
    \highlight{Answering queries.}
    Let $u,v \in V$ and $F \subseteq [C]$.
    The labels $L(u)$, $L(v)$ and $\{L(c) \mid c \in F\}$,
    stores the $L_{\eft}(\cdot, H)$-labels of $u,v$ and every $e \in \bigcup_{c\in F} E(T_c)$.
    Using these, we can, with high probability, determine the connectivity of $u,v$ in $H - \bigcup_{c\in F} E(T_c) = H-F$, and hence also in $G-F$, with high probability.

    \medskip
    The proof extends to vertex colors, where $E_c$ is the set of edges that touch the color $c$.
\end{proof}

\Cref{thm:f-faults-upper-bound} follows by combining \Cref{lem:small-f-upper-bound} and \Cref{lem:large-f-upper-bound}.

\subsection{Lower Bound}\label{sect:f-faults-lower-bound}

We next provide a lower bound that generalizes the $\Omega(\sqrt{n})$-bit lower bound for the case $f=1$ of \Cref{thm:single-fault-lower-bound}.
However, in contrast to \Cref{thm:single-fault-lower-bound}, this lower bound is \emph{existential}, namely, it relies on some specific `worst-case' topology.

\begin{theorem}\label{thm:f-faults-lower-bound}
Let $f \geq 1$ be a fixed constant.
Every (possibly randomized) labeling scheme for connectivity under $f$ color faults in $n$-vertex graphs must have label length of $\Omega(n^{1-1/(f+1)})$ bits.
Furthermore, this bound holds even for labeling schemes restricted to simple planar graphs.
\end{theorem}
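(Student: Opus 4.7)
The plan is to reduce from $\INDEX(N)$ with $N=\Theta(n)$, following the $f=1$ template of Theorem~\ref{thm:single-fault-lower-bound} but with an encoding that uses $f$-element \emph{subsets} of the colour palette to address individual bits. Let $L=\Theta(n^{1/(f+1)})$. I would take the topology to be the disjoint union of $L$ vertex-disjoint ``data paths'' $P_1,\dots,P_L$, where each $P_k$ has $\binom{L}{f}$ consecutive ``slots,'' one for every $f$-subset $S\subseteq[L]$. Alice's INDEX input is a string $x\in\{0,1\}^N$ with $N=L\binom{L}{f}=\Theta(n/f!)$ indexed by pairs $(k,S)$, and she colours the slot $(k,S)$ using the palette $\{c_1,\dots,c_L,\perp\}$ in one of two ways: if $x_{k,S}=0$, the slot is a single edge of colour $\perp$; if $x_{k,S}=1$, the slot is a ``parallel gadget'' of $f$ internally-disjoint length-two paths between the two slot endpoints, where both edges of the $\ell$-th strand (for $\ell\in S$) are coloured $c_\ell$. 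Each strand introduces one fresh degree-two vertex, which keeps the graph \emph{simple}, and each slot embeds in the plane as a ``theta-gadget,'' so gluing slot gadgets serially along $P_k$ and taking the disjoint union over $k$ yields a \emph{planar} simple graph with $\Theta(fL^{f+1})=\Theta(n)$ vertices for fixed $f$.

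For the protocol, Alice computes the labels of the coloured $G$ and sends Bob the $2L$ labels of the two endpoints of each $P_k$ together with the $L$ labels of the colours $c_1,\dots,c_L$, for a total of $3L=O(n^{1/(f+1)})$ labels. To recover index $i^*=(k^*,S^*)$, Bob forms the fault set $F=\{c_\ell:\ell\in S^*\}$ of size exactly $f$ and queries connectivity of the two endpoints of $P_{k^*}$ in $G-F$. Correctness follows from the key observation that on $P_{k^*}$ the slot $(k^*,S^*)$ with bit $1$ has every strand cut by $F$ (its strand colours are exactly $F$), while any other bit-$1$ slot $(k^*,S)$ with $S\neq S^*$ keeps the strand of any $\ell\in S\setminus S^*$ (nonempty, since $S,S^*$ are distinct $f$-subsets of equal size), and any bit-$0$ slot keeps its $\perp$-edge. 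Hence $P_{k^*}$ becomes disconnected in $G-F$ if and only if $x_{k^*,S^*}=1$, so Bob recovers $x_{i^*}$ with high probability; a union bound (or boosting) absorbs the at most $N=\poly(n)$ many possible queries. Plugging into Lemma~\ref{lem:index_LB} gives $3L\cdot b=\Omega(N)$, hence $b=\Omega(n^{1-1/(f+1)}/f!)=\Omega(n^{1-1/(f+1)})$ for constant $f$.

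The main technical care point, rather than a real obstacle, is bookkeeping around the constants: the subdivision needed for simplicity and the $f!$ loss incurred by passing from ordered $f$-tuples to unordered $f$-subsets must both be swept into hidden constants, which is fine for fixed $f$ but would degrade visibly as $f\to\infty$ (matching the statement's ``fixed constant $f$'' hypothesis, and explaining why we only claim $\Omega(n^{1-o(1)})$ for $f=\omega(1)$). For the vertex-coloured variant I would simply assign the colour $c_\ell$ to the fresh degree-two subdivision vertex of the $\ell$-th strand instead of to its two edges; the rest of the argument is unchanged, and the reduction of Section~\ref{sect:prelim} could alternatively be invoked as a black box. Planarity and simplicity are preserved throughout because the slot gadgets are individually planar/simple and are combined along vertex-disjoint paths.
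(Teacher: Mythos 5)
Your proposal is correct and takes essentially the same route as the paper. The paper's ``$f$-thick spider'' construction (Theorem~\ref{thm:f-faults-lower-bound}) likewise uses a palette of $\Theta(n^{1/(f+1)})$ colors, addresses each position along an arm by a distinct $f$-subset of the palette via a fixed bijection $F$, and colors the $f$ parallel edges at position $l$ with the $f$ colors of $F(l)$ when the corresponding bit is $1$ (and with $\perp$ when it is $0$); correctness rests on exactly the observation you make, that two distinct $f$-subsets of equal size must differ in some element, so every slot other than the queried one retains a surviving strand. The only cosmetic differences are that the paper glues the arms at a shared source $s$ (so Bob queries $s$ vs.\ $t_{k^*}$, rather than the two endpoints of a free-standing path) and starts from parallel multi-edges which it then subdivides to obtain a simple planar graph, whereas you build the subdivided ``theta-gadget'' directly. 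Neither difference affects the count of labels sent or the final bound.

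One small remark: your appeal to a union bound (or boosting) over the $\poly(n)$ possible queries is unnecessary. In a single execution of the protocol Bob issues exactly one connectivity query, which the labeling scheme answers correctly with probability $1-1/n^{\alpha}$; this already meets the bounded-error requirement of Lemma~\ref{lem:index_LB}, since that lower bound applies to randomized one-way protocols with, say, error at most $1/3$. No uniformity over all indices is needed.
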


\begin{proof}
    Suppose there is such a labeling scheme with label length of $b$ bits.
    The proof strategy is similar to the proof of \Cref{thm:single-fault-lower-bound}:
    using the labeling scheme to devise a one-way communication protocol for the indexing problem $\INDEX(N)$, with $N  = \Theta(n)$.
    Let $x = x_0 x_1 \cdots x_{N-1}$ be the input string of Alice, and $i^* \in [0, N)$ be the input index of Bob.

    The communication protocol relies on a specific (uncolored) $n$-vertex graph topology $G$, known in advance to Alice and Bob, which we now define.
    First, denote
    \[
    M \bydef \binom{N^{1/(f+1)}}{f} = \Theta( N^{1 - 1/(f+1)}).
    \]
    The topology $G$ is an ``$f$-thick spider'' with $N/M = \Theta(N^{1/(f+1)})$ arms, each of length $M$.
    Formally, it consists of a starting vertex $s$, from which there are $N/M$ emanating ``$f$-thick'' paths $P_0, P_1, \dots, P_{N/M-1}$, where two consecutive vertices in a path have $f$ parallel edges between them.
    Each such path $P_k$ consists of $M+1$ vertices, and is disjoint from the other paths except for the common starting vertex $s$.
    We denote the vertex of distance $l$ from $s$ in the path $P_k$ by $v_{k,l}$, so $s = v_{k,0}$.
    We also use the shorthand notation $t_k = v_{k,M}$ for the last vertex of path $P_k$.
    The set of $f$ parallel edges between $v_{k,l}$ and $v_{k,l+1}$ is denoted by $E_{k,l}$.
    See \Cref{fig:f-fault-lower-bound}(Left) for an illustration.
    \begin{figure}
    \centering
    \includegraphics[width=1.0\textwidth]{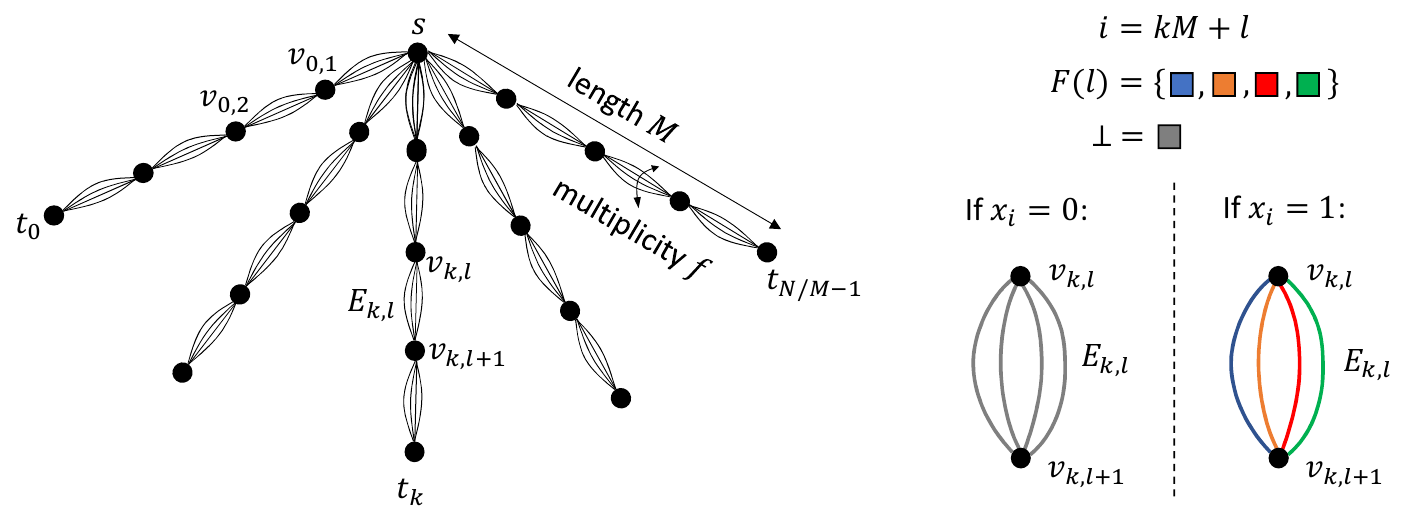}
    \caption{
    Illustration of the proof of \Cref{thm:f-faults-lower-bound}.
    Left: The ``$f$-thick spider'' graph topology $G$.
    Right: The coloring procedure for $E_{k,l}$.}
    \label{fig:f-fault-lower-bound}
    \end{figure}

    Consider 
    the color-palette 
    $\{0, 1, \dots, N^{1/(f+1)}-1\} \cup \{\perp\}$, where $\perp$ is a never-failing null color.
    By its definition, $M$ is the number of $f$-subsets of the color-set.
    Fix a bijection $F$ mapping $l \in [0, M)$ to a unique $f$-subset of colors $F(l) \subseteq \{0, \dots, N^{1/(f+1)}-1\}$.
    (The bijection $F$ is also part of the protocol, i.e., known in advance to Alice and Bob.)

    We are now ready to describe the protocol.
    Alice colors her copy of $G$ according to her input $x$, as follows.
    For each $k \in [0,N/M)$ and $l \in [0,M)$, Alice considers the corresponding index $i = kM+l$.
    If $x_i = 1$, she colors each of the $f$ edges in $E_{k,l}$ with a distinct color from the $f$ colors in $F(l)$.
    Otherwise, when $x_i = 0$, she colors all of $E_{k,l}$ with the null color $\perp$.
    See illustration in 
    \Cref{fig:f-fault-lower-bound}(Right).

    This coloring procedure ensures the following property.
    Let $k$, $l$ and $i$ as before, and consider $P_k-F(l)$ (i.e., path $P_k$ after all $f$ colors in $F(l)$ fail). For $l' \neq l$, the edge-set $E_{k,l'}$ has at least one surviving edge: it is either entirely colored with the non-failing $\perp$, or it contains all $f$ colors of $F(l')$, of which at least one is non-faulty as $F(l') \neq F(l)$.
    Therefore, $P_k-F(l)$ is connected iff $v_{k,l}$ connects to $v_{k,l+1}$ after all $f$ colors in $F(l)$ fail, which happen iff $x_i=0$.

    Next, Alice assigns labels using the assumed labeling scheme, and sends to Bob the labels of the vertices $s$ and $t_0, \dots, t_{N/M-1}$, and the labels of the colors $0,1, \dots, N^{1/(f+1)}-1$.
    To recover $x_{i^*}$,
    Bob decomposes $i^*$ as $i^* = k^* \cdot M + l^*$
    with $k^* \in [0,N/M)$, $l^* \in [0,M)$,
    and uses the received labels of $s, t_{k^*}$ and the $f$ colors in $F(l^*)$ to query the connectivity of $s$ and $t_{k^*}$ in $G - F(l^*)$.
    If the answer is \emph{disconnected}, Bob determines that $x_{i^*} = 1$, and if it is \emph{connected}, he determines that $x_{i^*} = 0$.
    This establishes the protocol.
    The correctness is guaranteed by the previously described property of the coloring.
    The total number of bits sent by Alice is $O(b N^{1/(f+1)})$. Thus, by \Cref{lem:index_LB}, $O(b N^{1/(f+1)}) = \Omega(N)$, and hence $b = \Omega(N^{1-1/(f+1)}) = \Omega(n^{1-1/(f+1)})$.

    Finally, for the `furthermore' part, we can alter $G$ by subdividing the edges, i.e., replacing each edge with a length-two path which is colored according to the original color of its corresponding edge in the $f$-thick spider.
    The resulting graph is simple and planer, and the number of vertices only increases by a factor of $O(f) = O(1)$, so it remains $\Theta(N)$, and the proof goes through.

    The proof for vertex-colored graphs also follows by subdividing the edges in the above manner, where new vertices get the color of their corresponding original edge, and original vertices get the color $\perp$.
\end{proof}

We note that in some sense, the proof technique of~\Cref{thm:f-faults-lower-bound} cannot be used to obtain a lower bound stronger than $\tilde{\Omega}(n^{1-1/(f+1)})$.
See \Cref{par:f_faults_LB_barrier} for a further discussion on this limitation.

\section{Two Color Faults in Small-Diameter Graphs}\label{sect:two-fault-diameter}

In this section, we provide a nearly optimal labeling scheme under two color faults for graphs with diameter $D=\tilde{O}(1)$.

\subsection{Upper Bound}



\begin{theorem}\label{thm:two-fault-upper-bound}
    There is a deterministic labeling scheme for connectivity with two color faults that, when given an $n$-vertex graph $G$ with all connected components of diameter at most $D$, assigns labels of length $O(D(\sqrt{n}+D)\log^2 n)$ bits.
\end{theorem}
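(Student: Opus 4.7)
The plan is to nest two applications of the single-fault scheme from \Cref{thm:single-fault-upper-bound}. Once a first fault $c_1$ is chosen, what remains is a single-fault connectivity query on $G-c_1$, so it is natural to apply \Cref{thm:single-fault-upper-bound} twice: once to $G$ itself (the ``outer'' scheme) and once to each graph $G-c$ (the ``inner'' scheme). The diameter assumption is leveraged precisely in the outer level, while the inner level contributes the $\sqrt{n}$ factor via the general $\bp(G-c)\le\sqrt n$ bound.

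First I would run the anchor-building procedure from the proof of \Cref{thm:single-fault-upper-bound} inside each connected component $C$ of $G$. Since $\diam(C)\le D$ implies $\bp(C)\le D$ (a proper $r$-ball requires two vertices at distance $r$, so $r\le\diam$), that procedure halts after $O(D)$ iterations and yields an outer anchor set $A$ with $|A|=O(D)$ and, for every vertex $v$, a shortest path $P(v)$ of length $\le D$ to some $a(v)\in A$. Next, for every color $c$, I would apply \Cref{thm:single-fault-upper-bound} to $G-c$, producing inner labels $L_1(\cdot,G-c)$ of length $O(\sqrt n\log n)$ each. The vertex label $L(v)$ then stores the outer label $L_1(v,G)$, the pair $(\id(a(v)),\text{colors along }P(v))$, and for each color $c_1$ on $P(v)$ the inner vertex-label $L_1(v,G-c_1)$; this totals $O(D)\cdot O(\sqrt n\log n)=O(D\sqrt n\log n)$ bits. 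The color label $L(c)$ stores $L_1(c,G)$ and, for every outer anchor $a\in A$, the inner vertex-label $L_1(a,G-c)$, also $O(D\sqrt n\log n)$ bits; to this I append a compact table that, for every ordered pair $(a,a')\in A\times A$ and every color $c'$ on the inner path out of $a$, records $\cid(a',G-\{c,c'\})$, contributing the additive $O(D^2\log^2 n)$ term.

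To answer a query $\langle u,v,\{c_1,c_2\}\rangle$ I would case-split on whether $c_1,c_2$ lie on $P(u)$ or $P(v)$. When neither fault lies on $P(u)$, the path $P(u)$ is intact in $G-\{c_1,c_2\}$, so $\cid(u,G-\{c_1,c_2\})=\cid(a(u),G-\{c_1,c_2\})$; pulling $L_1(a(u),G-c_1)$ from $L(c_1)$ gives $a(u)$'s inner anchor $b$ and the colors on its inner path. If $c_2$ lies on that inner path, the answer is recorded directly; otherwise the query reduces to $\cid(b,G-\{c_1,c_2\})$, which we read off from the anchor-pair table in $L(c_1)$ (or, symmetrically, $L(c_2)$). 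When some fault does lie on $P(u)$, we instead use the inner label $L_1(u,G-c_1)$ stored inside $L(u)$ and proceed analogously. An identical procedure resolves $\cid(v,G-\{c_1,c_2\})$.

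The hard part, and the step I expect to dominate the proof, is filling in that last reduction: whenever the inner path from an outer anchor $a$ in $G-c_1$ fails to contain $c_2$, we need $\cid(b,G-\{c_1,c_2\})$ for the inner anchor $b=b(a,c_1)\in A(c_1)$, which is formally contained in the inner color-label $L_1(c_2,G-c_1)$ but is far too expensive to store verbatim for all $c_2$. The plan is to observe that only $|A|=O(D)$ inner anchors can arise as such a $b$ for fixed $c_1$, and that for query resolution we only need the value of this one entry of $L_1(c_2,G-c_1)$; packing these into the $O(D^2)$-sized anchor-pair table described above, indexed appropriately so that it can be addressed from the name of $c_2$ and the anchor $a$ recovered inside $L(c_1)$, keeps the total label length within $O(D(\sqrt n+D)\log^2 n)$ bits while supplying exactly the information needed to finish every case of the query.
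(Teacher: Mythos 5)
Your high-level plan---exploit the $D$ bound at one level and a generic $\sqrt{n}$ bound at the other, using a ruling/anchor set in both---is in the same spirit as the paper's argument, but the execution has a real gap at exactly the step you yourself flag as the hard part, and the paper handles that step by a mechanism you are missing.

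The problematic case is: $c_1 \in P(u)$-free or not, after reducing to $a=a(u)$ you obtain an inner anchor $b=b(a,c_1)\in A(c_1)$ and need $\cid(b,G-\{c_1,c_2\})$. You propose to read this from an ``anchor-pair table'' of size $O(D^2)$ stored in $L(c_1)$, ``indexed appropriately so that it can be addressed from the name of $c_2$.'' This cannot work as stated, for two independent reasons. First, the row index $b$ ranges over $A(c_1)$, the anchor set the single-fault scheme builds for the graph $G-c_1$; this set has size $O(\bp(G-c_1))$, and $\bp(G-c_1)$ has nothing to do with $D$ or $\bp(G)$---deleting a color can dramatically increase diameter and ball-packing number---so $b$ is not in $A$ and may be one of $\Theta(\sqrt{n})$ vertices, not one of $O(D)$. (Your observation that ``only $|A|=O(D)$ inner anchors can arise as such a $b$'' is correct but irrelevant: the table must be prepared \emph{before} knowing which anchors are hit, and more importantly the second issue below kills it anyway.) Second, and more fundamentally, $c_2$ ranges over all $\Theta(n)$ colors, and no amount of ``appropriate indexing'' lets an $O(D^2)$-entry table supply a distinct $\cid$ value for every $(a,c_2)$ pair. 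The table you wrote down ($(a,a')\in A\times A$, one entry per color $c'$ on the inner path out of $a$) has $O(D^2\cdot\sqrt n)$ entries if interpreted literally---far above your claimed $O(D^2\log^2 n)$ bits---and it still stores $\cid(a',\cdot)$ rather than $\cid(b,\cdot)$, so it does not even address the required query.

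What the paper does instead, and what your argument is missing, is two things. It replaces your inner call to \Cref{thm:single-fault-upper-bound} by a \emph{truncated} BFS: from each $v$ and each $c\in T[s,v]$ it grows a BFS tree $T_{v,c}$ in $G-c$ \emph{stopping after $\sqrt{n}$ vertices}. This gives a dichotomy you do not have: if $T_{v,c}$ is small, it spans $v$'s whole component in $G-c$, so when $d\notin T_{v,c}$ the answer collapses to $\cid(v,G-c)$ and no second-level lookup is needed at all. If $T_{v,c}$ is large, it is hit by a fixed global hitting set $U$ of size $O(\sqrt{n}\log n)$, and crucially, every $u\in U$ has a BFS path $T[s,u]$ of length $O(D)$ back to the root $s$. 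This last fact is the replacement for your ``$O(D)$ inner anchors'' intuition: it is what lets $L(c_2)$ afford to store $\cid(u,G-\{c_2,d\})$ for all $u\in U$ and all $d\in T[s,u]$ in $O(\sqrt{n}\log n\cdot D\cdot\log n)$ bits, which then resolves the troublesome case via a three-way split on whether $c_1$ or $c_2$ lie on $T[s,u_{v,c_1}]$. Your scheme has no analogue of the small-tree shortcut nor of the hitting set, which is why the residual query cannot be discharged within the budget.
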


\begin{proof}
We focus on the case where $G$ is connected, and later mention the straightforward modifications to obtain the general case.
We use the following notation.
For a color $c \in [C]$ and a subgraph $H$ of $G$, the notation $c \in H$ means that $c$ appears in $H$.
When $H$ is a tree, and $u,v \in V(H)$, the unique $u$-$v$ path in $H$ is denoted by $H[u,v]$.

\medskip
\highlight{Preprocessing.}
Our labeling relies on several trees formed by executing breadth-first search procedures (in short, BFS trees), which we now define.
First, let $T$ be a BFS tree in $G$, rooted at the vertex $s$ with minimum $\id$.
Next, for each $v \in V$ and color $c \in T[s,v]$, we let $T_{v,c}$ be the tree formed by executing a BFS procedure from $v$ in $G-c$, but halting once $\sqrt{n}$ vertices are reached.
Note that if $T_{v,c}$ contains fewer than $\sqrt{n}$ vertices, then it is a spanning tree for the connected component of $v$ in $G-c$.
However, if $T_{v,c}$ has $\sqrt{n}$ vertices, it might not span this entire component.
We next compute a \emph{hitting set} $U \subseteq V$ for the trees $T_{v,c}$ that have $\sqrt{n}$ vertices.
That is, for every $v \in V$ and $c \in T[s,v]$, if $T_{v,c}$ contains $\sqrt{n}$ vertices, then it contains some vertex $u_{v,c} \in U$.
It is well known that
such a hitting set $U$ with $|U| = O(\sqrt{n} \log n)$ can be constructed efficiently (e.g., using the greedy algorithm).
For completeness, we include a proof in \Cref{sect:hitting-set} (\Cref{lem:det_hitting_set}).

\medskip
\highlight{Labeling.}
The label $L(v)$ of a vertex $v \in V$ is constructed by \Cref{alg:two-fault-vertex-labels}.
The label $L(c)$ of a color $c \in [C]$ is constructed by \Cref{alg:two-fault-color-labels}.

\begin{algorithm}[h]
\caption{Creating the label $L(v)$ of vertex $v \in V$}\label{alg:two-fault-vertex-labels}
\begin{algorithmic}[1]
    \For{each color $c \in T[s,v]$}
        \State \textbf{store} $\cid(v, G-c)$
        \State \textbf{store} $\cid(v,G-\{c,d\})$ for every color $d \in T_{v,c}$
        \If{$T_{v,c}$ has $\sqrt{n}$ vertices}
            \State $u_{v,c} \gets$ a vertex from $U$ present in $T_{v,c}$
            \State \textbf{store} $\id(u_{v,c})$
            \State \textbf{store} $\cid(u_{v,c},G-\{c,d\})$ for every $d \in T[s,u_{v,c}]$
        \EndIf
    \EndFor
\end{algorithmic}
\end{algorithm}
\setlength{\intextsep}{0pt}
\begin{algorithm}[h]
\caption{Creating the label $L(c)$ of color $c \in [C]$}\label{alg:two-fault-color-labels}
\begin{algorithmic}[1]
    \For{each vertex $u \in U$}
        \For{each color $d \in T[s,u]$}
            \State \textbf{store} $\cid(u, G-\{c,d\})$
        \EndFor
    \EndFor
\end{algorithmic}
\end{algorithm}

\medskip
\highlight{Length analysis.}
Recall that $T$ is a BFS tree for $G$, so its depth is at most $D$.
As $|U| = O(\sqrt{n} \log n)$, we obtain that a color label $L(c)$ stores $O(D\sqrt{n} \log^2 n)$ bits.
Consider now a vertex label $L(v)$.
Note that $T[s,v]$ has at most $D$ edges. For every $c \in T[s,v]$, we have that $T_{v,c}$ has at most $\sqrt{n}$ edges, and $T[s,u_{v,c}]$ (when defined) has at most $D$ edges.
Therefore, the label $L(v)$ stores only $O(D \cdot (\sqrt{n} + D))$ $\cid$s, which requires $O(D(\sqrt{n}+D)\log n)$ bits.
In total, all labels store $O(D(\sqrt{n}+D)\log^2 n)$ bits.

\medskip
\highlight{Answering queries.}
Given the labels $L(v), L(c), L(d)$ of $v \in V$ and two colors $c,d \in [C]$, we show a procedure for deducing $\cid(v,G-\{c,d\})$.

If both $c$ and $d$ do not appear on $T[s,v]$, then $\cid(v,G-\{c,d\}) = \cid(s, G-\{c,d\}) = \id(s)$, where the last equality is by choice of $s$ as the vertex with minimum $\id$,
and we are done (as the minimum $\id$ can be assumed to be fixed, say to $1$).
From now on, assume that one of the failing colors, say $c$, appears on $T[s,v]$.
If $d \in T_{v,c}$, then $\cid(v, G-\{c,d\})$ is found in $L(v)$, and we are done.
So, assume further that $d \notin T_{v,c}$.

We now treat the case where $T_{v,c}$ has fewer than $\sqrt{n}$ vertices.
Then $T_{v,c}$ spans the connected component of $v$ in $G-c$.
As $d \notin T_{v,c}$, it must be that this is also the connected component of $v$ in $G-\{c,d\}$.
Therefore, $\cid(v,G-\{c,d\})=\cid(v,G-c)$, and the latter is stored in $L(v)$, so we are done.

Next, we handle the case where $T_{v,c}$ has $\sqrt{n}$ vertices, so $u_{v,c}$ is defined.
As $d \notin T_{v,c}$, and also $c \notin T_{v,c}$ (by definition),
the path $T_{v,c}[v,u_{v,c}]$ connects $v$ and $u_{v,c}$ in $G-\{c,d\}$, implying that $\cid(v, G-\{c,d\}) = \cid(u_{v,c}, G-\{c,d\})$ and therefore it is enough to show how to find  $\cid(u_{v,c}, G-\{c,d\})$.
There are three options:
\begin{enumerate}
    \item If $c,d \notin T[s,u_{v,c}]$, then $\cid(u_{v,c}, G-\{c,d\}) = \cid(s, G-\{c,d\}) = \id(s)$.
    \item If $c \in T[s,u_{v,c}]$, then $L(d)$ stores $\cid(u_{v,c}, G-\{c,d\})$.
    \item If $d \in T[s,u_{v,c}]$, then $L(v)$ stores $\cid(u_{v,c}, G-\{c,d\})$.
\end{enumerate}
As at least one of these options must hold, this concludes the proof for connected graphs.

To handle graphs with several connected components, the proof is modified by replacing the single BFS tree $T$ with a collection of BFS trees, one for each connected component of $G$.

The proof works as-is also for vertex-colored graphs.
\end{proof}


\subsection{Lower Bounds}

We now show that \Cref{thm:two-fault-upper-bound} is optimal for graphs with $D=\Tilde{O}(1)$.

\begin{theorem}\label{thm:two-fault-lower-bound}
    Let $G$ be a graph topology and $H$ be a subgraph of $G$. Then, every labeling scheme for connectivity under two color faults in (colorings of) $G$ must have label length of $\Omega(\bp(H))$ bits.
\end{theorem}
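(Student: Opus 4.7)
The plan is to reduce connectivity labeling under one color fault in colorings of $H$ to connectivity labeling under two color faults in colorings of $G$, and then invoke \Cref{thm:single-fault-lower-bound} applied to $H$. The key idea is to ``spend'' one of the two available color faults in order to peel off the extra structure $E(G)\setminus E(H)$, leaving exactly $H$ behind, so that the remaining single fault operates on $H$.

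Concretely, suppose that there is a (possibly randomized) labeling scheme for two color faults on colorings of $G$, assigning labels of at most $b$ bits. Given any coloring $\chi$ of $H$, extend it to a coloring $\chi'$ of $G$ by assigning a single fresh dedicated color $c_0$ to every element of $E(G)\setminus E(H)$ (in the vertex-colored case, to every vertex of $V(G)\setminus V(H)$), where $c_0$ does not appear in $\chi$. Then by construction, $G-c_0$ under $\chi'$ is exactly $H$ under $\chi$, so for every color $c$ of $\chi$ and every $u,v\in V(H)$, we have that $u,v$ are connected in $H-c$ iff they are connected in $G-\{c_0,c\}$.

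Now apply the hypothesized two-fault scheme to $(G,\chi')$ to obtain labels $L$. Define labels on $(H,\chi)$ by setting $L'(v)=(L(v),L(c_0))$ for each $v\in V(H)$, and $L'(c)=L(c)$ for each color $c$ of $\chi$; each new label has length at most $2b$ bits. To answer a one-fault query $\langle u,v,c\rangle$ on $H$, extract $L(u), L(v), L(c_0), L(c)$ from these new labels and invoke the two-fault procedure on the faulty set $\{c_0,c\}$ in $G$; by the observation above, this correctly reports whether $u,v$ are connected in $H-c$. Hence this constitutes a valid one-fault labeling scheme for colorings of $H$ with label length at most $2b$ bits. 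Applying \Cref{thm:single-fault-lower-bound} to $H$ gives $2b=\Omega(\bp(H))$, and thus $b=\Omega(\bp(H))$, as claimed.

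The proof is really just a clean reduction; the only point to verify carefully is that the extended colored graph $(G,\chi')$ is indeed a legitimate input to the two-fault scheme, which is immediate since the scheme is required to handle \emph{every} coloring of $G$, including those that introduce an arbitrary fresh color $c_0$. Note also that the argument is oblivious to randomization, since the derived one-fault scheme answers correctly with exactly the same probability as the given two-fault scheme, so \Cref{thm:single-fault-lower-bound} applies in either case.
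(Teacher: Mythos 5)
Your proof is correct and takes essentially the same reduction as the paper: introduce a fresh never-really-used color $c_0$ (the paper calls it $c'$) on $E(G)\setminus E(H)$, simulate a one-fault query on $H$ by a two-fault query $\{c_0,c\}$ on $G$, and then invoke \Cref{thm:single-fault-lower-bound} applied to $H$. The only cosmetic difference is where $L(c_0)$ is stored — you attach it to the vertex labels $L'(v)$, while the paper attaches it to the color labels $L'(c)$ — and this is immaterial.
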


\begin{proof}
    The proof is by a reduction. We use connectivity labels for \emph{two} color faults in $G$, denoted $L_{G,2}(\cdot)$, to construct such labels for \emph{one} color fault in $H$, denoted $L_{H,1}(\cdot)$, as follows.
    Given a coloring of $H$ with palette $[C]$, we extend it to a coloring of $G$ by assigning to all edges of $G$ that are not in $H$ a new fixed color $c' \notin [C]$.
    The label $L_{H,1} (v)$ of each vertex $v \in V(H)$ simply stores $L_{G,2}(v)$.
    The label of $L_{H,1} (c)$ of a color $c \in [C]$ stores the pair $\ang{L_{G,2}(c), L_{G,2}(c')}$.
    Thus, given the $L_{H,1}(\cdot)$ labels of $u,v \in V(H)$ and $c \in [C]$, one can use the $L_{G,2}(\cdot)$ labels stored in them to decide if $u,v$ are connected in $G-\{c,c'\}$, which happens iff $u,v$ are connected in $H-c$.
    By \Cref{thm:single-fault-lower-bound}, the $L_{H,1}(\cdot)$ labels must have maximum length $\Omega(\bp(H))$, which implies the same conclusion for the $L_{G,2}(\cdot)$ labels.
\end{proof}




\begin{corollary}
    For every $n$, there exists a graph $G$ on $n$ vertices with $\diam(G)=2$, for which every connectivity labeling scheme under two color faults must have labels of length $\Omega(\sqrt{n})$.
\end{corollary}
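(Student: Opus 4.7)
The plan is to exhibit a concrete graph $G$ on $n$ vertices with $\diam(G)=2$ containing a subgraph $H$ of large ball packing number, and then invoke \Cref{thm:two-fault-lower-bound}.

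For the construction, I take $P$ to be a simple path on $n-1$ vertices, and let $G$ be obtained from $P$ by adding a single universal vertex $u$ joined by an edge to every vertex of $P$. Then $G$ has exactly $n$ vertices, and any two vertices of $G$ are within distance $2$ of each other via $u$, so $\diam(G)\leq 2$ (and equality holds provided $n\geq 4$, since non-adjacent vertices of $P$ are not joined directly in $G$).

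Now let $H=P$, which is a subgraph of $G$. By \Cref{obs:bp-at-most-sqrt-n}(ii), a path on $n-1$ vertices has ball packing number $\Omega(\sqrt{n-1})=\Omega(\sqrt{n})$; more explicitly, $P$ contains $\lfloor\sqrt{n-1}/2\rfloor$ pairwise vertex-disjoint proper $\lfloor\sqrt{n-1}/2\rfloor$-balls (centered at equally spaced internal vertices), which certifies this via \Cref{def:ball_packing}. Applying \Cref{thm:two-fault-lower-bound} to this pair $(G,H)$ yields that any two-color-fault connectivity labeling scheme for (colorings of) $G$ must use labels of length $\Omega(\bp(H))=\Omega(\sqrt{n})$, which is exactly the claim of the corollary.

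There is no real obstacle in this argument; the only thing to verify is that adding the universal vertex $u$ to $P$ does not inflate $\bp$ of the subgraph we pass to \Cref{thm:two-fault-lower-bound}, which is automatic since the reduction in \Cref{thm:two-fault-lower-bound} treats $H$ as an arbitrary subgraph of $G$ and only requires bounding $\bp(H)$, not $\bp(G)$ (indeed, $\bp(G)=O(1)$ here because $\diam(G)=2$, so one cannot fit disjoint balls of growing radius in $G$ itself). This is precisely why the auxiliary subgraph formulation of \Cref{thm:two-fault-lower-bound} is essential for establishing the corollary.
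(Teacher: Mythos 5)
Your proposal is correct and takes essentially the same approach as the paper: the paper uses the wheel graph (a cycle $C$ on $n-1$ vertices plus a universal vertex) and invokes \Cref{thm:two-fault-lower-bound} with $H=C$, whereas you use a path plus a universal vertex and invoke the same theorem with $H=P$; the two constructions are interchangeable since both a path and a cycle on $n-1$ vertices have ball-packing number $\Theta(\sqrt{n})$.
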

\begin{proof}
    Let $G$ be the wheel graph on $n$ vertices, composed of a cycle $C$ on $n-1$ vertices, and another vertex with an edge going to each of them.
    The result follows from \Cref{thm:two-fault-lower-bound} as $\diam(G)=2$, and $\bp(C) = \Omega(\sqrt{n})$.
\end{proof}

\section{Conclusion}\label{sect:conclusion}
In this work, we introduced $f$ color fault-tolerant connectivity labeling schemes, which generalize the well-studied edge/vertex fault-tolerant connectivity labeling schemes.
Our results settle the complexity of the problem when $f=1$.
For $f\geq 2$, many interesting open problems remain:
\begin{itemize}
    \item Can we close the gap between the $\tilde{O}(n^{1-1/2^f})$ and $\Omega(n^{1-1/(f+1)})$ bounds?
    Concretely, is there a labeling scheme for connectivity under $f=2$ color faults with labels of $\tilde{O}(n^{2/3})$ bits?
    Can our solution for low-diameter graphs be utilized to obtain such a scheme?   

    \item Is there a graph parameter that generalizes $\bp(G)$ and characterizes the length of a universally optimal labeling scheme for $f \geq 2$? 
    Notably, the proof of
    \Cref{thm:two-fault-lower-bound} implies that even very simple graphs with small diameter and ball packing number admit a lower bound of $\Omega(\sqrt{n})$ bits for $f=2$.

    \item Can we provide non-trivial \emph{centralized oracles} for connectivity under $f \geq 2$ color faults?

    \item 
    Are there routing schemes for avoiding $f\geq 2$ forbidden colors with small header size?
    Our labeling scheme for $f\geq 2$ could be extended to such a routing scheme, but with a large header size of $\tilde{O}(n^{1-1/2^f})$ bits.
\end{itemize}

Another intriguing direction is going beyond connectivity queries; a natural goal is to additionally obtain approximate distances, which is open even for $f=1$.
This problem is closely related to providing forbidden color routing schemes with good stretch guarantees.

\ifDRAFT
\else
\paragraph{Acknowledgments.}
We are grateful to Merav Parter for encouraging this collaboration, and for helpful guidance and discussions.
\fi

\phantomsection
\addcontentsline{toc}{section}{References}

\ifPODCSUB
    \bibliographystyle{ACM-Reference-Format}
\else
    \bibliographystyle{alphaurl}
\fi
\bibliography{references.bib}

\appendix

\section{Reduction from All-Pairs to Single-Source}
\label{sect:single-source-reduction}

In the \emph{single-source} variant of fault-tolerant connectivity, the input graph $G$ comes with a designated \emph{source vertex} $s$.
The queries to be supported are of the form $\ang{u,F}$, where $u\in V$ and $F$ is a faulty set of size at most $f$.
It is required to report if $u$ is connected to the source $s$ in $G-F$.
The following result shows that this variant is equivalent to the all-pairs variant, up to $\log n$ factors.
The result holds whether the faults are edges, vertices, or colors, hence we do not specify the type of faults.



\begin{theorem}
    Let $f \geq 1$.
    Suppose there is a (possibly randomized) single-source $f$ fault-tolerant connectivity labeling scheme that assigns labels of at most $b(n, f)$ bits on every $n$-vertex graph.
    Then, there is a randomized all-pairs $f$ fault-tolerant connectivity labeling scheme that assigns labels of length $O(b(n+1, f) \cdot \log^2 n)$ bits on every $n$-vertex graph.
\end{theorem}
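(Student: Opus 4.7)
The plan is to implement the idea sketched in the introduction: augment $G$ by adding an auxiliary source $s$ connected to each $v \in V$ independently with probability $p$, apply the given single-source scheme, and use the fact that any two vertices $u, v$ connected in $G - F$ must \emph{agree} on reachability from $s$ in the augmented graph, whereas disconnected $u, v$ disagree with constant probability for a suitably chosen $p$. Since the relevant component sizes in $G - F$ are not known in advance, I would try all dyadic scales $p_i = 2^{-i}$ for $i \in \{0, 1, \ldots, \lceil \log n \rceil\}$, and boost the per-scale constant distinguishing probability to $1 - 1/\poly(n)$ by running $T = \Theta(\log n)$ independent copies per scale.

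Concretely, I would sample, independently for each $(i, j)$ with $i \in \{0, \ldots, \lceil \log n \rceil\}$ and $j \in [T]$, a random augmentation $G_{i,j}$: add a fresh source $s_{i,j}$ and include each edge $\{s_{i,j}, v\}$ with probability $p_i$. In the colored setting, I assign these auxiliary edges (or, in the vertex-colored case, the source $s_{i,j}$ itself) a fresh never-faulty color $\perp_{i,j}$, so that $F$ can never touch them. Then I invoke the given single-source scheme on each $G_{i,j}$ with designated source $s_{i,j}$, and define the all-pairs label of every element $x$ (a vertex or a potentially faulty element) as the concatenation of its $O(\log^2 n)$ single-source labels across all instances. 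To answer a query $\ang{u, v, F}$, I use the single-source scheme for each $G_{i,j}$ to decide whether $u$ is reachable from $s_{i,j}$ in $G_{i,j} - F$ and similarly for $v$, and report ``connected'' iff the two answers agree in \emph{every} instance.

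The easy direction of correctness is immediate: if $u, v$ are connected in $G - F$ then they lie in the same component of every $G_{i,j} - F$ (since auxiliary edges are never faulty), and thus agree on reachability to $s_{i,j}$. The main obstacle, and the technical core of the analysis, is the other direction. Assuming $u, v$ are disconnected in $G - F$ with component sizes $k_u \leq k_v$, I would select $i^* = \lceil \log(2 k_u) \rceil$ so that $p_{i^*} k_u \in [1/4, 1/2]$; writing $a = 1 - (1 - p_{i^*})^{k_u}$ and $b = 1 - (1 - p_{i^*})^{k_v}$, the events that $s_{i^*, j}$ connects to $u$'s component and to $v$'s component are independent, so the disagreement probability equals $a(1 - b) + b(1 - a) = a + b - 2ab$. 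A short calculation using $a \in [\Omega(1), 1/2]$ and $b \in [a, 1]$ gives an absolute-constant lower bound. Amplifying over $j \in [T]$ with $T = \Theta(\log n)$ then drives the per-query failure below $1/\poly(n)$, and a union bound absorbs the high-probability correctness of the $O(\log^2 n)$ underlying single-source queries. The label length is $O(\log^2 n)$ single-source labels of at most $b(n + 1, f)$ bits each, yielding the claimed $O(b(n + 1, f) \log^2 n)$ bound.
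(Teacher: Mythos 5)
Your proposal is correct and matches the paper's proof in all essentials: both augment $G$ with a random source at $O(\log n)$ dyadic scales, repeat $\Theta(\log n)$ times per scale for amplification, make auxiliary edges never-faulty, concatenate the single-source labels, and answer by checking agreement of $u,v$ on reachability to each auxiliary source. The only cosmetic difference is that the paper bounds $\Pr[N^{(i)}_U = 0]$ via Markov's inequality and $\Pr[N^{(i)}_W \geq 1]$ directly, whereas you compute the exact disagreement probability $a + b - 2ab$; both yield the required constant gap.
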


\begin{proof}
    \highlight{Labeling.}
    For each $i,j$ with $1 \leq i \leq \ceil{\alpha \ln(n) / \ln(0.9)}$, $1 \leq j \leq \ceil{\log_2 n} + 2$ we independently construct a graph $G_{ij}$ as follows:
    Start with $G$, add a new vertex $s_{ij}$, and independently for each $v \in V$, add a new edge connecting $s_{ij}$ to $v$ with probability $2^{-j}$. 
    The vertex $s_{ij}$ and the new edges are treated as \emph{non-failing}.
    That is, in case of color faults, they get a null-color $\perp$ that does not appear in $G$.
    For each element (vertex/edge/color) $x$ of $G$, its label $L(x)$ is the concatenation of all $L_{ij}(x)$, where the $L_{ij}(\cdot)$ are the labels given by the single-source scheme to the instance $G_{ij}$ with designated source $s_{ij}$.
    The claimed length bound is immediate.

    \medskip
    \highlight{Answering queries.}
    Let $u,w \in V$, and let $F$ be a fault-set of size at most $f$.
    Given $L(u), L(w)$ and $\{L(x) \mid x \in F\}$, we should determine if $u,w$ are connected in $G-F$.
    To this end, for each $i,j$, we use the $L_{ij}(\cdot)$ labels of $u,F$ to determine if $u$ is connected to $s_{ij}$ in $G_{ij}$, and do the same with $w$ instead of $u$.
    If the answers are always identical for $u$ and $w$, we output \emph{connected}. Otherwise, we output \emph{disconnected}.
    
    \medskip
    \highlight{Analysis.}
    We have made only $O(\log^2 n)$ queries using the single-source scheme, so, with high probability, all of these are answered correctly. Assume this from now on.

    If $u,w$ are connected in $G-F$, then this is also true for all $G_{ij}-F$, so they must agree on the connectivity to $s_{ij}$ in this graph.
    Hence, in this case, the answers for $u$ and $w$ are always identical, and we correctly output \emph{connected}.

    Suppose now that $u$ and $w$ are disconnected in $G-F$.
    Let $U$ be the set of vertices in $u$'s connected component in $G-F$.
    Define $W$ analogously for $w$.
    Without loss of generality, assume $|U| \leq |W|$.
    Let $j$ be such that $2^{j-2} < |U| \leq 2^{j-1}$.
    Let $N^{(i)}_U$ be the number of edges between $s_{ij}$ and $U$ in $G_{ij}$, and define $N^{(i)}_W$ similarly.
    By Markov's inequality,
    \begin{align*}
        \Pr\big[N^{(i)}_U = 0\big] &\geq 1 - \mathbb{E}\big[N^{(i)}_U\big] = 1 - |U| \cdot 2^{-j} \geq 1 - 2^{j-1} \cdot 2^{-j} = 1/2.\\
        \shortintertext{On the other hand,}
        \Pr[N^{(i)}_W \geq 1] &= 1 - \paren{1 - 2^{-j}}^{|W|} \geq 1 - \paren{1 - 2^{-j}}^{2^{j-2}}
    \geq 1 - e^{-1/4} > 0.2.
    \end{align*}
    Since $U$ and $W$ are disjoint, $N^{(i)}_U$ and $N^{(i)}_W$ are independent random variables.
    Hence, with probability at least $0.1$,
    the source $s_{ij}$ is connected to $w$ but not to $u$ in $G_{ij} - F$, and the answers for $u$ and $w$ given by the $L_{ij}(\cdot)$-labels are different.
    As the graphs $\{G_{ij}\}_i$ are formed independently, 
    the probability there exists an $i$ for which $w$ is connected to $s_{ij}$ and $u$ is disconnected from $s_{ij}$
    is at least $1 - (0.9)^{\alpha \ln n / \ln(0.9)} = 1 - 1/n^\alpha$.
    In this case, the output is \emph{disconnected}, as required.
\end{proof}

\section{Routing }\label{sec:routing}

In this section, we provide a routing scheme for avoiding any single forbidden color.
This is a natural extension of the forbidden-set routing framework, initially introduced by \cite{CourcelleT07} (see also \cite{AbrahamCG12,AbrahamCGP16,DoryP21,ParterPP23}), to the setting of colored graphs.
We refer the reader to \cite{DoryP21} for an overview of forbidden-set routing, and related settings.
Such a routing scheme consists of two algorithms.
The first is a preprocessing (centralized) algorithm that computes \emph{routing tables} to be stored at each vertex of $G$, and \emph{labels} for the vertices and the colors.
The second is a distributed routing algorithm that enables routing a message $M$ from a source vertex $s$ to a target vertex $t$ avoiding edges of color $c$.
Initially, the labels of $s,t,c$ are found in the source $s$. Then, at each intermediate node $v$ in the route, $v$ should use the information in its table, and in the (short) header of the message, in order to determine where the message should be sent;
formally, $v$ should compute the \emph{port number} of the next edge to be taken from $v$ (which must not be of color $c$).
It may also edit the header for future purposes.

The main concern is minimizing the size of the tables and labels, and even more so of the header (as it is communicated through the route).
Another important concern is optimizing the \emph{stretch}, which is the ratio between the length of the routing path and the length of the shortest $s,t$ path in $G-c$.
Unfortunately, our routing scheme does not provide good stretch guarantees, and optimizing the stretch is an interesting direction for future work.
We note, however, that the need to avoid edges of color $c$ by itself poses a nontrivial challenge, and black-box application of the state-of-the-art routings schemes of Dory and Parter~\cite{DoryP21} for avoiding $f = \Omega(n)$ individual edges would yield large labels, tables and headers, \emph{and} large stretch (all become $\Omega(n)$).
We show:

\begin{theorem}\label{thm:routing}
    There is a deterministic routing scheme for avoiding one forbidden color such that, for a given colored $n$-vertex graph $G$, the following hold:
    \begin{itemize}
        \item The routing tables stored at the vertices are all of size $O(\bp(G) \log n)$ bits.
        \item The labels assigned to the vertices and the colors are of size $O(\bp(G) \log n)$ bits.
        \item The header size required for routing a message contains only $O(\log n)$ bits.
    \end{itemize}
 \end{theorem}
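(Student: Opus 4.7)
The plan is to augment the ball-packing labeling scheme of Theorem~\ref{thm:single-fault-upper-bound} with routing data. I would reuse the ruling set $A$ of $O(\bp(G))$ vertices and, for each vertex $v$, its shortest path $P(v)$ to the nearest anchor $a(v) \in A$. For each $a \in A$, fix a BFS tree $T_a$ of $G$ rooted at $a$. Each vertex $v$'s routing table would store, for every $a \in A$, the port of $v$'s parent edge in $T_a$, using $O(\bp(G) \log n)$ bits. The label $L(v)$ would be augmented with the port sequence of $P(v)$ and, for each color $c$ on $P(v)$, an evacuation anchor $\alpha(v,c) \in A$ in the same $G - c$-component as $v$. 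The label $L(c)$ would be augmented with a description of a spanning forest of the $A$-contracted graph under $G - c$, listing one tree edge per inter-anchor connection needed to certify $G - c$-connectivity; since $|A| = O(\bp(G))$, this spanning forest uses $O(\bp(G))$ tree edges, each describable in $O(\log n)$ bits. All augmentations stay within the $O(\bp(G) \log n)$ budget.

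To route a message from $s$ to $t$ avoiding $c$, the source uses the connectivity test of Theorem~\ref{thm:single-fault-upper-bound} on $L(s), L(t), L(c)$ to verify $s \leftrightarrow t$ in $G - c$ and to identify anchors $\alpha_s, \alpha_t \in A$ in the $G - c$-components of $s$ and $t$ (where $\alpha_s = a(s)$ if $c \notin P(s)$, and $\alpha_s = \alpha(s,c)$ otherwise; symmetrically for $t$). The route proceeds in three phases: Phase~1 drives the message from $s$ to $\alpha_s$ along the port sequence stored in $L(s)$ (with appropriate truncation and rerouting if $c \in P(s)$); Phase~2 drives it from $\alpha_s$ to $\alpha_t$ by walking the $A$-contracted spanning forest stored in $L(c)$, with each tree edge realized concretely by climbing the appropriate $T_a$ at intermediate non-$A$ vertices; Phase~3 reverses Phase~1 to reach $t$ from $\alpha_t$. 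The header carries only the current phase, current target (in $A \cup \{t\}$), fault color $c$, and an $O(\log n)$-bit pointer into a fixed port-sequence slot that the source refills at each phase transition, totaling $O(\log n)$ bits.

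The main obstacle is Phase~2 together with the physical realization of $A$-contracted tree edges: an intermediate non-$A$ vertex on a path joining two anchors $a, a'$ in $G - c$ must advance the message toward $a'$ without seeing the label $L(c)$, relying only on its own table. The BFS-tree tables $T_{a'}$ handle this in the fault-free graph, but when a table-dictated $T_{a'}$-edge is $c$-colored the vertex needs a local detour that itself lies in $G - c$. I plan to address this by storing at each vertex $v$ and for each $a \in A$ a constant-size local detour block that bypasses a single faulty $T_a$-edge, made possible by building each $T_a$ to have short alternative paths around each near-leaf edge inside the ball certifying its stratum of the ball-packing. Establishing that such detour data can be guaranteed uniformly within the $O(\bp(G) \log n)$ per-vertex budget — and that no chain of detours is ever required — is the technical crux. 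A fallback, if the detour approach proves insufficient, is to materialize each $A$-contracted tree edge explicitly inside $L(c)$ as a port sequence of length $O(\bp(G))$ in $G - c$, which still fits within the $O(\bp(G) \log n)$ label budget and bypasses the need for vertex-level detours altogether.
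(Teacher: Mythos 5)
Your setup is on the right track — the ruling set $A$, the paths $P(v)$, an anchor-to-anchor middle phase, and the observation that vertices can afford to store $c$-dependent data only for colors $c$ appearing on $P(v)$ — and you correctly isolate the crux: how does an intermediate non-$A$ vertex, with only its own table and an $O(\log n)$-bit header, make progress toward the next anchor in $G-c$? But you do not resolve this crux, and both of your proposed resolutions fail. The detour blocks cannot work as described: you store \emph{one} constant-size detour per pair $(v,a)$, but the detour must avoid the faulty color $c$, which is query-dependent, so no single $c$-oblivious detour can serve all colors. Even granting a $c$-aware detour, you acknowledge you cannot rule out chains of detours, and indeed nothing bounds them. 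The fallback is also broken on two counts: a path between anchors in $G-c$ need not have length $O(\bp(G))$ (removing a color can blow up distances arbitrarily, so the bound you assert is unjustified), and even if it did, a port sequence of length $\bp(G)$ stored in $L(c)$ is available only to the source $s$, not to intermediate vertices; pushing it through the header would require $\Theta(\bp(G)\log n)$ header bits, violating the $O(\log n)$ constraint (and there is no mechanism for the source to ``refill'' the header mid-route).

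The paper's proof resolves the crux with two ideas you are missing. First, it constructs a \emph{single} spanning tree $T$ of $G$ so that every $P(v)$ is a tree path, and after $c$ fails it works with the fragments of $T-c$ glued into a recovery tree $T_c$ by extra ``recovery edges''. The key structural lemma is that any vertex lying in a fragment with no $A$-vertex must have $c \in P(v)$ — so exactly those vertices can afford to store the $T_c$-tree-routing table $R_{T_c}(v)$ for that $c$, and Phase~2 (from the $A$-fragment nearest $t$ to $t$) reduces to vanilla Thorup--Zwick routing on $T_c$. Second, for Phase~1 (hopping through fragments toward an $A$-fragment containing $a^*$), the root of each non-root fragment has a $c$-colored parent edge, hence \emph{knows} $c$ and can store, for each $a\in A$, the first recovery edge on the $T_c$-path toward $a$; the message climbs to the fragment root, reads off the next recovery edge, tree-routes to it on $T$, crosses it, and repeats. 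Both mechanisms are $c$-aware precisely at the vertices that can afford $c$-dependent storage, which is what your BFS-trees-$T_a$ design lacks. I would suggest replacing the collection of BFS trees with this single-$T$/recovery-tree architecture and proving the fragment lemma for your choice of $T$.
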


The rest of the section is devoted to proving the above theorem.
For the sake of simplicity, we assume that when $c$ is the color to be avoided, the graph $G-c$ is connected. (In particular, this also implies that $G$ is connected.)
Intuitively, this assumption is reasonable as we cannot route between different connected components of $G-c$. To check if the routing is even possible (i.e., if $s$ and $t$ are in the same connected component), we can use the connectivity labels of \Cref{thm:single-fault-upper-bound} at the beginning of the procedure.
Technically, this assumption can be easily removed, at the cost of introducing some additional clutter.

\subsection{Basic Tools}
In this section we provide several basic building blocks on which our scheme is based.

\paragraph{Tree Routing.}
The first required tool is the \emph{Thorup-Zwick tree routing scheme}~\cite{TZ01-b}, which we use in a black-box manner. Its properties are summarized in the following lemma:

\begin{lemma}[Tree Routing~\cite{TZ01-b}]\label{lem:tree-routing}
    Let $T$ be an $n$-vertex tree.
    One can assign each vertex $v \in V(T)$ a \emph{routing table} $R_T (v)$ and a \emph{destination label} $L_T (v)$ with respect to the tree $T$, both of $O(\log n)$ bits.
    For any two vertices $u,v \in V(T)$, given $R_T (u)$ and $L_T (v)$, one can find the port number of the $T$-edge from $u$ that heads in the direction of $v$ in $T$.
\end{lemma}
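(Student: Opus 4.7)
The plan is to use the classical heavy path decomposition, combined with a variable-length encoding of root-to-$v$ paths that exploits the ``designer port'' freedom (the scheme is free to choose port numbers).

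First, I would root $T$ at an arbitrary vertex $r$ and perform heavy path decomposition: at each internal vertex, designate the child whose subtree is largest as the \emph{heavy} child, and every other child as \emph{light}. This partitions the edges into heavy and light edges, and the vertices into maximal disjoint heavy paths. The key property is that any root-to-vertex path in $T$ traverses at most $\lceil \log_2 n \rceil$ light edges, since crossing a light edge moves us to a subtree of strictly less than half the previous size.

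Second, I would assign port numbers exploiting the designer-port freedom. At each vertex $u$: the edge to the parent gets port $0$, the edge to the heavy child (if any) gets port $1$, and the edges to the light children are numbered $2,3,\ldots$ in decreasing order of their subtree sizes. This guarantees that whenever a light edge with port $p+1$ is taken from a vertex whose subtree has size $s$, the entered subtree has size at most $s/(p+1)$, because the $p$ larger sibling subtrees together contain at least $p/(p+1)$ of the light-child mass.

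Third, the destination label $L_T(v)$ would encode the root-to-$v$ path as a concatenation of ``moves'': the sequence of port numbers of the light edges taken along the path, followed by the index of $v$ on the final heavy path containing it. Using a prefix-free variable-length code (e.g., Elias gamma) that encodes each port $p_i$ in $O(\log p_i)$ bits, the total label length is
\[
O\paren{\textstyle\sum_i \log p_i} = O\paren{\log \textstyle\prod_i p_i} = O(\log n),
\]
where $\prod_i p_i \leq n$ follows from iterating the halving property above. The routing table $R_T(u)$ would symmetrically store $u$'s own encoded root-to-$u$ path plus the local port numbers to $u$'s parent and heavy child, again in $O(\log n)$ bits total. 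To resolve the routing query at $u$, I would compare the encoded paths stored in $R_T(u)$ and $L_T(v)$: if $L_T(v)$ extends $R_T(u)$, then $v$ lies in $u$'s subtree and the next port is read directly from $L_T(v)$ at the appropriate position; otherwise $v$ is outside $u$'s subtree and the next port is $0$ (toward the parent).

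The main obstacle is achieving $O(\log n)$-bit labels \emph{and} tables simultaneously, rather than a naive $O(\log^2 n)$ from storing one $O(\log n)$-bit port per light edge. This hinges on the variable-length encoding combined with the ``geometric decay'' of heavy-path decomposition, which ensures that the product of the light-edge port numbers along any root-to-$v$ path is bounded by $n$. A secondary subtlety is making the per-step decoding work in $O(1)$ time in the RAM model, which can be handled by precomputing standard word-level lookup tables shared across the scheme.
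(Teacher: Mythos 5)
The paper does not prove this lemma; it invokes the Thorup--Zwick tree-routing scheme as a black box, so there is no internal proof to compare against. Your reconstruction is in the right spirit (heavy-path decomposition plus designer ports with a variable-length encoding of light-edge ports, whose product along a root-to-leaf path is at most $n$), but as written it contains a genuine gap in the query-resolution step.

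The problem is that the destination label records only the sequence of light-edge ports and the index of $v$ on its own (final) heavy path. It does not record where the root-to-$v$ path \emph{exits} each intermediate heavy path it traverses. Consequently, your rule ``if $L_T(v)$ extends $R_T(u)$ then $v$ lies in $u$'s subtree'' is false. Concretely, let the heavy path through the root be $r, a_1, a_2, a_3, \dots$, let $b$ be a light child of $a_2$ (reached via port $2$), and take $u=a_3$, $v=b$. Then $u$'s light-port sequence is empty, $v$'s is $(2)$, so $L_T(v)$ extends $R_T(u)$; your rule would instruct $u$ to read port $2$ from $L_T(v)$ and descend, whereas the correct move is port $0$ toward $u$'s parent $a_2$. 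The same failure occurs whenever $u$ sits below $v$'s exit point on a shared heavy path. One cannot simply append each exit index to the label either, since the product of these indices over the $O(\log n)$ heavy paths is not bounded by $n$, so the same variable-length trick does not immediately apply.

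The standard way to close this gap (and what Thorup--Zwick and Fraigniaud--Gavoille actually do) is to add a single $O(\log n)$-bit DFS number to $L_T(v)$, taken from a DFS that visits the heavy child first, and to store in $R_T(u)$ the DFS interval of $u$'s subtree together with the DFS interval of $u$'s heavy child and $u$'s ``light depth'' (number of light edges from the root). Routing at $u$ then proceeds by interval containment: if $v$'s DFS number lies outside $u$'s interval, send to the parent; if it lies in the heavy child's interval, send to the heavy child; otherwise $v$ sits in some light child's subtree, and $u$ reads the correct light port from $L_T(v)$ at the position given by $u$'s light depth. With this addition, the rest of your argument (port assignment by decreasing subtree size, product bound $\prod_i q_i \le n$, Elias-gamma encoding giving $O(\log n)$ bits) goes through and yields the claimed bounds.
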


\paragraph{The Vertex Set $A$.}
Our scheme crucially uses the existence of the set $A$ constructed in the labeling procedure of \Cref{sect:1-fault-main-upper}.
The following lemma summarizes its two crucial properties:
\begin{lemma}\label{lem:A-set}
    There is a vertex set $A \subseteq V$ such that $|A| = O(\bp(G))$, and every vertex $v \in V$ has $\dist_G (v,A) = O(\bp(G))$.
\end{lemma}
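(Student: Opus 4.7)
The plan is to take $A$ to be precisely the vertex set constructed by the greedy procedure in the labeling scheme of \Cref{sect:1-fault-main-upper} (i.e., essentially \Cref{alg:single-fault-labels}), and then read off the two required properties from the analysis of \Cref{thm:single-fault-upper-bound}. Since we are working under the assumption that $G-c$ (and hence $G$) is connected, we may skip the ``one-vertex-per-component'' initialization $A_0$ and simply start from an arbitrary $a_0\in V$, iteratively appending any $a_i$ with $\dist_G(a_i,\{a_0,\dots,a_{i-1}\})=i$ until the process halts at step $k=|A|$.

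The easy half is the covering property. By the halting condition of the greedy loop, there is no vertex at distance $\geq k$ from $A$, so $\dist_G(v,A)<k$ for every $v\in V$. Thus once we have shown $k=O(\bp(G))$, both conclusions $|A|=O(\bp(G))$ and $\dist_G(v,A)=O(\bp(G))$ follow at once.

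The bound $k=O(\bp(G))$ is the ball-packing argument already contained in the length analysis of \Cref{thm:single-fault-upper-bound}, which I would recall here for completeness. Let $A'=\{a_i:i\geq \lceil k/2\rceil\}$, so $|A'|\geq \lfloor k/2\rfloor$. For any two distinct $a_i,a_j\in A'$ with $i>j$, the greedy rule gives $\dist_G(a_i,a_j)\geq i>\lceil k/2\rceil$, hence the balls $B_G(a_i,\lfloor k/4\rfloor)$ for $a_i\in A'$ are pairwise vertex-disjoint. Each is a proper $\lfloor k/4\rfloor$-ball: for $a_i\in A'$, the vertex $a_0$ (or the witness realizing $\dist_G(a_i,\{a_0,\dots,a_{i-1}\})=i$) lies at distance $i>\lfloor k/4\rfloor$ from $a_i$, so \Cref{obs:proper-balls} applies. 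This exhibits $\lfloor k/2\rfloor\geq \lfloor k/4\rfloor$ pairwise disjoint proper $\lfloor k/4\rfloor$-balls, so by \Cref{def:ball_packing} we have $\lfloor k/4\rfloor\leq \bp(G)$, i.e., $k=O(\bp(G))$, as needed.

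There is no real obstacle: the lemma is essentially an extraction and repackaging of facts already proved inside \Cref{thm:single-fault-upper-bound}, and the only minor care needed is to verify that the greedy set $A$ can be taken intrinsically (without the $A_0$ machinery for disconnected graphs), which is immediate under the standing connectedness assumption. If one wished to drop the connectedness assumption, one would run the procedure inside each connected component of $G$ and take the union; the same packing argument then bounds the total size by $O(\bp(G))$ since disjoint components contribute disjoint proper balls.
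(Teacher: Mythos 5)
Your proof is correct and takes essentially the same approach as the paper: the paper does not prove \Cref{lem:A-set} separately but instead points to the greedy construction of $A$ in \Cref{sect:1-fault-main-upper}, whose length analysis (spelled out in \Cref{sect:1-fault-upper-bound}) contains exactly the ball-packing argument you reproduce — the ``second half'' of the greedy points are pairwise at distance $>\lceil k/2\rceil$, the balls of radius $\lfloor k/4\rfloor$ around them are disjoint and proper by \Cref{obs:proper-balls}, hence $\lfloor k/4\rfloor\le \bp(G)$, while the halting condition of the greedy loop gives the covering bound $\dist_G(v,A)<k$. Your observation that the $A_0$ machinery can be dropped under the standing connectedness assumption of \Cref{sec:routing} is also accurate.
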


\paragraph{Spanning Tree and Recovery Trees.}
We next define several trees that are crucial for our scheme.
First, we construct a specific spanning tree $T$ of $G$, designed so that the $V$-to-$A$ shortest paths in $G$ are tree paths in $T$.
Recall that for every $v\in V$, $P(v)$ a shortest path connecting $v$ to $A$, and $a(v)$ is the $A$-endpoint of this path (see the beginning of \Cref{sect:1-fault-main-upper}).
We choose the paths $P(v)$ consistently, so that if vertex $u$ appears on $P(v)$, then $P(u)$ is a subpath of $P(v)$.
This ensures that the union of the paths $\bigcup_{v\in V} P(v)$ is a forest.
The tree $T$ is created by connecting the parts of this forest by arbitrary edges.
We root $T$ at an arbitrary vertex $r$.

After the failure of color $c$, the tree $T$ breaks into \emph{fragments} (the connected components of $T-c$).
We define the \emph{recovery tree} of color $c$, denoted $T_c$, as a spanning tree of $G-c$ obtained by connecting the fragments of $T-c$ via additional edges of $G-c$. These edges are called the \emph{recovery edges} of $T_c$, and the fragments of $T-c$ are also called fragments of $T_c$.

\paragraph{First Recovery Edges.}
For $u,v\in V$ and color $c$, we denote $e(u,v,c)$ as the first recovery edge appearing in the $u$-to-$v$ path in $T_c$ (when such exists).
Note that we treat this path as directed from $u$ to $v$.
Accordingly, we think of $e(u,v,c)$ as a \emph{directed} edge $(x,y)$ where its first vertex $x$ is closer to $u$, and its second vertex $y$ is closer to $v$.
Thus, $e(u,v,c)$ and $e(v,u,c)$ refer to the same edge, but in opposite directions.
We will use a basic data block denoted $\FirstRecEdge(u,v,c)$ storing the following information regarding $e(u,v,c)$:
\begin{itemize}
    \item The port number of $e(u,v,c)$, from its first vertex $x$ to its second vertex $y$.
    \item The tree-routing label w.r.t.\ $T$ of the first vertex $x$, i.e. $L_T (x)$.
    \item A Boolean indicating whether the second vertex $y$ and $v$ lie in the same fragment of $T-c$.
\end{itemize}
Note that $\FirstRecEdge(u,v,c)$ consists of $O(\log n)$ bits.

\paragraph{$A$-fragments and $B$-fragments.}
A fragment of $T-c$ containing at least one vertex from $A$ is called an \emph{$A$-fragment}.
For convenience, the fragments that are not $A$-fragments (i.e., do not contain $A$-vertices) are called \emph{$B$-fragments}.
Our construction of $T$ ensures the following property:
\begin{lemma}\label{lem:fragments}
    For every color $c$, if vertex $v \in V$ is in a $B$-fragment of $T-c$, then $c\in P(v)$, i.e., the color $c$ appears on the path $P(v)$.
\end{lemma}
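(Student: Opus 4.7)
The statement is almost immediate from the construction of $T$, so I will argue by contrapositive. Assume $c \notin P(v)$, and the goal is to show that $v$ lies in an $A$-fragment of $T-c$.

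The key observation is that $P(v)$ is a path in the tree $T$. This is ensured by the construction: the shortest paths $P(u)$ are chosen consistently (if $u$ lies on $P(v)$ then $P(u)$ is a prefix of $P(v)$), so their union forms a forest, and $T$ is obtained by completing this forest into a spanning tree using arbitrary additional edges. In particular, every edge of $P(v)$ is an edge of $T$.

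Now assume $c \notin P(v)$, meaning that no edge of $P(v)$ has color $c$. Then all edges of $P(v)$ survive in $T-c$, so $v$ and its endpoint $a(v) \in A$ are connected in $T-c$, i.e., they lie in the same fragment of $T-c$. Since this fragment contains $a(v) \in A$, it is an $A$-fragment by definition, so $v$ is in an $A$-fragment, contradicting the assumption that $v$ lies in a $B$-fragment. This proves the contrapositive and establishes the lemma.

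There is no serious obstacle here; the only thing one must be careful about is the precise dependence on how $T$ is constructed, namely that the paths $P(v)$ really do embed as tree paths of $T$. Once this is in hand, the argument is a one-line application of the definition of ``$B$-fragment.''
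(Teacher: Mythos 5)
Your proof is correct and is essentially the same as the paper's, just phrased as a contrapositive rather than directly: both arguments rest on the key fact that $P(v)$ is a tree path in $T$ (by the consistent-choice construction), and then observe that if $c$ does not appear on $P(v)$, this path survives in $T-c$ and connects $v$ to $a(v)\in A$.
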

\begin{proof}
    By construction, the path $P(v)$ is a tree path in $T$ connecting $v$ to some $a\in A$.
    As $v$ is in a $B$-fragment of $T-c$, this path cannot survive in $T-c$, hence the color $c$ appears on it.
\end{proof}


\subsection{High Level Overview of the Routing Scheme}\label{sec:high_level_routing}

Our scheme is best described via two special cases; in the first case, $t$ is in an $A$-fragment, and in the second case, the $s$-to-$t$ path in $T_c$ is only via $B$-fragments.
We then describe how to connect between these two cases, essentially by first routing to the $A$-fragment that is nearest to $t$ in $T_c$, and then routing from that $A$-fragment to $t$ (crucially, this route does not contain $A$-fragments).

\paragraph{First Case: $t$ is in an $A$-fragment.}
Suppose an even stronger assumption, that we are actually given a vertex $a^*\in A$ that is in the same fragment as $t$. We will resolve this assumption only at the wrap-up of this section.
The general strategy is to try and follow the $s$-to-$t$ path in the recovery tree $T_c$.
This path is of the form $P_1 \circ e_1 \circ P_2 \circ e_2 \circ \cdots \circ e_\ell \circ P_\ell$, where each $P_i$ is a path in a fragment $X_i$ of $T-c$, and the $e_i$ edges are recovery edges connecting between fragments, so that $X_\ell$ is the fragment of $t$ in $T-c$.
Rather than following this path directly, our goal will be to route from one fragment to the next, through the corresponding recovery edge.

As there are only $O(\bp(G))$ $A$-fragments, every $v\in V$ can store $O(\log n)$ bits for each $A$-fragment.
However, the routing table of $v$ cannot store said information for every color.
To overcome this obstacle, note that in every fragment in $T-c$ (besides the one containing the root $r$), the root of the fragment is connected to its parent via a $c$-colored edge.
We leverage this property, and let the root of every fragment store, for every $a \in A$, the first recovery edge $e(v,a,c)$ on the path from $v$ to $a$ in $T_c$.
Thus, 
when reaching the fragment $X_i$ of $T-c$, we first go up as far as possible, until we hit the root of $X_i$.
In the general case, this is a vertex $v_i$ such that the edge to its parent is of color $c$.
Therefore, $v_i$ stores in its table the next recovery edge $e_i = e(v_i,a^*,c)$ we aim to traverse.
The special case of $v_i = r$ is resolved using the color labels.
The color $c$ stores, for every $a\in A$, the first recovery edge $e(r,a,c)$; at the start of the routing procedure, $s$ extracts the information regarding $e(r,a^*,c)$ and writes it in the header.

So, we discover $e_i$ in $v_i$, and next we use the Thorup-Zwick routing of \Cref{lem:tree-routing} on $T$ to get to the first endpoint of $e_i$.
The path leading us to this endpoint is fault-free (it is contained in the fragment $X_i$).
Then, we traverse $e_i$, and continue in the same manner in the next fragment $X_{i+1}$.

Once we reach the $A$-fragment that contains $a^*$ and $t$, we again use the Thorup-Zwick routing of \Cref{lem:tree-routing} on $T$.
For that we also need $L_T(t)$, which $s$ can learn from the label of $t$ and write in the header at the beginning of the procedure.

\paragraph{Second Case: the $s$-to-$t$ path in $T_c$ is only via $B$-fragments.}
As every vertex $v$ in the $s$-to-$t$ path in $T_c$ is in a $B$-fragment of $T-c$, by \Cref{lem:fragments}, $c\in P(v)$.
Thus, $v$ can store the relevant tree-routing table $R_{T_c}(v)$.
Essentially, every $v$ has to store such routing table for every color in $P(v)$.
Also, since $c \in P(v)$, $t$ can store in its label $L(t)$ the tree-routing label $L_{T_c} (t)$, and the latter can be extracted by $s$ and placed on the header of the message at the beginning of the procedure.
Hence, we can simply route the message using Thorup-Zwick routing scheme of \Cref{lem:tree-routing} on $T_c$.

\paragraph{Putting It Together.}
We now wrap-up the full routing procedure.
If $c\notin P(t)$, then $t$ is connected to $a(t)$, and we get the first case with $a^* = a(t)$, which can be stored in $t$'s label.
Thus, suppose $c\in P(t)$. 
Since $|P(t)|=O(\bp(G))$, the label of $t$ can store $O(\log n)$ bits for every color on $P(t)$, and specifically for the color $c$ of interest.
If $t$ is in an $A$-fragment in $T-c$, then $t$ can pick an arbitrary $A$-vertex in its fragment as $a^*$, and again we reduce to the first case. 
Suppose $t$ is in a $B$-fragment in $T-c$. 
In this case, $t$ sets $a^*$ to be an $A$-vertex from the nearest $A$-fragment to $t$ in $T_c$. 
The label of $t$ can store $a^*$ and the first recovery edge from $a^*$ towards $t$ (i.e., $e(a^*,t,c)$)
At the beginning of the procedure, $s$ can find the information regarding $a^*$ and $e(a^*,t,c)$ in $t$'s label, and write it on the message header.
Now, routing from $s$ to the fragment of $a^*$ is by done by the first case, traversing this fragment towards $e(a^*,t,c)$ is done using Thorup-Zwick tree-routing on $T$, and after taking this edge, we can route the message to $t$ according to the second case.

\subsection{Construction of Routing Tables and Labels}
We now formally define the tables and labels of our scheme, by \Cref{alg:tables,alg:vertex-routing-labels,alg:color-routing-labels}.

\begin{algorithm}[h]
\caption{Creating the table $R(v)$ of vertex $v$}\label{alg:tables}
\begin{algorithmic}[1]
\State \textbf{store} $R_T (v)$
\State \textbf{store} port number of the edge from $v$ to its parent in $T$
\State $c(v) \gets \text{color of edge from $v$ to its parent in $T$}$ \Comment{undefined if $v = r$}
\State \textbf{store} $c(v)$
\For{each vertex $a \in A$}
    \State \textbf{store} $\FirstRecEdge(v,a,c(v))$
\EndFor
\For{each color $c \in P(v)$}
    \State \textbf{store} $R_{T_c} (v)$
\EndFor
\end{algorithmic}
\end{algorithm}
\begin{algorithm}[h]
\caption{Creating the label $L(v)$ of vertex $v$}\label{alg:vertex-routing-labels}
\begin{algorithmic}[1]
\State \textbf{store} $L_T (v)$
\State \textbf{store} $a(v)$, the $A$-endpoint of $P(v)$
\For{each color $c \in P(v)$}
    \State $a(v,c) \gets \text{an $A$-vertex in the nearest $A$-fragment to $v$ in $T_c$}$
    \State \textbf{store} $\FirstRecEdge(a(v,c),v,c)$
    \State \textbf{store} $L_{T_c}(v)$ \label{line:block-end}
\EndFor
\end{algorithmic}
\end{algorithm}
\begin{algorithm}[h]
\caption{Creating the label $L(c)$ of color $c$}\label{alg:color-routing-labels}
\begin{algorithmic}[1]
\For{each vertex $a \in A$}
    \State \textbf{store} $\FirstRecEdge(r,a,c)$
\EndFor\end{algorithmic}
\end{algorithm}

\paragraph{Size Analysis.}
It is easily verified that each \textbf{store} instruction in \Cref{alg:tables,alg:vertex-routing-labels,alg:color-routing-labels} adds $O(\log n)$ bits of storage.
In all of these algorithms, the number of such instructions is $O(|P(v)| + |A|)$, which is $O(\bp(G))$ by \Cref{lem:A-set}.
Hence, the total size of any $R(v)$, $L(v)$ or $L(c)$ is $O(\bp(G) \log n)$ bits.

\subsection{The Routing Procedure}

In the beginning of the procedure, $s$ holds the labels $L(s), L(t)$ and $L(c)$, and should route the message $M$ to $t$ avoiding the color $c$.
As described in \Cref{sec:high_level_routing}, the routing procedure will have two phases.
In the first phase, the message is routed to the fragment of $T-c$ that contains a carefully chosen vertex $a^* \in A$.
In the second phase, it is routed from this fragment to the target $t$.

\paragraph{Initialization at $s$.}
First, $s$ determines the vertex $a^*$ as follows: If $c \in P(t)$, then $a^* = a(t,c)$, which is found in $L(t)$. Otherwise, $a^* = a(t)$, the $A$-endpoint of $P(t)$, which is again found in $L(t)$.
Next, $s$ creates the initial header of the message $M$, that contains:
\begin{itemize}
    \item The name of the color $c$.
    \item The name of the vertex $a^*$.
    \item The block $\FirstRecEdge(r,a^*,c)$, found in $L(c)$.
    \item The tree-routing label $L_T (t)$, found in $L(t)$.
    \item If $c \in P(t)$, the block $\FirstRecEdge(a^*,t,c)$ and the tree-routing label $L_{T_c}(t)$, found in $L(T)$.
\end{itemize}
This information will permanently stay in the header of $M$ throughout the routing procedure, and we refer to it as the \emph{permanent header}. Verifying that it requires $O(\log n)$ bits is immediate.

\paragraph{First Phase: Routing from $s$ to the Fragment of $a^*$.}
As in \Cref{sec:high_level_routing}, let $e_1, e_2, \dots e_\ell$ be the recovery edges on the $s$-to-$a^*$ path in $T_c$ (according to order of appearance), each connecting between fragments $X_{i-1}$ and $X_i$ of $T-c$. Denote by $v_i$ the root of fragment $X_i$, and thus $e_i = e(v_i, a^*, c)$.
Recall that we aim to route the message through these edges and reach the last fragment $X_\ell$, which is an $A$-fragment containing $a^*$, according to the following strategy:
Upon reaching a fragment $X_i \neq X_\ell$, we go up until we reach its root $v_i$, extract information regarding the next recovery edge $e_i$, and use Thorup-Zwick tree-routing on $T$ in order to get to $e_i$. We then traverse it to reach $X_{i+1}$, and repeat the process.

We now describe this routing procedure formally.
The header of $M$ contains two updating fields:
\begin{itemize}
    \item $M.\UP$: stores a Boolean value
    \item $M.\NEXT$: stores a block referring to the next recovery edge in the path, of the form $\FirstRecEdge(\cdot,a^*,c)$.
\end{itemize}
Clearly, this requires $O(\log n)$ bits of storage.
We will maintain the following invariant: 
\begin{itemize}
    \item[(I):] If the message is currently in the fragment $X_i \neq X_\ell$, and $M.\UP = 0$, then $M.\NEXT$ stores information referring to $e_i$.
\end{itemize}

At initialization, $s$ sets $M.\UP \gets True$ and $M.\NEXT \gets \perp$ (a null symbol), which trivially satisfies the invariant (I).
We will use the field $M.\UP$ also to mark that we are still in the first phase of the routing. During the first phase, it will be a valid Boolean value. When we reach the fragment $X_\ell$, we set $M.\UP$ to $\perp$ to notify the beginning of the second phase.
While $M.\UP\neq \perp$ (i.e., during the first phase), upon the arrival of $M$ to a vertex $v$, it executes the code presented in \Cref{alg:first-phase-routing} to determine the next hop.

\begin{algorithm}[h]
\caption{First phase: Routing $M$ from $v$ towards the fragment of $a^*$ (while $M.\UP\neq \perp$)}\label{alg:first-phase-routing}
\begin{algorithmic}[1]
\If{$M.\UP = True$}
    \If{$v \neq r$ and $c(v) \neq c$}
        \State send $M$ through the port to $v$'s parent in $T$, found in $R(v)$
    \Else  
        \State find $\FirstRecEdge(v,a^*,c)$, in permanent header when $v=r$, or in $R(v)$ when $c(v) = c$
        \State $M.\NEXT \gets \FirstRecEdge(v,a^*,c)$
        \State $M.\UP \gets False$ \label{line:up-gets-false}
    \EndIf
\EndIf
\If{$M.\UP = False$}
    \State $x \gets$ first vertex of the edge $e$ found in $M.\NEXT$
    \If{$v \neq x$}
        \State send $M$ in direction of $x$ in $T$, using $L_T (x)$ from $M.\NEXT$, and $R_T (v)$ from $R(v)$
    \Else
        \Comment{$v = x$}
        \If{second vertex of $e$ is in $X_\ell$ (as indicated by $M.\NEXT$)}
            \State $M.\UP \gets \perp$ \Comment{will reach $X_\ell$ in next step, and then done}
        \Else
            \State $M.\UP \gets True$
        \EndIf
        \State send $M$ through the port of the edge $e$ found in $M.\NEXT$
    \EndIf
\EndIf
\end{algorithmic}
\end{algorithm}

Invariant (I) is maintained when setting $M.\UP \gets False$ in \Cref{line:up-gets-false}, since we previously set $M.\NEXT$ to $e(v,a^*,c)$, which is the first recovery edge in the $v$-to-$a^*$ path in $T_c$, and thus, when $v \in X_i$, this edge equals $e_i$.
When the message first reaches a vertex $s' \in X_\ell$, the field $M.\UP$ contains a null value $\perp$, and we start the second phase of the routing procedure, as described next.

\paragraph{Second Phase: Routing from the Fragment of $a^*$ to $t$.}
Here, we use the careful choice of the vertex $a^*$.
The easier case is when $c \notin P(t)$. In this case, $a^* = a(v)$, and $t$ is in the same fragment as $a^*$ by \Cref{lem:fragments}.
Therefore, we can use the Thorup-Zwick routing of \Cref{lem:tree-routing} on $T$ to route the message $M$ from $s'$ to $t$.
Note that the label $L_T (t)$ is found in the permanent header, and that each intermediate vertex $v$ on the path has $R_T (v)$ in its table.

We now treat the case where $c \in P(t)$.
In this case, $a^* = a(t,c)$, which is defined to be an $A$-vertex in the nearest $A$-fragment to $t$ in $T_c$, and hence, $X_\ell$ is that nearest $A$-fragment to $t$ in $T_c$.
Since $c \in P(t)$, the permanent header stores $\FirstRecEdge(a^*,t,c)$, or specifies that the first recovery edge $e(a^*,t,c)$ is undefined, i.e.\ that $a^*$ and $t$ are in the same fragment.
In the latter case, we can act exactly as above and route the message over $T$. So, assume $\FirstRecEdge(a^*,t,c)$ is found.

Let $x$ and $y$ be the first and second vertices of $e(a^*,t,c)$.
Then $\FirstRecEdge(a^*,t,c)$ contains $L_T (x)$, so we can route the message between $s',x \in X_\ell$ using the Thorup-Zwick routing of \Cref{lem:tree-routing} on $T$.
We then traverse the edge $e(a^*,t,c)$ from $x$ to $y$, where the relevant port is stored in the permanent header.
We now make the following important observation:
\begin{lemma}
    If $v$ is a vertex on the $y$-to-$t$ path in $T_c$, then its table $R(v)$ must contain $R_{T_c}(v)$.
\end{lemma}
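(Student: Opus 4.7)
The plan is to reduce the statement to showing $c \in P(v)$, since Algorithm~\ref{alg:tables} stores $R_{T_c}(v)$ in $R(v)$ precisely for each color appearing on $P(v)$. For this reduction, I would invoke Lemma~\ref{lem:fragments}: it suffices to prove that $v$ lies in a $B$-fragment of $T-c$, i.e., a fragment containing no $A$-vertex.

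The argument for this exploits the defining property of $a^* = a(t,c)$ as an $A$-vertex in the \emph{nearest} $A$-fragment $X_\ell$ to $t$ in $T_c$. Since $(x, y) = e(a^*, t, c)$ is by definition the \emph{first} recovery edge on the $a^*$-to-$t$ path in $T_c$, its endpoint $x$ lies in $X_\ell$ while $y$ lies in a distinct fragment, and $x$ serves as the unique ``entry point'' of $X_\ell$ from $t$'s side in the tree $T_c$; consequently $\dist_{T_c}(t, X_\ell) = \dist_{T_c}(t, x)$. Any vertex $v$ on the $y$-to-$t$ sub-path then satisfies
\[
\dist_{T_c}(t, v) \le \dist_{T_c}(t, y) = \dist_{T_c}(t, x) - 1 < \dist_{T_c}(t, X_\ell).
\]
Suppose for contradiction that $v$'s fragment $X_v$ is itself an $A$-fragment; then $\dist_{T_c}(t, X_v) \le \dist_{T_c}(t, v) < \dist_{T_c}(t, X_\ell)$, strictly beating the distance to $X_\ell$ and contradicting the nearest-fragment choice of $a^*$. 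Hence $X_v$ is a $B$-fragment, and Lemma~\ref{lem:fragments} delivers $c \in P(v)$ as required.

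I do not anticipate a serious obstacle beyond fixing the intended meaning of ``nearest $A$-fragment''; the natural interpretation (minimum $T_c$-distance from $t$ over the fragment's vertices) is what makes the chain of inequalities above go through, and I expect the minor subtlety of identifying $x$ with the $T_c$-nearest vertex of $X_\ell$ to $t$---an easy consequence of $X_\ell$ being a subtree of $T_c$ and $(x,y)$ being the unique recovery edge separating $X_\ell$ from the $t$-side---to be the only point needing explicit justification.
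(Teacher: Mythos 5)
Your proof is correct and takes essentially the same approach as the paper: reduce to showing $v$ lies in a $B$-fragment, then apply Lemma~\ref{lem:fragments} to get $c\in P(v)$, and the $B$-fragment claim follows because an $A$-fragment containing $v$ would be strictly closer to $t$ in $T_c$ than $X_\ell$, contradicting the choice of $a^*=a(t,c)$. You simply spell out the distance chain ($\dist_{T_c}(t,v)\le\dist_{T_c}(t,y)<\dist_{T_c}(t,x)=\dist_{T_c}(t,X_\ell)$) and the entry-point observation that the paper leaves implicit in the phrase ``this would be a closer $A$-fragment to $t$ than $X_\ell$.''
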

\begin{proof}
    First, we note that $v$ must be a $B$-fragment.
    Indeed, if $v$ were in an $A$-fragment, then this would be a closer $A$-fragment to $t$ than $X_\ell$ in $T_c$, which is a contradiction.
    Therefore, by \Cref{lem:fragments}, it must be that $c \in P(v)$.
    This means that $R_{T_c}(v)$ is stored in $R(v)$ by \Cref{alg:tables}.
\end{proof}
Note that $L_{T_c}(t)$ is found in the permanent header, as $c \in P(t)$.
Thus, we can route $M$ from $y$ to $t$ along the connecting path in the recovery tree $T_c$, using the Thorup-Zwick routing of \Cref{lem:tree-routing} on $T_c$.
This concludes the routing procedure.
\section{Single Color Fault: Proof of Theorem \ref{THM:SFUB}}\label{sect:1-fault-upper-bound}
\SINGLEFAULTUPPER
\section{Nearest Colored Ancestor Labels}\label{sect:nearest-colored-ancestor-labels}

In this section, we show how the $O(n)$-space, $O(\log \log n)$-query time nearest colored ancestor data structure of \cite{GawrychowskiLMW18} can be used to obtain $O(\sqrt{n} \log n)$-bit labels for this problem.

The labels version of nearest colored ancestor is formally defined as follows.
Given a rooted $n$-vertex forest $T$ with colored vertices, where each vertex $v$ has an arbitrary unique $O(\log n)$-bit identifier $\id(v)$, the goal is to assign short labels to each vertex and color in $T$, so that the $\id$ of the nearest $c$-colored ancestor of vertex $v$ can be reported by inspecting the labels of $c$ and $v$.
The reduction from the centralized setting implies that such a labeling scheme can be used `as is' for connectivity under one color fault.\footnote{
    When constructing the nearest colored ancestor labels in the reduction, we augment the $\id$ of each vertex $v$ with $\cid(v, G-c)$, where $c$ is the color of the tree edge from $v$ to its parent.
}
First, by \Cref{thm:single-fault-lower-bound}, we get:
\begin{corollary}
    Every labeling scheme for nearest colored ancestor in $n$-vertex forests must have label length $\Omega(\sqrt{n})$ bits.
    Furthermore, this holds even for paths, as their ball-packing number is $\Omega(\sqrt{n})$.
\end{corollary}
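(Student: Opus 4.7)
The plan is to derive both claims as direct consequences of the centralized reduction exhibited at the start of this section, combined with the single-fault lower bound of \Cref{thm:single-fault-lower-bound}. Specifically, I will show that any labeling scheme for nearest colored ancestor with labels of $b(n)$ bits yields, as a black box, a labeling scheme for connectivity under one color fault with labels of $b(n)+O(\log n)$ bits; the desired $\Omega(\sqrt{n})$ bound then follows by instantiating the lower bound on a graph $G$ with $\bp(G)=\Omega(\sqrt n)$.

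In more detail, given a colored input $G$, I fix a maximal spanning forest $T$ rooted at minimum-$\id$ vertices of each component, color each non-root vertex $u$ of $T$ by the color $d(u)$ of the $T$-edge from $u$ to its parent (roots receive a null color $\perp$), and augment every $\id(u)$ with $\cid(u,G-d(u))$. Applying the assumed nearest colored ancestor scheme to this rooted, colored, re-identified forest yields labels $L'(\cdot)$ of $b(n)$ bits each. The connectivity label $L(v)$ stores $L'(v)$ together with $\cid(v,G)$, and the connectivity label $L(c)$ simply stores $L'(c)$. Given a query $\langle u,v,c\rangle$, one uses the $L'$-labels to find (the augmented identifier of) the nearest $c$-colored ancestor $w_u$ of $u$ in $T$ and similarly $w_v$ for $v$; since the $T$-path from $u$ to $w_u$ has no $c$-colored edges, $\cid(u,G-c)=\cid(w_u,G-c)$, which is encoded in $w_u$'s augmented $\id$. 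If no $c$-colored ancestor exists, then the path in $T$ from $u$ to its root avoids $c$, hence $\cid(u,G-c)=\cid(u,G)$, which is stored directly in $L(u)$. Comparing the recovered $\cid$ values of $u$ and $v$ answers the connectivity query correctly.

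By \Cref{thm:single-fault-lower-bound}, the resulting connectivity labels must have length $\Omega(\bp(G))$, so $b(n)+O(\log n)=\Omega(\bp(G))$, giving the first claim by choosing $G$ with $\bp(G)=\Omega(\sqrt n)$ (which exists by \Cref{obs:bp-at-most-sqrt-n}(ii)). For the \emph{furthermore} part, I take $G$ to be an $n$-vertex path, which has $\bp(G)=\Omega(\sqrt n)$; rooting this path at one of its endpoints makes $T$ itself a rooted path, so the reduction produces a nearest colored ancestor instance on a path. Hence the $\Omega(\sqrt n)$ lower bound on the nearest colored ancestor scheme holds even when the input forest is restricted to a single rooted path.

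The only minor subtlety is the bookkeeping of the augmented identifiers so that $\cid(\cdot,G-c)$ is recoverable from a single nearest-colored-ancestor query; this is handled by the augmentation of $\id(u)$ with $\cid(u,G-d(u))$ described above, plus the extra $O(\log n)$ bits storing $\cid(v,G)$ in each vertex label for the case that no $c$-colored ancestor exists. No new techniques are needed beyond what the reduction already provides.
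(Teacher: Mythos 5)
Your proof is correct and matches the paper's argument: the paper likewise derives the corollary from \Cref{thm:single-fault-lower-bound} via the reduction (sketched earlier in the section and its footnote) that turns a nearest-colored-ancestor labeling scheme on a spanning forest $T$ — with vertices colored by parent-edge color and $\id$s augmented with $\cid(\cdot, G-d(\cdot))$ — into a one-color-fault connectivity scheme for $G$, then instantiates on a path where $\bp(G)=\Omega(\sqrt{n})$. You have merely spelled out the details that the paper leaves implicit, including the handling of the no-$c$-colored-ancestor case and the choice to root the path at an endpoint.
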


\begin{remark}
The above lower bound can be strengthened to $\Omega(\sqrt{n} \log n)$.
This is by considering the problem that our scheme \emph{actually} solves: report the minimum $\id$ of a vertex connected to $v$ in $G-c$, from the labels of $v$ and $c$.
For this problem, one can extend the proof of \Cref{thm:single-fault-lower-bound} to show an $\Omega(\sqrt{n} \log n)$-bit lower bound for paths.
The reduction described above shows that a nearest colored ancestor labeling scheme can be used to report such minimum $\id$s.
\end{remark}

The data structure in \cite{GawrychowskiLMW18} can, in fact, be transformed into $O(\sqrt{n} \log n)$-bit labels.
We first briefly explain how this data structure works.
Each vertex $v$ gets two \emph{time-stamps} $\pre(v),\post(v)$, which are the first and last times a DFS traversal in $T$ reaches $v$.
The time-stamps of $c$-colored vertices are inserted to a \emph{predecessor structure} \cite{BoasKZ77, Willard83} for color $c$.
For each time-stamp of a ($c$-colored) vertex $u$, we also store the $\id$ of the nearest $c$-colored (strict) ancestor of $u$.
A query $(v,c)$ is answered by finding the predecessor of $\pre(v)$ in the structure of $c$.
If the result is $\pre(u)$, then $u$ is returned.
If it is $\post(u)$, then the ancestor pointed by $u$ is returned.
Correctness follows by standard properties of DFS time-stamps.
The predecessor structure for $c$ answers queries in $O(\log \log n)$ time, and takes up $O(|V_c|)$ space (in words), where $V_c$ is the set of $c$-colored vertices. The total space is $O(\sum_{c} |V_c|) = O(n)$.

To construct the labels, let $\mathcal{H} = \{c \mid |V_c| \geq \sqrt{n}\}$ be the \emph{highly prevalent} colors, and $\mathcal{R}$ be the rest of the colors.
As there are only $n$ vertices, $|\mathcal{H}| = O(\sqrt{n})$.
We can therefore afford to let each vertex $v$ explicitly store in its label the $\id$ of its nearest $c$-color ancestor, for each $c \in \mathcal{H}$.
To handle the remaining $\mathcal{R}$-colors, we store in the label of each $c \in \mathcal{R}$ the predecessor structure for $c$, which only requires $O(|V_c| \log n) = O(\sqrt{n} \log n)$ bits.
By augmenting the vertex labels with $\pre(\cdot)$ times (requiring only $O(\log n)$ additional bits), we can also answer queries with colors in $\mathcal{R}$.
We obtain:
\begin{corollary}
    There is a labeling scheme for nearest colored ancestor in $n$-vertex forests, with labels of length $O(\sqrt{n} \log n)$ bits.
    Queries are answered in $O(\log \log n)$ time.
\end{corollary}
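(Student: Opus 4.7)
The plan is to take the centralized $O(n)$-space nearest-colored-ancestor data structure of Gawrychowski et al.\ described in the paragraph preceding the corollary and distribute its components across per-vertex and per-color labels using a heavy/light threshold on color multiplicities. Concretely, I would fix the threshold $\sqrt n$ and set $\mathcal{H}=\{c \mid |V_c|\geq \sqrt n\}$ (``heavy'' colors) and $\mathcal{R}$ as the remaining (``light'') colors. Since $\sum_c |V_c| \leq n$, we have $|\mathcal{H}| = O(\sqrt n)$, while every $c \in \mathcal{R}$ satisfies $|V_c| < \sqrt n$.

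For the heavy colors, the label of each vertex $v$ would simply include a table listing, for every $c \in \mathcal{H}$, the $\id$ of the nearest $c$-colored ancestor of $v$ in $T$. This adds $O(\sqrt n \log n)$ bits per vertex label, and allows the heavy case to be answered in $O(1)$ time. For the light colors, the label $L(c)$ would hold the entire per-color van Emde Boas predecessor structure from the centralized construction, storing both $\pre(\cdot)$ and $\post(\cdot)$ of each $c$-colored vertex, each annotated with the $\id$ of the nearest strict $c$-colored ancestor. Since $|V_c| < \sqrt n$ and each stored entry uses $O(\log n)$ bits, this fits in $O(\sqrt n \log n)$ bits. Each vertex label additionally stores $\pre(v)$ in $O(\log n)$ extra bits, which is all the auxiliary data required to query a light color.

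Answering a query $(v,c)$ then mirrors the centralized algorithm: if $c \in \mathcal{H}$, read the answer directly from $v$'s label; otherwise, use $\pre(v)$ (from $v$'s label) to query the predecessor structure stored in $L(c)$, and reinterpret the returned $\pre(u)$ or $\post(u)$ exactly as in Gawrychowski et al., in $O(\log\log n)$ time. The only real subtlety I anticipate is making the per-color predecessor structure \emph{self-contained} --- queryable from $L(c)$ alone within the claimed bit budget --- which I would handle by instantiating a dedicated van Emde Boas tree over just the $O(|V_c|)$ time-stamps appearing in $V_c$, rather than a single global structure over the universe $[1,2n]$. Once this encoding is in place, the length bound $O(\sqrt n \log n)$ and the $O(\log\log n)$ query time follow immediately from the heavy/light case analysis.
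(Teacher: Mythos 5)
Your proposal is correct and follows the paper's proof essentially verbatim: heavy/light split at threshold $\sqrt{n}$, explicit per-vertex tables for the $O(\sqrt{n})$ heavy colors, and storing the per-color predecessor structure in $L(c)$ for light colors, with $\pre(v)$ added to vertex labels. One small caveat on your final ``subtlety'': a plain van Emde Boas tree over a restricted universe of $O(|V_c|)$ timestamps does not directly admit queries with an arbitrary $\pre(v)$; what you actually need (and what the paper's citation to Willard~\cite{Willard83} supplies) is a linear-space predecessor structure such as a y-fast trie over the full timestamp universe storing only the $O(|V_c|)$ keys, which occupies $O(|V_c|\log n)$ bits and answers in $O(\log\log n)$ time --- this is exactly the bound the paper uses, so the argument goes through unchanged.
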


\section{Limitations of the Lower Bound Technique of Theorem \ref{thm:f-faults-lower-bound}}\label{par:f_faults_LB_barrier}
The `technique' in the proof of \Cref{thm:f-faults-lower-bound} is devising a protocol for $\INDEX(N)$ that uses, \emph{as a black-box}, a labeling scheme for $n$-vertex graphs, by specifying:
\begin{enumerate}
    \item A (`worst-case') graph topology $G$, known in advance to Alice and Bob.
    \item A coloring procedure by which Alice colors her copy of $G$ according to her input.
    \item Which labels are sent from Alice to Bob.
    \item A recovery procedure of Bob, using the received labels for connectivity queries.
\end{enumerate}

Let $S$ be the maximum possible number of labels that Alice sends in an execution of the communication protocol.
We assume that $N=\Theta(n)$, and justify this assumption later.
Then,
\Cref{lem:index_LB} yields a lower bound of $\Omega(n/S)$ on the label length, and
we now argue why this lower bound cannot be stronger than $\tilde{\Omega}(n^{1-1/(f+1)})$.

Let $C$ be the maximum possible number of \emph{color labels} that Alice sends.
Bob can only simulate the failure of color for which he has a label.
So, any color whose label is not sent can be considered non-faulty, and replaced with $\perp$ in the coloring procedure, without affecting the correctness of the protocol.

Consider the simple labeling scheme where each vertex label $L(v)$ stores $\cid(v,G-F)$ for every set $F$ of at most $f$ colors, and a color label simply stores the color's name.
As there are $C$ colors, the maximum length of a label given by this scheme
is $O(C^f \log n)$ bits.
Applying the lower bound, we get
$\Omega(n/S) = C^f \log n \leq S^f\log n$,
hence $S=\tilde{\Omega}(n^{1/(f+1)})$, and thus $n/S = \tilde{O}(n^{1-1/(f+1)})$, which concludes this argument.

One could hope to encode more than $N=\Theta(n)$ bits using the topology $G$, as the number of bits to encode a specific coloring of $G$ is $\Theta(m \log C)$.
However, the sparsification of \Cref{lem:conn-cert}
show that the colored $G$ has a subgraph $H$ (which depends on the coloring) with $\tilde{O}(n)$ edges, where all connectivity queries give the same answers in $H$ as in $G$.
As Bob's recovery procedure only uses such queries, 
the information he recovers can be encoded by this colored subgraph $H$, and hence contains at most $\tilde{\Theta}(n)$ bits.

\section{Hitting Set}\label{sect:hitting-set}

The following lemma is well known. The proof, adapted from \cite{AingworthCM96}, is based on the equivalence of the hitting set and set cover problems.
\begin{lemma}[Deterministic Hitting Set]\label{lem:det_hitting_set}  
    Let $V$ be a set of size $n$, and let $S_1, \dots, S_k\subseteq V$, each of size at least $\Delta$.
    Then there is a subset $U \subseteq V$ with $|U| = O((n/\Delta) \log k)$ such that $S_i \cap U \neq \emptyset$ for all $i \in \{1,\dots, k\}$.
    The set $U$ can be found by an efficient deterministic algorithm.
\end{lemma}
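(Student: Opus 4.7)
The plan is a standard greedy argument, exploiting the duality between hitting set and set cover. For each element $v \in V$, let $T_v = \{i : v \in S_i\}$; then hitting every $S_i$ is equivalent to covering $\{1,\dots,k\}$ by a subcollection of $\{T_v\}_{v\in V}$. The algorithm constructs $U$ iteratively: at each step, add to $U$ the element $v$ that belongs to the largest number of currently-unhit sets, and stop once all sets are hit.

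The key quantitative step is a double-counting / averaging argument. Suppose after some iteration there are still $k'$ unhit sets $\{S_i : i \in \mathcal{I}\}$. Since each of them has size at least $\Delta$,
\[
\sum_{v \in V} |\{i \in \mathcal{I} : v \in S_i\}| = \sum_{i \in \mathcal{I}} |S_i| \geq k' \Delta,
\]
so by pigeonhole there exists an element $v \in V$ that belongs to at least $k'\Delta/n$ sets in $\mathcal{I}$. Adding this $v$ to $U$ reduces the number of unhit sets by a factor of at least $(1 - \Delta/n)$.

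Starting with $k$ unhit sets, after $t$ greedy iterations at most $k(1-\Delta/n)^t \leq k e^{-t\Delta/n}$ sets remain unhit. Choosing $t = \lceil (n/\Delta)\ln k \rceil + 1$ drives this below $1$, forcing all sets to be hit. Hence $|U| = O((n/\Delta)\log k)$, as required. The greedy step can be implemented in polynomial time (given explicit representations of the $S_i$'s) by maintaining, for each $v$, the count $|T_v \cap \mathcal{I}|$, which can be updated whenever a set becomes hit; this yields an efficient deterministic algorithm.

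Essentially no step here is a genuine obstacle; the only care needed is to track the multiplicative decay correctly and to note that the averaging lower bound $k'\Delta/n$ holds at \emph{every} iteration (because the sizes $|S_i|$ themselves never shrink, only the index set $\mathcal{I}$ does), so the exponential decay persists throughout.
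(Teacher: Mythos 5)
Your proof is correct and uses the same greedy algorithm as the paper. The only difference lies in how the $O((n/\Delta)\log k)$ bound is obtained: the paper invokes the known Johnson--Lov\'asz result that greedy set cover is an $O(\log k)$-approximation to the optimal \emph{fractional} cover, and then exhibits the fractional solution $w(A_v)=1/\Delta$ of total weight $n/\Delta$; you instead re-derive the same guarantee from scratch by the standard averaging argument (at every iteration the double count $\sum_{i\in\mathcal I}|S_i|\geq k'\Delta$ yields an element hitting a $\Delta/n$ fraction of the remaining sets, giving geometric decay). Both analyses are correct and in fact encode the same mechanism; yours is fully self-contained, while the paper's is shorter by citation. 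Your remark that the $|S_i|$ never shrink (only $\mathcal I$ does) is exactly the right point to make to justify that the averaging bound persists at every iteration.
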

\begin{proof}    
For every $v \in V$, let $A_v = \{i \mid v \in S_i\}$,
and consider the set cover instance $\mathcal{A} = \{A_v \mid v \in V\}$.
Take $U$ such that $\{A_u \mid u \in U\}$ is the cover produced by the greedy algorithm for $\mathcal{A}$.
To spell this out:
Initialize $U = \emptyset$.
While there is $i$ with $S_i \cap U = \emptyset$ (i.e., $i \notin \bigcup_{v \in U} A_v$),
choose $u \in V$  that maximizes the quantity
$
| A_u - \bigcup_{v \in U} A_v | = |\{i \mid u \in S_i, S_i \cap U = \emptyset\}|,
$
and update $U \gets U \cup \{u\}$.
This can be implemented in linear time (in the input size) using a bucket queue.

It is known that the greedy solution for set cover approximates the optimal \emph{fractional} solution within a $O(\log k)$ factor \cite{Johnson74,Lovasz75}.
Give weight $w(A_v) = 1/\Delta$ to each set $A_v$.
This is a feasible fractional solution, as for every $i \in \{1,\dots,k\}$,
$
\sum_{v : i \in A_v} w(A_v) = |S_i| / \Delta \geq 1.
$
Its value is  $\sum_{v \in V} w(A_v) = n/\Delta$.
Thus, $|U| = O((n/\Delta) \cdot \log k)$.
\end{proof}

\end{document}